\newcommand{\R}{\mathbb{R}}
\colorlet{DarkRed}{red!50!black}
\colorlet{DarkGreen}{green!50!black}
\colorlet{DarkBlue}{blue!70!black}
\theoremstyle{plain}
\newtheorem{theorem}{Theorem}[section]
\newtheorem{fact}[theorem]{Fact}
\newtheorem{remark}[theorem]{Remark}
\newtheorem{lemma}[theorem]{Lemma}
\newtheorem{definition}[theorem]{Definition}
\newtheorem{claim}[theorem]{Claim}
\begin{document}
\title{Secretary and Online Matching Problems \\ with Machine Learned Advice}

\author{\begin{tabular}{cc}
	\begin{tabular}{c}
 Antonios Antoniadis\footnote{Work done in part while the
 author was at Saarland University and Max-Planck-Institute for
 Informatics and supported by DFG grant AN 1262/1-1.}  \\ 
 Universit\"at zu K\"oln\  \\
 \texttt{antoniadis@cs.uni-koeln.de} \\
 \quad \\
 \quad
	\end{tabular} & 
	\begin{tabular}{c}
	Themis Gouleakis\\ 
	Max Planck Institute for Informatics\\
	Saarland Informatics Campus\\
 \texttt{tgouleak@mpi-inf.mpg.de}\\
 \quad
	\end{tabular}\\
	\begin{tabular}{c}
	Pieter Kleer\\ 
	Max Planck Institute for Informatics\\  
	Saarland Informatics Campus\\
	\texttt{pkleer@mpi-inf.mpg.de}
	\end{tabular} & 
	\begin{tabular}{c}
		Pavel Kolev\\ 
	Max Planck Institute for Informatics\\  
	Saarland Informatics Campus\\
	\texttt{pkolev@mpi-inf.mpg.de}
	\end{tabular}
\end{tabular}}

\maketitle           
\begin{abstract}
The classical analysis of online algorithms, due to its  worst-case nature,
  can be quite pessimistic when the input instance at hand is
  far from worst-case. Often this is not an issue with machine
  learning approaches, which shine in exploiting patterns in past inputs
  in order to predict the future.  However, such predictions, although
  usually accurate, can be arbitrarily poor.
   Inspired by a recent line of work, we augment three well-known online settings
  with machine learned predictions 
  about the future, and develop algorithms that take them
  into account. In particular, we study the following online selection
  problems: (i) the classical secretary problem, (ii) online bipartite
  matching and (iii) the graphic matroid secretary problem. 
  Our algorithms still come with a worst-case performance
  guarantee in the case that predictions are subpar while obtaining an improved competitive ratio
  (over the best-known classical online algorithm for each problem)
  when the predictions are sufficiently accurate. For each algorithm, we
  establish a trade-off between the competitive ratios obtained in the two respective cases.    
\end{abstract}

\section{Introduction} 
There has been enormous progress in the field of machine learning in
the last decade, which has affected a variety of other areas as well. One of these areas is the design of online algorithms. Traditionally, the analysis of such algorithms involves worst-case guarantees, which can often be quite pessimistic. It is conceivable though, that having prior knowledge regarding the online input (obtained using machine learning)  could potentially improve those guarantees significantly.  

In this work,
we consider various online selection algorithms augmented with so-called \emph{machine learned advice}. In particular, we consider secretary and online bipartite  matching problems. The high-level idea is to incorporate
some form of \emph{predictions} in an existing online algorithm in
order to get the best of two worlds: 
(i) provably improve the algorithm's performance guarantee in the case that predictions are sufficiently good,
while (ii) losing only a constant factor of the algorithm's existing 
worst-case performance guarantee, when the predictions are subpar. 
Improving the performance of classical online algorithms with the help of machine
learned predictions is a relatively new area that has gained a lot of attention 
in the last couple of years \cite{KraskaBCDP18, KhalilDNAS17,Rohatgi20,
  LattanziLMV20,LykourisV18,PurohitSK18,GollapudiP19,Mitzenmacher20,
  MedinaV17}.

We motivate the idea of incorporating such machine-learned advice,
in the class of problems studied in this work,
by illustrating a simple real-world problem.
Consider the following setting for selling a laptop on an online
platform.\footnote{This example is similar to the classical secretary
 problem \cite{G1960}.} Potential buyers arrive one by one, say, in a
uniformly random order, and report a price that they are willing to pay
for the laptop. Whenever a buyer arrives, we have to irrevocably
decide if we want to sell at the given price, or wait for a better
offer. Based on historical data, e.g., regarding previous online sales
of laptops with similar specs, the online platform might suggest a (machine learned)
prediction for the maximum price that some buyer is likely to offer
for the laptop.

How can we exploit this information in our decision process? 
One problem that arises here is that we do not have any formal guarantees for how accurate
the machine-learned advice is for any particular instance.
For example, suppose we get a prediction of $900$ dollars 
as the maximum price that some buyer will
likely offer. One extreme policy is to blindly trust this prediction 
and wait for the first buyer to come along that offers a price sufficiently 
close to $900$ dollars. 
If this prediction is indeed accurate, this policy has
an almost perfect performance guarantee, in the sense that we will
sell to the (almost) highest bidder. However, if the best offer is
only, say, $500$ dollars, we will never sell to this buyer 
(unless this offer arrives last), since the advice is to wait for a better offer to come along.
In particular, the performance guarantee of this selling policy 
	depends on the \emph{prediction error} ($400$ dollars in this case) 
	which can become arbitrarily large. 
The other extreme policy is to completely ignore the prediction of
$900$ dollars and just run the classical secretary algorithm: Observe
a $1/e$-fraction of the buyers, and then sell to the first buyer that
arrives afterwards, who offers a price higher than the best offer seen
in the initially observed fraction. This yields, in expectation, a
selling price of at least $1/e$ times the highest offer
\cite{L1961,D1962}.

Can we somehow combine the preceding two extreme selling-policies, 
so that we get a performance guarantee strictly better than that of $1/e$ 
in the case where the prediction for the highest offer is not too far off, 
while not loosing too much over the guarantee of $1/e$ otherwise? 
Note that (even partially) trusting poor predictions often comes at a price, 
and thus obtaining a competitive ratio worse than $1/e$ seems inevitable 
in this case. We show that there is in fact a trade-off between the competitive ratio that we can achieve when the prediction is accurate and the one we obtain when the prediction error turns out to be large.

\subsection{Our models and contributions}
\label{sec:contributions}
We show how one can incorporate predictions in various online
selection algorithms for problems that generalize the classical
secretary problem. The overall goal is to include \emph{as little
  predictive information as possible} into the algorithm, while still
  obtaining improvements in the case that the information is
  accurate. Our results are parameterized by (among other
parameters) the so-called prediction error $\eta$ that measures the
quality of the given predictions. 

We briefly sketch each of the problems
studied in this work, and then conclude with the description of the
meta-result in Theorem~\ref{thm:meta}, that applies to all of them.

\paragraph{Secretary problem.}
In order to illustrate our ideas and techniques, we start by augmenting the classical secretary problem\footnote{
	To be precise, we consider the so-called \emph{value maximization} 
	version of the problem, see Section \ref{sec:secretary} for details.}
with predictions. For details, see Section~\ref{sec:secretary}.
Here, we are given a prediction $p^*$ for 
the maximum value among all arriving secretaries.\footnote{
	This corresponds to a prediction for the maximum price somebody is 
	willing to offer in the laptop example.}  
The prediction error is then defined as $\eta = |p^* - v^*|$,
where $v^*$ is the true maximum value among all secretaries. We emphasize that the algorithm is not aware of the prediction error $\eta$,
	and this parameter is only used to analyze the algorithm's performance guarantee.

\paragraph{Online bipartite matching with vertex arrivals.}
In Section \ref{sec:bipartite}, we study the online bipartite matching problem 
in which the set of nodes $L$ of a bipartite graph $G = (L \cup R, E)$, 
with $|L| = n$ and $|R| = m$, 
arrives \emph{online} in a uniformly random order \cite{KP2009,KRTV2013}. 
Upon arrival, a node reveals the edge weights to its neighbors in $R$. 
We have to irrevocably decide if we want to match up the arrived online node 
with one of its (currently unmatched) neighbors in $R$. 
Kesselheim et al. \cite{KRTV2013} gave a tight $1/e$-competitive deterministic algorithm for this setting that significantly generalizes the same guarantee for the classical secretary algorithm \cite{L1961,D1962}. 

The prediction that we consider in this setting is 
a vector of values $p^* = (p_1^*,\dots,p_{m}^*)$
that predicts the edge weights adjacent to the nodes $r \in R$ 
in some fixed optimal (offline) bipartite matching. 
That is, the prediction $p^*$ indicates the existence of a 
fixed optimal bipartite matching in which 
each node $r \in R$ is adjacent to an edge with weight $p_r^*$. 
The prediction error is then the maximum prediction error taken over all nodes in $r \in R$ and minimized over all optimal matchings. This generalizes the prediction used for the classical secretary problem. This type of predictions  closely corresponds to the \emph{vertex-weighted online bipartite matching problem} \cite{AGKA2011}, which will be discussed in Section \ref{sec:bipartite}.

\paragraph{Graphic matroid secretary problem.}
In Section \ref{sec:graphic}, 
we augment the graphic matroid secretary problem with predictions. In this problem, the edges of a given undirected graph $G = (V,E)$, 
with $|V| = n$ and $|E| = m$, arrive in a uniformly random order. 
The goal is to select a subset of edges of maximum weight under the constraint that this subset is a forest. That is, it is not allowed to select a subset of edges that form a cycle in $G$. The best known algorithm for this online problem is a (randomized) $1/4$-competitive algorithm by Soto, Turkieltaub and Verdugo \cite{STV2018}. Their algorithm proceeds by first selecting no elements from a prefix of the sequence of elements with randomly chosen size, followed by selecting an element if and only if it belongs to a (canonically computed) offline optimal solution, and can be added to the set of elements currently selected online. This is inspired by the algorithm of Kesselheim et al. \cite{KRTV2013} for online bipartite matching.

As a result of possible independent interest, we show that there exists a \emph{deterministic} $(1/4 - o(1))$-competitive algorithm for the graphic matroid secretary problem, which can roughly be seen as a deterministic version of the algorithm of Soto et al. \cite{STV2018}. Alternatively, our algorithm can be seen  as a variation on the (deterministic) algorithm of Kesselheim et al. \cite{KRTV2013} for the case of online bipartite matching (in combination with an idea introduced in \cite{BIKK2018}).

The prediction that we consider here is a vector of values $p =
(p_1^*,\dots,p_n^*)$ where $p_i^*$ predicts the maximum
edge weight that node $i \in V$ is adjacent to, in the graph $G$. This is
equivalent to saying that $p_i^*$ is the maximum edge weight adjacent
to node $i \in V$ in a given optimal spanning tree (we assume that $G$ is
connected for sake of simplicity), which is, in a sense, in line with the predictions used in Section \ref{sec:bipartite}.  (We note that the
	predictions model the optimal spanning tree in the case when all
	edge-weights are pairwise distinct. Otherwise, there can be many
	(offline) optimal spanning trees, and thus the predictions do not
	encode a unique optimal spanning tree. We intentionally chose not to
	use predictions regarding which edges are part of an optimal
	solution, as in our opinion, such an assumption would be too strong.)
The prediction error is also defined similarly.

\subsubsection*{Meta Result}
We note that for all the problems above, one cannot hope for 
an algorithm with a performance guarantee better than $1/e$ 
in the corresponding settings without predictions, 
as this bound is known to be optimal for 
the classical secretary problem (and also applies to the other problems) \cite{L1961,D1962}. Hence, our goal is to design algorithms that improve upon
	the $1/e$ worst-case competitive ratio in the case where the prediction error 
	is sufficiently small, and otherwise (when the prediction error is large)
	never lose more than a constant (multiplicative) 
	factor over the worst-case competitive ratio.

For each of the preceding three problems, we augment existing algorithms 
  with predictions. All of our resulting algorithms are \emph{deterministic}.

In particular, we show that the canonical approaches 
for the secretary problem \cite{L1961,D1962} and 
the online bipartite matching problem \cite{KRTV2013} 
can be naturally augmented with predictions.
We also demonstrate how to adapt our novel deterministic algorithm 
for the graphic matroid secretary problem. 
Further, we comment on randomized approaches in each respective section.

\begin{theorem}[Meta Result]\label{thm:meta}
There is a polynomial time deterministic algorithm 
that incorporates the predictions $p^*$
such that for some constants $0 < \alpha, \beta < 1$ it is
\begin{enumerate}[i)]
\item $\alpha$-competitive with $\alpha > \frac{1}{e}$,
when the prediction error is sufficiently small; and
\item $\beta$-competitive with $\beta < \frac{1}{e}$, independently of the prediction error.
\end{enumerate}
\end{theorem}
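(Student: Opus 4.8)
The plan is to prove the meta-result by instantiating it with the simplest of the three settings—the classical (value-maximization) secretary problem—since that already exhibits the full structure of the trade-off and the other two cases follow by the same recipe applied to the algorithms of Kesselheim et al.\ \cite{KRTV2013} and our deterministic variant of Soto et al.\ \cite{STV2018}. Recall we are handed a prediction $p^*$ for the maximum value $v^*$, with unknown error $\eta = |p^* - v^*|$. The natural algorithm interpolates between the two extreme policies described in the introduction: fix a confidence parameter $c > 0$ and a threshold fraction. With some (constant) probability dictated by a parameter we may call $\lambda$, run the ``trust'' policy—skip a very short prefix and then accept the first secretary whose value is at least $p^* - c$ (or, to keep determinism, run this on a designated prefix of positions)—and otherwise run the classical $1/e$-policy. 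Since the theorem only demands a \emph{deterministic} algorithm, I would instead partition the arrival sequence deterministically: devote an initial block to pure observation, then in a middle block accept any secretary exceeding a prediction-based threshold, and in the final block fall back to the best-value-so-far rule of the classical algorithm. The block lengths are the tunable constants producing $\alpha$ and $\beta$.

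The key steps, in order, are as follows. \textbf{Step 1:} Formally define the algorithm with its block structure and the threshold $p^* - c\cdot p^*$ (a relative slack is cleaner than an additive one), so that ``prediction error sufficiently small'' can be phrased as $\eta \le c\cdot v^*$, i.e.\ the predicted value lands within the acceptance window around the true optimum. \textbf{Step 2:} Analyze the good-prediction case. Condition on the event that the maximum-value secretary arrives in the middle (threshold) block—this happens with constant probability over the uniformly random order—and on the event that no earlier secretary in that block already crossed the threshold; when $\eta$ is small the threshold is just below $v^*$, so essentially only the top secretary (or near-top ones) can trigger acceptance, and a short calculation over arrival positions gives an expected value of $\alpha \cdot v^*$ with $\alpha$ a constant strictly exceeding $1/e$ once the block boundaries are chosen appropriately. \textbf{Step 3:} Analyze the adversarial-prediction case. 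Here the middle block may either never fire (threshold too high) or fire on a mediocre secretary (threshold too low); in the worst case we want to argue that the fallback final block still behaves like a scaled-down classical secretary algorithm. The point is that conditioned on reaching the final block with the classical rule active and on the top secretary landing there, we recover a $\Theta(1)$ fraction of $v^*$; averaging over the random order yields $\beta \cdot v^* \ge \frac{1}{e}\cdot \gamma \cdot v^*$ for an explicit constant $\gamma<1$, hence $\beta < 1/e$ but bounded below by a fixed constant independent of $\eta$. \textbf{Step 4:} Optimize the block-length constants to exhibit the claimed $\alpha > 1/e > \beta$ and note polynomial running time (trivial: the algorithm does $O(1)$ work per arrival). \textbf{Step 5:} Remark that the identical argument, substituting the Kesselheim-style ``accept an edge of the predicted optimal matching'' rule and the forest-feasibility rule respectively, yields the same two-regime guarantee for online bipartite matching and the graphic matroid secretary problem, which is why the statement is phrased as a meta-result.

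I expect the main obstacle to be \textbf{Step 3}—getting a clean worst-case lower bound of the form $\beta \cdot v^*$ that holds \emph{simultaneously} for all prediction errors, including the regime where the threshold is set just slightly below $v^*$ so that acceptance in the middle block is neither guaranteed nor impossible. In that intermediate regime one cannot simply say ``the threshold never fires, so we run the classical algorithm on the tail''; one has to track the probability that the middle block consumes the top secretary on a bad match versus the probability it is still available for the tail, and show the two contributions combine to something at least a constant times $v^*$. The secretary-problem version of this bookkeeping is manageable because there is a single prize; the bipartite-matching and matroid versions require the corresponding ``the optimal online node/edge is still matchable'' lemmas from \cite{KRTV2013,STV2018}, which is precisely where the deterministic reworking of those algorithms (promised earlier in the paper) is needed, so the cleanest exposition proves the secretary case in full and defers the other two to their own sections while citing this template.
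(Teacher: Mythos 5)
Your overall template---three deterministic blocks, observation followed by a prediction-triggered phase followed by a classical fallback---is exactly the structure of Algorithm~\ref{alg:classical} in the paper, and your instinct that the deterministic partition beats a coin-flip between the two extreme policies is also borne out (Appendix~\ref{app:CompNRndAlgo}). But the obstacle you flag in Step~3 is real and your sketch does not resolve it, whereas the paper has a clean mechanism that dissolves it: the Phase~II acceptance threshold is not the bare prediction value but $t=\max\{v',\,p^*-\lambda\}$, where $v'$ is the maximum value observed in Phase~I. Because of the $\max$ with $v'$, anything Phase~II accepts would also be accepted by the classical algorithm whose observation window is exactly Phase~I; so on every permutation where the ``short-observation'' classical algorithm picks OPT, Algorithm~\ref{alg:classical} does too, and the worst-case $\beta$-bound drops out by coupling rather than by the probabilistic bookkeeping you anticipated. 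With a pure threshold like $p^*-c\,p^*$ and no clamp to $v'$, when $p^*$ is far below $v^*$ Phase~II would accept the first arbitrary element and the worst-case guarantee collapses; this is precisely the ``consume the top secretary on a bad match'' failure mode you worried about.

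Two further concrete deviations. First, the paper does not choose block lengths ad hoc: Phase~I ends at $\exp\{W_{-1}(-1/(ce))\}\cdot n$ and Phase~II at $\exp\{W_0(-1/(ce))\}\cdot n$, the two roots of $-x\ln x = 1/(ce)$, so that \emph{both} one-phase fallbacks (stop early, or stop late) are exactly $1/(ce)$-competitive; this is what makes $\beta=1/(ce)$ exact and $\alpha=f(c)(1-(\lambda+\eta)/\mathrm{OPT})$ cleanly comparable to it. Second, your Step~5 claim that the other two settings follow by ``the identical argument'' overstates the uniformity: in Theorem~\ref{thm:bipartite} the Kesselheim phase is Phase~II and the prediction-driven threshold greedy is Phase~III (and the paper explicitly remarks that swapping them breaks the bound), while in Theorem~\ref{thm:graphic} the prediction phase is Phase~II again, the fallback requires the new deterministic variant of the Soto et al.\ algorithm (Theorem~\ref{thm:detMatroid}), and the good-prediction analysis needs a proxy-charging argument (Claim~\ref{claim:proxy}) that has no analogue in the secretary case. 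So the meta-theorem is proved by three separate instantiations (Theorems~\ref{thm:classical}, \ref{thm:bipartite}, \ref{thm:graphic}), not one template applied three times.
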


\noindent We note that there is a correlation between the constants $\alpha$ and $\beta$,
which can be intuitively described as follows:
The more one is willing to give up in the worst-case guarantee,
i.e. the more confidence we have in the predictions,
the better the competitive ratio becomes in the case where the predictions are sufficiently accurate.

We next give a high-level overview of our approach. 
We split the sequence of uniformly random arrivals in three phases. 
In the first phase, we observe a fraction of the input without selecting anything. In the remaining two phases, we run two extreme policies,
which either exploit the predictions or ignore them
completely. 
Although, each of the aforementioned extreme policies can be analyzed individually, 
using existing techniques, it is a non-trivial task to show that
when combined they do not obstruct each other too much. In particular,
the execution order of these policies is crucial for the analysis. 

We give detailed formulations of the meta-result in Theorem \ref{thm:classical} for the secretary problem; in Theorem \ref{thm:bipartite} for the online bipartite matching problem; and in Theorem \ref{thm:graphic} for the graphic matroid secretary problem. 

In addition, we show that the online bipartite matching algorithm, given in Section \ref{sec:bipartite}, 
can be turned into a truthful mechanism 
in the case of the so-called \emph{single-value unit-demand domains}.
Details are given in Theorem \ref{thm:truthful} in Appendix \ref{app:truthful}. 
We show that the algorithm of Kesselheim et al. \cite{KRTV2013} can be turned into a truthful mechanism in the special case of \emph{uniform edge weights}, where for every fixed online node in $L$, there is a common weight on all edges adjacent to it. In addition, we show that truthfulness can be preserved when predictions are included in the algorithm.  
We note that Reiffenh\"auser \cite{R2019} recently gave a truthful mechanism for the general online bipartite matching setting (without uniform edge weights). It would be interesting to see if her algorithm can be augmented with predictions as well.

\begin{remark}
In the statements of Theorem \ref{thm:classical}, \ref{thm:bipartite} and \ref{thm:graphic} 
it is assumed that the set of objects $O$ arriving online (either vertices or edges) 
is asymptotically large. We hide $o(1)$-terms at certain places 
for the sake of readability.
\end{remark}

Although the predictions provide very little information about the
optimal solution,
we are still able to obtain improved theoretical guarantees 
in the case where the predictions are sufficiently accurate. 
In the online bipartite matching setting with predictions for the
nodes in $R$, we can essentially get close to a $1/2$-approximation
-- which is best possible -- assuming the predictions are close to
perfect. This follows from the fact that in this case we obtain the
so-called \emph{vertex-weighted online bipartite matching problem} for
which there is a deterministic $1/2$-competitive algorithm, and no
algorithm can do better \cite{AGKA2011}. Roughly speaking, our
algorithm converges to that in \cite{AGKA2011} when the predictions
get close to perfect. This will be discussed further in Section
\ref{sec:bipartite}.  For the graphic matroid secretary problem, we
are also able to get close to a $1/2$-approximation in the case where the
predictions (the maximum edge weights adjacent to the nodes in the
graph) get close to perfect. 
We note that this is probably not tight. We suspect that, when given perfect predictions, it is possible to obtain an algorithm with a better approximation guarantee. This is an interesting open problem.

\subsection{Related work}
This subsection consists of three parts. 
	First we discuss relevant approximation algorithms for the matroid secretary problem 
	without any form of prior information, 
	then we consider models that incorporate additional information,
	such as the area of prophet inequalities. 
	Finally, we give a short overview of related problems 
	that have been analyzed with the inclusion of machine learned advice 
	following the frameworks in  \cite{LykourisV18,PurohitSK18}, 
	which we study here as well.

\paragraph{Approximation algorithms for the matroid secretary problem.}
The classical secretary problem was originally introduced by Gardner \cite{G1960}, and solved by Lindley \cite{L1961} and Dynkin \cite{D1962}, who gave $1/e$-competitive algorithms. Babaioff et al. \cite{BIKK2018} introduced the matroid secretary problem, a considerable generalization of the classical secretary problem, where the goal is to select a set of secretaries with maximum total value under a matroid constraint for the set of feasible secretaries. They provided an $O(1/\log(r))$-competitive algorithm for this problem, where $r$ is the rank of the underlying matroid. Lachish \cite{L2014} later gave an  $O(1/\log\log(r))$-competitive algorithm,
and a simplified algorithm with the same guarantee was given by Feldman, Svensson and Zenklusen \cite{FSZ2014}. It is still a major open problem if there exists a constant-competitive algorithm for the matroid secretary problem. 
Nevertheless, many constant-competitive algorithms 
are known for special classes of matroids, and we mention those relevant 
to the results in this work (see, e.g., \cite{BIKK2018,STV2018} for further related work).

Babaioff et al. \cite{BIKK2018} provided a $1/16$-competitive
 algorithm for the case of transversal matroids, which was later
 improved to a $1/8$-competitive algorithm by Dimitrov and Plaxton
 \cite{DP2012}. Korula and P{\'a}l \cite{KP2009} provided the first
 constant competitive algorithm for the online bipartite matching
 problem considered in Section \ref{sec:bipartite}, of which the
 transversal matroid secretary problem is a special case. In
 particular, they gave a $1/8$-approximation. Kesselheim et
 al. \cite{KRTV2013} provided a $1/e$-competitive algorithm, which is
 best possible, as discussed above.

For the graphic matroid secretary problem, Babaioff et al. \cite{BIKK2018} provide a deterministic $1/16$-competitive algorithm. This was improved to a $1/(3e)$-competitive algorithm by Babaioff et al. \cite{BDGIT2009}; a $1/(2e)$-competitive algorithm by Korula and P\'al \cite{KP2009}; and a $1/4$-competitive algorithm by Soto et al. \cite{STV2018}, which is currently the best algorithm. 
The algorithm from \cite{BIKK2018} is deterministic, whereas the other three are randomized.
All algorithms run in polynomial time.

\paragraph{Other models, extensions and variations.} 
There is a vast literature on online selection algorithms for problems similar to the (matroid) secretary problem. 
Here we discuss some recent directions and other models
  incorporating some form of prior information. 

The most important assumption in the secretary model that we consider is the fact that elements arrive in a uniformly random order. If the elements arrive in an adversarial order, there is not much one can achieve: There is a trivial randomized algorithm that selects every element with probability $1/n$, yielding a $1/n$-competitive algorithm; deterministically no finite competitive algorithm is possible with a guarantee independent of the values of the elements. There has been a recent interest in studying \emph{intermediate} arrival models that are not completely adversarial, 
nor uniformly random. Kesselheim, Kleinberg and Niazadeh \cite{KKN2015} study non-uniform arrival orderings under which (asymptotically) one can still obtain a $1/e$-competitive algorithm for the secretary problem. Bradac et al. \cite{BGSZ2020} consider the so-called Byzantine secretary model in which some elements arrive uniformly at random, but where an adversary controls a set of elements that can be inserted in the ordering of the uniform elements in an adversarial manner. See also the very recent work of Garg et al. \cite{GKRS2020} for a conceptually similar model.

In a slightly different setting, Kaplan et al.~\cite{KaplanNR20}
consider a secretary problem with the assumption that the algorithm
has access to a random sample of the adversarial distribution ahead of
time. For this setting they provide an algorithm with almost tight
competitive-ratio for small sample-sizes. 

Furthermore, there is also a vast literature on so-called \emph{prophet inequalities}. 
In the basic model, the elements arrive in an adversarial order, but there is a prior distributional information given for the values of the elements $\{1,\dots,n\}$. That is, one is given probability distributions $X_1,\dots,X_n$ from which the values of the elements are drawn. 
Upon arrival of an element $e$, its value drawn according to $X_e$ is revealed and an irrevocable decision is made whether to select this element or not.
Note that the available distributional information can
be used to decide on whether to select an element. 
The goal is to maximize the expected value, taken over all prior distributions, of the selected element.
For surveys on recent developments, refer to \cite{L2017,CFHOV2019}. Here we discuss some classical results and recent related works. 
Krengel, Sucheston and Garling \cite{KS1978} show that there is an optimal $1/2$-competitive algorithm for this problem. Kleinberg and Weinberg \cite{KW2019} gave a significant generalization of this result to matroid prophet inequalities, where multiple elements can be selected subject to a matroid feasibility constraint (an analogue of the matroid secretary problem). 
There is also a growing interest in the prophet secretary problem \cite{Ehsani2018}, in which the elements arrive uniformly random (as in the secretary problem); see also \cite{CFHOV2019}. 

Recently, settings with more limited prior information gained a lot of interest. These works address the quite strong assumption of knowing all element-wise prior distributions.
Azar, Kleinberg and Weinberg \cite{AKW2014} study the setting in which one has only access to one sample from every distribution, as opposed to the whole distribution; see also \cite{W2018}. Correa et al. \cite{CDFS2019} study this problem under the assumption that all elements are identically distributed. 
Recently, an extension of this setting was considered by Correa et al. \cite{CCES2020}.
Furthermore, D\"utting and Kesselheim \cite{DK2019} consider prophet inequalities with inaccurate prior distributions $\tilde{X}_1,\dots,\tilde{X}_n$ (while the true distributions $X_1,\dots,X_n$ are unknown). They study to what extent the existing algorithms are robust against inaccurate prior distributions.\\

\noindent Although our setting also assumes additional information about the input instance, there are major differences.
Mainly, we are interested in including a minimal amount of 
predictive information about an optimal (offline) solution, which yields a quantitative improvement in the case where the prediction 
	is sufficiently accurate. This is a much weaker assumption than having 
	a priori all element-wise (possibly inaccurate) probability distributions.
Furthermore, our setting does not assume that the predictive information
necessarily comes from a distribution 
(which is then used to measure the expected performance of an  algorithm), 
but can be obtained in a more general fashion from historical data
(using, e.g., statistical or machine learning techniques).
Finally, and in contrast to other settings, the information received in our setting can be inaccurate (and this is non-trivial do deal with). 

\paragraph{Other problems with black-box machine learned advice}
Although online algorithms with machine learned advice are a relatively
new area, there has already been a number of  interesting
results.   
 We note that most of the following results are analyzed by means of
\emph{consistency} (competitive-ratio in the case of perfect
predictions) and \emph{robustness} (worst-case competitive-ratio
regardless of prediction quality), but the precise
formal definitions of consistency and robustness  slightly
differ in each paper \cite{LykourisV18,PurohitSK18}. \footnote{The
  term ``robustness'' has also been used in connection with secretary
  problems (see for example~\cite{BGSZ2020}), but in a totally different sense.} Our results can also be interpreted within this framework, but for the sake of completeness we give the competitive ratios as a function of the prediction error.
Purohit et al.~\cite{PurohitSK18}, considered the \emph{ski rental
problem} and the \emph{non-clairvoyant scheduling problem}. For both
problems they gave algorithms that are both consistent and robust,
and with a flavor similar to ours, the robustness and consistency of their algorithms are given as a function of some hyper parameter which
has to be chosen by the algorithm in advance. 
For the ski rental problem, in particular, 
Gollapudi et al.~\cite{GollapudiP19} considered the setting with multiple predictors,
and they provided and evaluated experimentally tight algorithms.

Lykouris and Vassilvitskii~\cite{LykourisV18} studied the
\emph{caching problem} (also known in the literature as \emph{paging}), and
were able to adapt the classical Marker algorithm~\cite{FiatRR94} 
to obtain a trade-off between robustness and consistency, for this problem.
Rohatgi~\cite{Rohatgi20} subsequently gave an algorithm 
	whose competitive ratio has an improved dependence on the prediction errors.

Further results in online algorithms with machine learned advice include
the work by Lattanzi et al.~\cite{LattanziLMV20} who studied 
the \emph{restricted assignment scheduling problem} with predictions 
on some dual variables associated to the machines, and the work by
Mitzenmacher~\cite{Mitzenmacher20} who considered a different 
\emph{scheduling/queuing problem}. They introduced a novel
quality measure for evaluating algorithms, called the \emph{price of misprediction}.

Finally, Mahdian et al.~\cite{MahdianNS12} studied problems 
where it is assumed that there exists an
optimistic algorithm (which could in some way be interpreted as a
prediction), and designed a meta-algorithm that interpolates between
a worst-case algorithm and the optimistic one.
They considered several problems, 
including the \emph{allocation of online advertisement space},
and for each gave an algorithm whose competitive ratio 
is also an interpolation between the competitive ratios 
of its corresponding optimistic and worst-case algorithms.
However, the performance guarantee is not given as 
a function of the ``prediction'' error, but rather only as 
a function of the respective ratios and the interpolation parameter.

\section{Preliminaries}
In this section we formally define the online algorithms of interest, provide the necessary graph theoretical notation, and define the so-called Lambert $W$-function that will be used in Section \ref{sec:secretary}.

\subsection{Online algorithms with uniformly random arrivals}
We briefly sketch some relevant definitions for the online problems that we consider in this work. We consider online selection problems in which the goal is to select
the ``best feasible'' subset 
out of a finite set of
objects $O$ with size $|O|=n$, that arrive online in a uniformly random order. More formally, the $n$ objects are
revealed to the algorithm one object per round. In each round $i$, and upon revelation of
the current object $o_i\in O$, the online selection algorithm has to irrevocably
select an \emph{outcome} $z_i$ out of a set of possible outcomes
$Z(o_i)$ (which may depend on $o_1,o_2,\dots o_i$ as well as
$z_1,z_2,\dots z_{i-1}$.) Each outcome $z_i$ is associated with a \emph{value}
$v_i(z_i)$, and all values $v_i$ become known to the algorithm with the
arrival of $o_i$. The goal is to maximize the total value $T= \sum_i v_i(z_i)$.

The cost of an algorithm $\mathcal{A}$ selecting outcomes
$z_1,z_2\dots z_n$ on $\sigma$ is defined as $T(\mathcal{A}(\sigma)) =
\sum_i v_i(z_i)$. Such an algorithm $\mathcal{A}$ is $\gamma$-competitive if 
$
\mathbb{E}(T(\mathcal{A}(\sigma))) \geq \gamma \cdot \text{OPT}(\sigma),
$
for $0 < \gamma \leq 1$, where \text{OPT}$(\sigma)$ is the objective
value of an offline optimal solution, i.e., one that is aware of the whole input sequence $\sigma$
in advance.  The
expectation is taken over the randomness in $\sigma$ (and the internal
randomness of $\mathcal{A}$ in case of a randomized
algorithm). Alternatively, we say that $\mathcal{A}$ is a
$\gamma$-approximation.

\subsection{Graph theoretical notation}
An undirected graph $G = (V,E)$ is defined by a set of nodes $V$ and
set of edges $E \subseteq \{ \{u,v\} : u,v \in V, u\neq v\}$. A bipartite graph $G = (L \cup R, E)$ is given by two sets of nodes $L$ and $R$, and $E \subseteq \{\{\ell,r\} : \ell \in L, r \in R\}$. In the bipartite case we sometimes write $(\ell,r)$ to indicate that $\ell \in L$ and $r \in R$ (we also use this notation for directed arcs in general directed graphs). For a set of nodes $W$, we use $G[W]$ to denote the induced (bipartite) subgraph on the nodes in $W$.

A function $w : E \rightarrow \R_{\geq 0}$ is called a weight function on the edges in $E$; we sometimes write $w(\ell,r)$ in order to denote $w(\{u,v\})$ for $\{u,v\} \in E$.  A matching $M \subseteq E$ is a subset of edges so that every node is adjacent to at most one edge in $M$. For a set of nodes $W$, we write $W[M]$ to denote the nodes in $W$ that are adjacent to an edge in $M$. Such nodes are said to be matched. If $G$ is undirected, we say that $M$ is perfect if every node in $V$ is adjacent to precisely one edge in $M$. If $G$ is bipartite, we say that $M$ is perfect w.r.t $L$ if every $\ell \in L$ is adjacent to one edge in $M$, and perfect w.r.t $R$ if every $r \in R$ is adjacent to some edge in $M$.

\begin{remark}
When $G$ is bipartite, we will assume that for every subset $S \subseteq L$, there is a perfect matching w.r.t $S$ in $G[S \cup R]$. This can be done without loss of generality by adding for every $\ell \in L$ a node $r^\ell$ to the set $R$, and adding the edge $\{\ell,r^\ell\}$ to $E$. Moreover, given a weight function on $E$ we extend it to a weight function on $E'$ by giving all the new edges weight zero.
\end{remark}

\subsection{Lambert $W$-function}
The Lambert $W$-function is the inverse relation of the function $f(w) = we^w$. Here, we consider this function over the real numbers, i.e., the case $f : \R \rightarrow \R$. Consider the equation
$
ye^y = x. 
$
For $-1/e \leq x < 0$, this equation has two solutions denoted by $y = W_{-1}(x)$ and $y = W_{0}(x)$, where $W_{-1}(x) \leq W_0(x)$ with equality if and only if $x = -1/e$.

\section{Secretary problem}\label{sec:secretary}
In the secretary problem there is a set $\{1,\dots,n\}$ of secretaries, each with a value $v_i \geq 0$ for $i \in \{1,\dots,n\}$, that arrive in a \emph{uniformly random order}. Whenever a secretary arrives, we have to irrevocably decide whether we want to hire that person. If we decide to hire a secretary, we automatically reject all subsequent candidates. The goal is to select the secretary with the highest value. We assume without loss of generality that all values $v_i$ are distinct. This can be done by introducing a suitable tie-breaking rule if necessary.

There are  two versions of the secretary problem. In the \emph{classical secretary problem}, the goal is to maximize the probability with which the best secretary is chosen. We consider a slightly different version, where the goal is to maximize the expected value of the chosen secretary. We refer to this as the \emph{value-maximization secretary problem}.\footnote{Any $\alpha$-approximation for the classical secretary problem yields an $\alpha$-approximation for the value-maximization variant.} In the remainder of this  work, the term `secretary problem' will always refer to the value-maximization secretary problem, unless stated otherwise.
The (optimal) solution \cite{L1961,D1962} to both variants of the secretary problem is to first observe a fraction of $n/e$ secretaries. After that, the first secretary with a value higher than the best value seen in the first fraction is selected. This yields a $1/e$-approximation for both versions. 

The machine learned advice that we consider in this section is a prediction $p^*$ for the maximum value $\text{OPT} = \max_i v_i$ among all secretaries.\footnote{Such a prediction does not seem to have any relevance in case the goal is to maximize the probability with which the best secretary is chosen. Intuitively, for every instance $(v_1,\dots,v_n)$ there is a `more or less isomorphic' instance $(v_1',\dots,v_n')$ for which $v_i' < v_j'$ if and only if $v_i < v_j$ for all $i,j$, and for which, all values are close to each other and also to the prediction $p^*$, for which we assume that $p^* < \min_i v_i'$. Any algorithm that uses only pairwise comparisons between the values $v_i'$ can, intuitively, not benefit from the prediction $p^*$. Of course, for the value-maximization variant, choosing any $i$ will be close to optimal in this case if the prediction error is small.} Note that we do not predict which secretary has the highest value. We define the \emph{prediction error} as
$$
\eta = |p^* - \text{OPT}|.
$$
We emphasize that this parameter is not known to the algorithm, but is only used to analyse the algorithm's performance guarantee.

If we would \emph{a priori} know that the prediction error is small, then it seems reasonable to pick the first element that has a value `close enough' to the predicted optimal value. One small issue that arises here is that we do not know whether the predicted value is smaller or larger than the actual optimal value. In the latter case, the entire Phase II of the algorithm would be rendered useless, even if the prediction error was arbitrarily small. In order to circumvent this issue, one can first lower the predicted value $p^*$ slightly by some $\lambda > 0$ and then select the first element that is greater or equal than the threshold $p^* - \lambda$.\footnote{Alternatively, one could define an interval around $p^*$, but given that we get a prediction for the maximum value, this does not make a big difference.} Roughly speaking, the parameter $\lambda$ can be interpreted as a guess for the prediction error $\eta$.

\IncMargin{1em}
\begin{algorithm}
\SetKwData{Left}{left}\SetKwData{This}{this}\SetKwData{Up}{up}
\SetKwFunction{Union}{Union}\SetKwFunction{FindCompress}{FindCompress}
\SetKwInOut{Input}{Input}\SetKwInOut{Output}{Output}
\Input{Prediction $p^*$ for (unknown) value $\max_i v_i$; confidence parameter $0 \leq \lambda \leq p^*$ and $c \geq 1$.}
\Output{Element $a$.}\medskip
 
Set $v' = 0$.\\
\textbf{Phase I:} \\
\For{$i = 1,\dots,\lfloor\exp\{W_{-1}(-1/(ce))\} \cdot n\rfloor$}{
Set $v' = \max\{v',v_i\}$}
Set $t = \max\{v',p^* - \lambda\}$.\\
\textbf{Phase II:} \\
\For{$i = \lfloor\exp\{W_{-1}(-1/(ce))\} \cdot n\rfloor + 1, \dots, \lfloor \exp\{W_{0}(-1/(ce))\} \cdot n\rfloor$}{
\If{$v_i >  t$}{
Select element $a_i$ and STOP.}}
Set $t = \max\{v_{j}\ :\ j\in\{1,\dots,\lfloor\exp(W_{0}(-1/(ce)))\cdot n\rfloor\}\}$. \\
\textbf{Phase III:}\\
\For{$i = \lfloor\exp\{W_{0}(-1/(ce))\} \cdot n\rfloor + 1, \dots, n$}{
\If{$v_i >  t$}{
Select element $a_i$ and STOP.}}
\caption{Value-maximization secretary algorithm}
\label{alg:classical}
\end{algorithm}
\DecMargin{1em}

Algorithm \ref{alg:classical} incorporates the idea sketched in the previous paragraph (while at the same time guaranteeing a constant competitive algorithm if the prediction error is large).  It is a generalization of the well-known optimal algorithm for both the classical and value-maximization secretary problem \cite{L1961,D1962}. The input parameter $\lambda$  is a confidence parameter in the prediction $p^*$ that allows us to interpolate between the following two extreme cases:
\begin{enumerate}[i)]
\item If we have very low confidence in the prediction, we choose $\lambda$ close to $p^*$;
\item If we have high confidence in the prediction, we choose $\lambda$ close to $0$.
\end{enumerate}
In the first case, we essentially get back the classical solution \cite{L1961,D1962}. Otherwise, when the confidence in the prediction is high, we get a competitive ratio better than $1/e$ in case the prediction error $\eta$ is in fact small, in particular, smaller than $\lambda$. If our confidence in the prediction turned out to be wrong, when $\lambda$ is larger than the prediction error, we still obtain a $1/(ce)$-competitive algorithm. The parameter $c$ models what factor we are willing to lose in the worst-case when the prediction is  poor (but the confidence in the prediction is high). In Theorem \ref{thm:classical} below, we analyze Algorithm \ref{alg:classical}.

\begin{theorem}\label{thm:classical}
	For any $\lambda \ge 0$ and $c > 1$, there is a deterministic algorithm for the (value-maximization) secretary problem that is asymptotically $g_{c,\lambda}(\eta)$-competitive in expectation, where
	\[
	g_{c,\lambda}(\eta) = \left\{
	\begin{array}{ll}
	\max\left\{\frac{1}{ce},\left[f(c)\left(\max\left\{1 - \frac{\lambda + \eta}{OPT},0\right\}\right)\right]\right\} & \text{ if } 0 \leq \eta < \lambda\\
	\frac{1}{ce} & \text{ if } \eta \geq \lambda
	\end{array}\right\},
	\]
	and the function $f(c)$ is given in terms of the two branches 
	$W_0$ and $W_{-1}$ of the Lambert $W$-function and reads
	\[ f(c)=\exp\{W_0(-1/(ce))\} - \exp\{W_{-1}(-1/(ce))\}. \]
\end{theorem}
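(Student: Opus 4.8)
The plan is to control a single quantity throughout — whether the maximum‑value element is ever selected — together with a deterministic lower bound on the value collected whenever the prediction‑driven phase fires. Write $a=\exp\{W_{-1}(-1/(ce))\}$ and $b=\exp\{W_0(-1/(ce))\}$, so that Phase~I covers positions $1,\dots,\lfloor an\rfloor$, Phase~II covers $\lfloor an\rfloor+1,\dots,\lfloor bn\rfloor$, Phase~III covers $\lfloor bn\rfloor+1,\dots,n$, and $f(c)=b-a$. The first step is to record the two elementary facts that drive the whole argument: for $c>1$ one has $0<a<1/e<b<1$, and, directly from the defining identity $W(x)e^{W(x)}=x$ of the Lambert function, $a\ln(1/a)=b\ln(1/b)=1/(ce)$. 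In other words $a$ and $b$ are precisely the two sampling fractions at which the classical ``observe‑then‑pick'' secretary rule succeeds with probability $1/(ce)$. Let $i^\star$ be the (unique, by the tie‑breaking assumption) index attaining $\text{OPT}=\max_i v_i$.

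For the worst‑case guarantee $g_{c,\lambda}(\eta)\ge 1/(ce)$, valid for every $\eta$, I would split on whether $p^*-\lambda$ exceeds $\text{OPT}$. If $p^*-\lambda\ge\text{OPT}$, then the Phase~II threshold $\max\{v',p^*-\lambda\}$ exceeds every $v_i$, so Phase~II selects nothing; hence Phase~III runs, and it is literally the classical secretary algorithm with sampling fraction $b$, which takes $i^\star$ with probability $b\ln(1/b)-o(1)=1/(ce)-o(1)$. If instead $p^*-\lambda<\text{OPT}$, I compare with the reference rule $\mathcal R$ that observes positions $1,\dots,\lfloor an\rfloor$ and then selects the first later element exceeding $v'=\max$ of the observed prefix; $\mathcal R$ is the classical rule with fraction $a$, so it takes $i^\star$ with probability $a\ln(1/a)-o(1)=1/(ce)-o(1)$. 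The key claim is that \emph{our algorithm selects $i^\star$ whenever $\mathcal R$ does}: if $\mathcal R$ takes $i^\star$ then no element strictly between position $\lfloor an\rfloor$ and $\mathrm{pos}(i^\star)$ beats $v'$, hence none beats $\max\{v',p^*-\lambda\}\ge v'$, so either Phase~II reaches $i^\star$ with nothing yet selected — and $i^\star$ clears the threshold, since $\text{OPT}>v'$ and $\text{OPT}>p^*-\lambda$ — or $\mathrm{pos}(i^\star)$ lies in Phase~III, in which case Phase~II selects nothing and the Phase~III threshold collapses to $v'$, so again $i^\star$ is taken. This yields $\Pr[i^\star\text{ selected}]\ge 1/(ce)-o(1)$ and hence $\mathbb E[\text{value}]\ge(1/(ce)-o(1))\,\text{OPT}$, covering both branches of $g_{c,\lambda}$.

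For the improved guarantee when $0\le\eta<\lambda$, I would use the event $D=\{\mathrm{pos}(i^\star)\in\text{Phase II}\}$, which has probability $(\lfloor bn\rfloor-\lfloor an\rfloor)/n=f(c)-o(1)$. On $D$, because $\eta<\lambda$ forces $p^*-\lambda\le\text{OPT}+\eta-\lambda<\text{OPT}$, Phase~II is \emph{guaranteed} to select something: when $i^\star$ arrives, either nothing has been chosen yet — and then $i^\star$ clears the threshold $\max\{v',p^*-\lambda\}<\text{OPT}$ and is taken, for value $\text{OPT}$ — or some earlier Phase~II element was already chosen, and by construction its value exceeds $p^*-\lambda$. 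Since $p^*\ge\text{OPT}-\eta$, in both cases the collected value is at least $p^*-\lambda\ge\text{OPT}-\lambda-\eta$, so $\mathbb E[\text{value}]\ge(f(c)-o(1))(\text{OPT}-\lambda-\eta)$, i.e.\ the ratio is at least $f(c)\max\{1-(\lambda+\eta)/\text{OPT},0\}-o(1)$ (the outer $\max$ with $0$ absorbing the case $\text{OPT}\le\lambda+\eta$, already handled by the worst‑case bound). Taking the larger of the two bounds gives $g_{c,\lambda}(\eta)$ up to the $o(1)$ terms, and polynomiality and determinism of the algorithm are immediate from its description.

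I expect the main obstacle to be the comparison step in the $p^*-\lambda<\text{OPT}$ worst‑case branch: one must argue carefully that raising the Phase~II threshold via the (possibly inaccurate) prediction never diverts a selection away from $i^\star$, and that bypassing Phase~II leaves the Phase~III threshold exactly equal to the Phase~I maximum on the relevant event — both facts hinging on the order of the phases and on the case split on $p^*-\lambda$ versus $\text{OPT}$. The remaining ingredients (the two $\tau\ln(1/\tau)$ estimates and the size of Phase~II) are the standard secretary asymptotics.
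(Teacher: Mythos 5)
Your proposal is correct and follows essentially the same two-pronged strategy as the paper: a worst-case $1/(ce)$ bound obtained by case-splitting on $p^*-\lambda$ versus $\mathrm{OPT}$ and coupling with the classical observe-then-pick rule at sampling fraction $a=\exp\{W_{-1}(-1/(ce))\}$ or $b=\exp\{W_0(-1/(ce))\}$, and an improved $f(c)\bigl(1-(\lambda+\eta)/\mathrm{OPT}\bigr)$ bound conditioned on the top element landing in Phase~II. Your write-up actually fills in two points the paper leaves terse — why a Phase~II selection never diverts $i^\star$ in the second worst-case branch, and why the Phase~III threshold equals the Phase~I maximum on the relevant event — and unifies the paper's two sub-cases ($p^*\gtrless\mathrm{OPT}$) of the small-error bound via $p^*\ge\mathrm{OPT}-\eta$, but these are expositional refinements rather than a different route.
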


We note that $\lambda$ and $c$ are independent
parameters that provide the most general description of the
competitive ratio. Here $\lambda$ is our confidence
of the predictions and $c$ describes how much we are willing to lose
in the worst case. Although these parameters can be set independently,
some combinations of them are not very sensible, as one might
not get an improved performance guarantee, even when the prediction
error is small (for instance, if $c = 1$, i.e., we are not willing to
lose anything in the worst case, then it is not helpful to consider
the prediction at all).
To illustrate the influence of these parameters
	on the competitive ratio, in Figure~\ref{fig:secretary},
	we plot various combinations of the input parameters 
	$c, \lambda$ and $p^*$ of Algorithm~\ref{alg:classical} (in Section~\ref{sec:secretary}), 
	assuming that $\eta = 0$. In this case $p^* = \text{OPT}$ and
	the competitive ratio simplifies to
	$$
	g_{c,\lambda}(0) = \max\left\{\frac{1}{ce},\, f(c)\cdot\max\left\{1 - \frac{\lambda}{p^*},0\right\}\right\}.
	$$ 
	We therefore choose the axes of Figure~\ref{fig:secretary} to be $\lambda/p^*$ and $c$.

\begin{figure}[h]
	\centering
	\includegraphics[scale=0.6	]{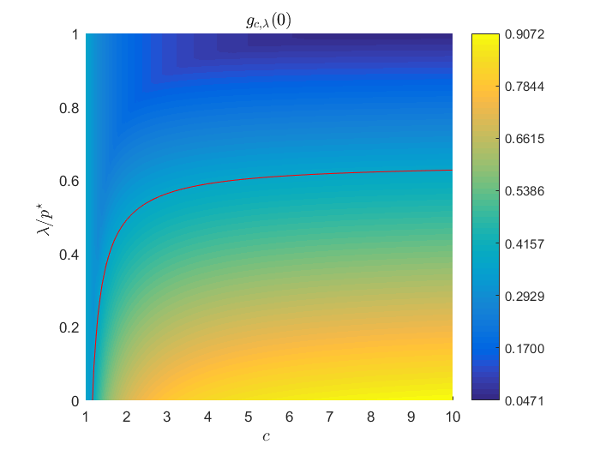}
	\caption{The red curve shows the optimal competitive ratio 
		without predictions, i.e.,  $g_{c,\lambda}(0)=1/e$.
		Our algorithm achieves an improved competitive ratio $g_{c,\lambda}(0)>1/e$
		in the area below this curve, and a worse competitive ratio
		$g_{c,\lambda}(0)<1/e$ in the area above it.}
	\label{fig:secretary}
\end{figure}

Furthermore, as one does not know the
prediction error $\eta$, there is no way in choosing these parameters
optimally, since different prediction errors require different
settings of $\lambda$ and $c$.

To get an impression of the statement in Theorem \ref{thm:classical},
if we have, for example, $\eta + \lambda =\frac{1}{10}\text{OPT}$, then
we start improving over $1/e$ for $c \geq 1.185$. Moreover, if one
believes that the prediction error is low, one should set $c$ very
high (hence approaching a $1$-competitive algorithm in case the
predictions are close to perfect).

\begin{remark}
Note, that the bound obtained in
Theorem~\ref{thm:classical} has a discontinuity at $\eta =
\lambda$. This can be easily smoothed out by selecting $\lambda$
according to some distribution, which now represents
our confidence in the prediction $p^*$. The competitive ratio will start
to drop earlier in this case, and will continuously reach
$1/(ce)$. Furthermore, for $\eta=\lambda=0$ this
  bound is tight for any fixed $c$.  We illustrate
how the competitive ratio changes as a function of $\eta$ in Appendix~\ref{app:randAlgo}.

Moreover, in Appendix~\ref{app:CompNRndAlgo}, we show that our deterministic Algorithm~\ref{alg:classical}
	can achieve a better competitive ratio than its corresponding naive randomization.
\end{remark}

\begin{proof}[Proof of Theorem~\ref{thm:classical}]
By carefully looking into the analysis of the classical secretary
problem, see, e.g., \cite{D1962,L1961}, it becomes clear that although
sampling an $1/e$-fraction of the items is the optimal trade-off for
the classical algorithm and results in a competitive ratio of $1/e$,
one could obtain a $1/(ce)$-competitive ratio (for $c>1$) in two ways: by sampling either less, or more items, more specifically an $\exp\{W_{-1}(-1/(ce))\}$ or $\exp\{W_{0}(-1/(ce))\}$ fraction of the items respectively. These quantities arise as the two solutions of the equation
$$
-x\ln x = \frac{1}{ce}.
$$
We next provide two lower bounds on the competitive ratio. 

 First of all, we prove that in the worst-case we are always $1/(ce)$-competitive. We consider two cases.

\textit{Case 1: $p^* - \lambda > OPT$.} Then we never pick an element in Phase II, which means that the algorithm is equivalent to the algorithm that observes a fraction $\exp\{W_{0}(-1/(ce))\}$ of all elements and then chooses the first element better than what has been seen before, which we know is $1/(ce)$-competitive.

\textit{Case 2. $p^* - \lambda \leq OPT$.} Consider a fixed arrival order and suppose that, for this permutation, we select OPT in the algorithm that first observes a fraction $\exp\{W_{-1}(-1/(ce))\}$ of all elements and then selects the first element better than what has been seen before (which we know is $1/(ce)$-competitive). It should be clear that our algorithm also chooses OPT in this case. As the analysis in \cite{D1962,L1961} relies on analyzing the probability with which we pick OPT, it follows that our algorithm is also $1/(ce)$-competitive in this case.

 The second bound on the competitive ratio applies to cases in which the prediction error is small. In particular, suppose that $0 \leq \eta < \lambda$.

\textit{Case 1: $p^* > OPT$.} We know that $p^* - \lambda < OPT$, as $\eta < \lambda$. Therefore, if $OPT$ appears in Phase II, and we have not picked anything so far,  we will pick $OPT$. Since $OPT$ appears in Phase II with probability $f(c)$, we in particular pick some element in Phase II with value at least $OPT - \lambda$ with probability $f(c)$ (note that this element does not have to be $OPT$ necessarily). 

\textit{Case 2: $p^* \leq OPT$.} In this case, using similar reasoning as in Case 1, with probability $f(c)$ we will pick some element with value at least $OPT - \lambda - \eta$. To see this, note that in the worst case we would have $p^* = OPT - \eta$, and we could select an element with value $p^* - \lambda$, which means that the value of the selected item is $OPT - \lambda - \eta$.

This means that, in any case, with probability at least $f(c)$, we will pick some element  in Phase II with value at least
$
\min\{OPT - \lambda, OPT- \lambda - \eta\} = OPT - \lambda - \eta 
$
if $\eta < \lambda$. That is, if $0 \leq \eta < \lambda$, and if we assume that $\text{OPT} - \lambda - \eta \geq 0$, we are guaranteed to be 
$
f(c)\left(1 - (\lambda + \eta)/\text{OPT}\right)\text{-competitive}.
$
\end{proof}

\section{Online bipartite matching with random arrivals}\label{sec:bipartite}
In this section we consider a generalization of the value-maximization
secretary problem discussed in Section \ref{sec:secretary}. We study
an online bipartite matching problem on a graph $G = (L \cup R,E)$
with edge weight function $w : E \rightarrow \R_{\geq 0}$. The
vertices $\ell \in L$ arrive in a uniformly random order. Whenever a
vertex $\ell \in L$ arrives, it reveals its neighbors $r \in R$ and
what the corresponding edge weights $w(\ell,r)$  are. We then have the option to add an edge of the form $(\ell,r)$, provided $r$ has not been matched in an earlier step. The goal is to select a set of edges, i.e., a matching, with maximum weight. 

We assume that we are given, for all offline nodes $r \in R$, a
prediction $p_r^*$ for the value of the edge weight adjacent to $r$ in some fixed optimal offline matching (which is zero if $r$ is predicted not to be matched in this offline matching). That is, we predict that there exists some fixed optimal offline matching in which $r$ is adjacent to an edge of weight $p_r^*$ without predicting which particular edge this is. 
Note that the predictions $p = (p_1^*,\dots,p_r^*)$ implicitly  provide a prediction for $\text{OPT}$, namely $\sum_{r \in R} p_r^*$.

It turns out that this type of predictions closely corresponds to the so-called \emph{online vertex-weighted bipartite matching problem} where every offline node is given a weight $w_r$, and the goal is to select a matching with maximum weight, which is the sum of all weights $w_r$ for which the corresponding $r$ is matched in the online algorithm. This problem has both been studied under adversarial arrivals \cite{KVV1990,AGKA2011} and uniformly random arrivals \cite{MY2011,HTWZ2019}. In case the predictions are perfect, then, in order to find a matching with the corresponding predicted values, we just ignore all edges $w(\ell,r)$ that do not match the value $p_r^*$. This brings us in a special case of the online vertex-weighted bipartite matching problem.

The prediction error in this section will be defined
  as the maximum error over all predicted values and the minimum over
  all optimal matchings in $G$. We use $\mathcal{M}(G,w)$ to denote the set of all optimal matchings in $G$ with respect to the weight function $w$, and then define
$$
\eta = \min_{M \in \mathcal{M}(G,r)}\max_{r \in R} |p_r^* - w(M_r)|.
$$
Here, we use $w(M_r)$ to denote the value of the edge adjacent to $r \in R$ in a given optimal solution with objective value $\text{OPT} = \sum_r w(M_r)$.

In the next sections we will present deterministic and randomized algorithms, inspired by algorithms for the online vertex-weighted bipartite matching problem, that can be combined with the algorithm in \cite{KRTV2013} in order to obtain algorithms that incorporate the predictions and have the desired properties. We start with a deterministic algorithm, which is the main result of this section.

\subsection{Deterministic algorithm}
We first give a simple deterministic greedy algorithm  that provides a
$1/2$-approximation in the case when the predictions are perfect
(which is true even for an adversarial arrival order of the nodes). It
is very similar to a greedy algorithm given by Aggarwal et
al. \cite{AGKA2011}. Although we do not emphasize it in the
description, this algorithm can be run in an online fashion. 

\IncMargin{1em}
\begin{algorithm}
\SetKwData{Left}{left}\SetKwData{This}{this}\SetKwData{Up}{up}
\SetKwFunction{Union}{Union}\SetKwFunction{FindCompress}{FindCompress}
\SetKwInOut{Input}{Input}\SetKwInOut{Output}{Output}
\Input{Thresholds $t = (t_1,\dots,t_{|R|})$ for offline nodes $r \in R$; ordered list $(v_1,\dots,v_\ell) \subseteq L$.}
\Output{Matching $M$}\medskip

Set $M = \emptyset$.\\
\For{$i = 1,\dots,\ell$}{
Set $r^i = \text{argmax}_r\{w(v_i,r) : r \in \mathcal{N}(v_i), w(v_i,r) \geq t_r \text{ and } r \notin R[M]\}$.\\
\If{$r^i \neq \emptyset$}{
Set $M = M \cup \{v_i,r^i\}$.
}}
\caption{Threshold greedy algorithm}
\label{alg:greedy}
\end{algorithm}
\DecMargin{1em}
Provided that there exists an offline matching in which every $r \in
R$ is adjacent to some edge with weight at least $t_r$, it can be
shown, using the same arguments as given in \cite{AGKA2011}, that the
threshold greedy algorithm yields a matching with weight at least
$\frac{1}{2}\sum_{r} t_r$. We present the details in the proof of Theorem \ref{thm:bipartite} later on.

It is also well-known that, even for uniformly random arrival order and
unit edge weights, one cannot obtain anything better than a $1/2$-approximation with a deterministic algorithm \cite{AGKA2011}.\footnote{The example given there is for adversarial arrival order, but also applies to uniformly random arrival order.} This also means that, with our choice of predictions, we cannot do better than a $1/2$-approximation in the ideal case in which the predictions are perfect. Therefore, our aim is to give an algorithm that includes the predictions in such a way that, if the predictions are good (and we have high confidence in them), we should approach a $1/2$-approximation, whereas if the predictions turn out to be poor, we are allowed to lose at most a constant factor w.r.t. the $1/e$-approximation in \cite{KRTV2013}.

Algorithm \ref{alg:bipartite} is a deterministic algorithm satisfying
these properties, that, similar to Algorithm \ref{alg:classical},
consists of three phases. The first two phases correspond to the
algorithm of Kesselheim et al. \cite{KRTV2013}. In the third phase, we
then run the threshold greedy algorithm as described in Algorithm
\ref{alg:greedy}. Roughly speaking, we need to keep two things in
mind. First of all, we should not match up too many offline nodes in
the second phase, as this would block the possibility of selecting a
good solution in the third phase in case the predictions are good. On
the other hand, we also do not want to match too few offline nodes in the second phase, otherwise we are no longer guaranteed to be constant-competitive in case the predictions turn out to be poor. The analysis of Algorithm \ref{alg:bipartite} given in Theorem \ref{thm:bipartite} shows that it is possible to achieve both these properties. 

For the sake of simplicity, in both the description of Algorithm \ref{alg:bipartite} and its analysis in Theorem \ref{thm:bipartite}, we use a common $\lambda$ to lower the predicted values (as we did in Section \ref{sec:secretary} for the secretary problem). Alternatively, one could use a resource-specific value $\lambda_r$ for this as well.

\IncMargin{1em}
\begin{algorithm}
\SetKwData{Left}{left}\SetKwData{This}{this}\SetKwData{Up}{up}
\SetKwFunction{Union}{Union}\SetKwFunction{FindCompress}{FindCompress}
\SetKwInOut{Input}{Input}\SetKwInOut{Output}{Output}
\Input{Predictions $p^*= (p_1^*,\dots,p_{|R|}^*)$, confidence parameter $0 \leq \lambda \leq \min_r p_r^*$, and $c > d \geq 1$.}
\Output{Matching $M$.}\medskip

\textbf{Phase I:} \hfill \textit{/*Algorithm from \cite{KRTV2013}} \\
\For{$i = 1,\dots,\lfloor n/c \rfloor$}{
Observe arrival of node $\ell_i$, and store all the edges adjacent to it.}
Let $L' = \{\ell_1,\dots,\ell_{\lfloor n/c \rfloor}\}$ and $M = \emptyset$.\\
\textbf{Phase II:} \\
\For{$i = \lfloor n/c \rfloor + 1, \dots, \lfloor n/d \rfloor$}{
Set $L' = L' \cup \ell_i$.\\
Set $M^{i} = \text{optimal matching on } G[L' \cup R]$.\\
Let $e^{i} = (\ell_i,r)$ be the edge assigned to $\ell_i$ in $M^{i}$.\\
\If{$M \cup e^i$ is a matching}{
Set $M = M \cup \{e^i\}$.}}
\textbf{Phase III:} \hfill \textit{/*Threshold greedy algorithm}\\
\For{$i =  \lfloor n/d \rfloor + 1,\dots,n$}{
Set $r^i = \text{argmax}_r\{w(v_i,r) : r \in \mathcal{N}(v_i), w(v_i,r) \geq p_r^* - \lambda \text{ and } r \notin R[M]\}$\\
\If{$r^i \neq \emptyset$}{
Set $M = M \cup \{\ell_i,r\}$.
}}
\caption{Online bipartite matching algorithm with predictions}
\label{alg:bipartite}
\end{algorithm}
\DecMargin{1em}
\begin{theorem}\label{thm:bipartite}
	For any $\lambda \ge 0$ and $c>d\ge 1$, there is a deterministic algorithm for the online bipartite matching problem with uniformly random arrivals that is asymptotically $g_{c,d,\lambda}(\eta)$-competitive in expectation, where
	\[
	g_{c,d,\lambda}(\eta) = \left\{
	\begin{array}{ll}
	\max\left\{\frac{1}{c}\ln(\frac{c}{d}),\left[\frac{d-1}{2c} \left(\max\left\{1 - \frac{(\lambda + \eta)|\psi|}{\text{OPT}},0\right\}\right)\right]\right\} & \text{ if } 0 \leq \eta < \lambda,\\
	\frac{1}{c}\ln(\frac{c}{d}) & \text{ if } \eta \geq \lambda.
	\end{array}\right\},
	\]
	and $|\psi|$ is the cardinality of an optimal (offline) matching $\psi$ of the instance.
\end{theorem}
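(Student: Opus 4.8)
The plan is to prove two separate lower bounds on $\mathbb{E}[w(M)]$ and let the competitive ratio be their maximum, matching the two branches of $g_{c,d,\lambda}$. Since Phase~I selects nothing and Phase~III never deletes an edge, the matching only grows, so $w(M)\ge w(M_{II})$ and $w(M)\ge w(M_{III})$, where $M_{II},M_{III}$ are the sets of edges added in Phases~II and~III. Bound~(i), namely $\mathbb{E}[w(M)]\ge (1-o(1))\tfrac1c\ln(c/d)\,\text{OPT}$, will come from $M_{II}$ alone and hold unconditionally; bound~(ii) will come from $M_{III}$ and use $\eta<\lambda$. The ``$\max\{\cdot,0\}$'' in the statement just records that when $\text{OPT}-(\lambda+\eta)|\psi|\le 0$ only bound~(i) is meaningful.

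For bound~(i), observe that Phases~I and~II together are exactly the algorithm of Kesselheim et al.\ \cite{KRTV2013} with sampling fraction $1/c$, truncated after the first $\lfloor n/d\rfloor$ arrivals. I would reuse their two estimates: (a) conditioned on the set $L_i$ of the first $i$ arrivals, the edge $e^i$ that $M^i$ assigns to $\ell_i$ satisfies $\mathbb{E}[w(e^i)\mid L_i]\ge \text{OPT}/n$ (restrict a fixed offline optimum to $L_i$ and use that $\ell_i$ is uniform in $L_i$); and (b) conditioned on $e^i=(\ell_i,r)$ and on $L_i$, the node $r$ is still free when $\ell_i$ arrives with probability at least $\lfloor n/c\rfloor/(i-1)$. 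Combining these in the conditional expectation exactly as in \cite{KRTV2013} (they are correlated and must not be multiplied naively) and summing over the Phase~II window $i\in\{\lfloor n/c\rfloor+1,\dots,\lfloor n/d\rfloor\}$ gives $\mathbb{E}[w(M_{II})]\ge\tfrac{\text{OPT}}{n}\sum_i\tfrac{\lfloor n/c\rfloor}{i-1}=(1-o(1))\tfrac{\text{OPT}}{c}\ln(c/d)$; truncation only restricts the summation range, so nothing in their proof breaks.

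For bound~(ii), fix an optimal offline matching $M^*$ attaining the minimum in the definition of $\eta$, and write $t_r=p_r^*-\lambda\ge 0$ for the Phase~III thresholds. Since $\eta<\lambda$, for every $r\in R[M^*]$ we have $w(M^*_r)\ge p_r^*-\eta\ge t_r$, hence $\sum_{r\in R[M^*]}t_r\ge \text{OPT}-(\lambda+\eta)|M^*|\ge \text{OPT}-(\lambda+\eta)|\psi|$ (taking $\psi$ of maximum cardinality among optimal matchings so that $|M^*|\le|\psi|$). Now run a charging argument on Phase~III, which is the threshold greedy of Algorithm~\ref{alg:greedy}. Call $r^*\in R[M^*]$ \emph{good} if its $M^*$-partner $\ell^*$ arrives in Phase~III and $r^*$ is not matched during Phase~II. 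For each good $r^*$: if $r^*$ is still free when $\ell^*$ arrives, then $r^*$ is a feasible option (it is in $\mathcal{N}(\ell^*)$ and $w(\ell^*,r^*)\ge t_{r^*}$), so $\ell^*$ gets matched to an edge of weight $\ge t_{r^*}$, which we charge $t_{r^*}$, indexed by its $L$-endpoint $\ell^*$; otherwise $r^*$ was matched in Phase~III, necessarily by the threshold test to an edge of weight $\ge t_{r^*}$, which we charge $t_{r^*}$, indexed by its $R$-endpoint $r^*$. Each edge of $M_{III}$ receives at most one $L$-indexed and at most one $R$-indexed charge, so $2w(M_{III})\ge\sum_{r^*\text{ good}}t_{r^*}$, and taking expectations $\mathbb{E}[w(M_{III})]\ge\tfrac12\sum_{r\in R[M^*]}t_r\cdot\Pr[r\text{ good}]$.

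It remains to show $\Pr[r^*\text{ good}]\ge(1-o(1))\tfrac{d-1}{c}$ for each fixed $r^*$; together with the previous displays this gives $\mathbb{E}[w(M)]\ge\tfrac{d-1}{2c}(\text{OPT}-(\lambda+\eta)|\psi|)$, i.e.\ bound~(ii). I would factor $\Pr[r^*\text{ good}]=\Pr[\ell^*\in\text{III}]\cdot\Pr[r^*\text{ survives Phase~II}\mid \ell^*\in\text{III}]$; the first factor is $(n-\lfloor n/d\rfloor)/n=1-1/d-o(1)$. For the second, condition further on the arrival position of $\ell^*$: in each Phase~II step $j$, the node $r^*$ is the $M^j$-partner of at most one of the $j$ elements of $L_j$ and $\ell_j$ is uniform in $L_j$, so $r^*$ is (tentatively, hence actually) newly blocked at step $j$ with conditional probability at most $1/j$; iterating, the survival probability is at least $\prod_{j=\lfloor n/c\rfloor+1}^{\lfloor n/d\rfloor}(1-1/j)=\lfloor n/c\rfloor/\lfloor n/d\rfloor=d/c-o(1)$, whence $\Pr[r^*\text{ good}]\ge(1-1/d)(d/c)-o(1)=(d-1)/c-o(1)$. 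I expect this last step to be the main obstacle: making the survival computation rigorous under the conditioning, since conditioning on $r^*$ having survived the earlier Phase~II steps (and on $\ell^*$ landing in Phase~III) skews the distribution of $L_j$, and one must verify the per-step bound $\le 1/j$ is unaffected. As the introduction emphasizes, the clean telescoping is precisely what forces the execution order Phase~II before Phase~III on the shared random permutation; the remaining details (floors, the $o(1)$ terms, the $|M^*|\le|\psi|$ reduction) are routine.
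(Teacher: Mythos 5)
Your proof takes essentially the same route as the paper's. Bound~(i) is the same truncated Kesselheim analysis, and your charging argument in bound~(ii) is the paper's $\mathbb{E}(X_r+Y_\ell)$ decomposition in disguise: the paper observes $\mathbb{E}(w(M))=\sum_r\mathbb{E}(X_r)\geq\sum_{\ell\in L[\psi]}\mathbb{E}(Y_\ell)$, adds these, and lower-bounds $\mathbb{E}(X_r+Y_\ell)\geq(\tfrac{d-1}{c}-o(1))(p_r^*-\lambda)$ by exactly your case split (either $r$ gets matched at weight $\geq p_r^*-\lambda$ or $\ell$ does), while you do the same accounting via at-most-two charges per $M_{III}$-edge before taking expectations. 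You also correctly identified the one nonroutine step, namely that $\Pr[r\text{ survives Phase II}\mid\ell^*\text{ in Phase III}]\geq d/c-o(1)$ must be justified under the conditioning; this is precisely what the paper defers to its Appendix~\ref{sub:PhaseIII} (Lemma~\ref{lem:RinPhaseIII}), whose per-step bound $\Pr[\lnot Q_k\mid\cdot]\leq 1/k$ is derived after fixing the identity of the last node and the set arriving in the first $k$ steps, so it is unaffected by placing $\ell^*$ in Phase~III. The only cosmetic difference is your aside distinguishing $M^*$ from $\psi$: the theorem already takes $\psi$ to be the optimal matching achieving the minimum in the definition of $\eta$, so $|M^*|\leq|\psi|$ is not needed.
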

\noindent If $\lambda$ is small, we roughly obtain a bound of $(d-1)/2c$ in case $\eta$ is small as well, and a bound of $\ln(c/d)/c$ if $\eta$ is large. Moreover, when $c/d \rightarrow 1$, we approach a bound of $1/2$ in case $\eta$ is small, whereas the worst-case guarantee $\ln(c/d)/c$ for large $\eta$ then increases.
\begin{proof}[Proof of Theorem \ref{thm:bipartite}]
We provide two lower bounds on the expected value of the matching $M$ output by Algorithm \ref{alg:bipartite}. 

First of all, the analysis of the algorithm of Kesselheim et al. \cite{KRTV2013} can be generalized to the setting we consider in the first and second phase. In particular, their algorithm then yields a
$$
\left(\frac{1}{c} - \frac{1}{n}\right) \ln\left(\frac{c}{d}\right)\text{-competitive approximation}.
$$
For completeness, we present a proof of this statement in Appendix \ref{app:kesselheim}.\\

\noindent The second bound we prove on the expected value of the matching $M$ is based on the threshold greedy algorithm we use in the third phase. Let $\psi \in \mathcal{M}(G,w)$ be an optimal (offline) matching, with objective value $\text{OPT}$, and suppose that
$$
\eta = \max_{r \in R} |\psi_r - p_r^*| < \lambda. 
$$
The proof of the algorithm in \cite{KRTV2013} analyzes the expected
value of the online vertices in $L$. Here we take a different approach
and study the expected value of the edge weights adjacent to the nodes in $r \in R$.
Fix some $r \in R$ and consider the edge $(\ell,r)$ that is matched to $r$ in the optimal offline matching $\psi$ (if any). 

Let $X_r$ be a random variable denoting the value of node $r \in R$ in the online matching $M$ chosen by Algorithm \ref{alg:bipartite}. Let $Y_{\ell}$ be a random variable that denotes the value of node $\ell \in L[\psi]$ in the online matching $M$. It is not hard to see that
\begin{equation}\label{eq:expected}
\mathbb{E}(M) = \sum_{r \in R} \mathbb{E}(X_r) \ \ \text{ and } \ \ \mathbb{E}(M) \geq \sum_{\ell \in L[\psi]} \mathbb{E}(Y_{\ell}).
\end{equation}
For the inequality, note that for any fixed permutation the value of
the obtained matching is always larger or equal to the sum of the values that were matched to the nodes $\ell \in L[\psi]$.

Now, consider a fixed $\ell$ and $r$ for which the edge $(\ell,r)$ is contained in $\psi$. We will lower bound the expectation $\mathbb{E}(X_r + Y_{\ell})$ based on the expected value these nodes would receive, roughly speaking, if they get matched in the third phase. Therefore, suppose for now that $r$ did not get matched in the second phase, and that $\ell$ appears in the part of the uniformly random permutation considered in the third phase. We will later lower bound the probability with which the events occur. By definition of the greedy threshold algorithm, we know that at the end of Phase III either node $r \in R$ is matched, or otherwise at least node $\ell$ is matched to some other $r' \in R$ for which 
$$
w(\ell,r') \geq w(\ell,r) \geq p_r^* - \eta\geq p_r^* - \lambda. 
$$
To see this, note that when the vertex $\ell$ arrived, there was the option to match it to $r$, as this node is assumed not to have been matched in the second phase. So, there are the following three cases: either $\ell$ got matched to $r$, or it got matched to some other $r'$ for which $w(\ell,r') \geq w(\ell,r)\geq p_r^* - \lambda$ or $r$ was matched earlier during the third phase  to some other $\ell'$ for which $w(\ell',r) \geq p_r^* - \lambda$. 

Looking closely at the analysis of Kesselheim et al. \cite{KRTV2013}, see Appendix~\ref{sub:PhaseIII}, it follows that the probability that a fixed node $r$ did not get matched in the second phase satisfies 
$$
\text{P}(r \text{ was not matched in Phase II}) \geq \frac{d}{c} - o(1).
$$
This lower bound is true independent of whether or not $\ell$ appeared in Phase III, or the first two phases (see Appendix \ref{app:kesselheim}). This former event holds with probability $(1 - 1/d)$. Therefore,
$$
\text{P}(r \text{ was not matched in Phase II \textit{and} } \ell \text{ arrives in Phase III}) \geq \left(\frac{d}{c} - o(1)\right)\left(1 - \frac{1}{d}\right) = \frac{d-1}{c} - o(1).
$$
Furthermore, we have from an earlier argument that under this condition either $X_r\geq p_r^* - \lambda$ or $Y_\ell\geq p_r^* - \lambda$. This implies that
\begin{equation}\label{eq:r_ell}
\mathbb{E}(X_r + Y_{\ell}) \geq\left(\frac{d-1}{c} - o(1)\right) (p_r^* - \lambda),
\end{equation}
and combining this with \eqref{eq:expected}, we find
$$
2\cdot \mathbb{E}(M) \geq \left(\frac{d-1}{c} - o(1)\right) \sum_{r \in R[\psi]} (p_r^* - \lambda) \geq \left(\frac{d-1}{c} - o(1)\right) (\text{OPT} - (\lambda + \eta)|\psi|)
$$
assuming that $\text{OPT} - (\lambda + \eta)|\psi| \geq 0$.
Rewriting this gives
$$
\mathbb{E}(M) \geq \left(\frac{d-1}{2c} - o(1)\right) \left(1 - \frac{(\lambda + \eta)|\psi|}{\text{OPT}}\right) \cdot \text{OPT},
$$
which yields the desired bound.
\end{proof}

In order to get close to a $1/2$-approximation in case the predictions are (almost) perfect, we have to choose both $c$ and $d$ very large, as well as the ratio $c/d$ close to $1$ (but still constant). It is perhaps also interesting to note that Theorem \ref{thm:bipartite} does not seem to hold in case we interchange the second and third phase. In particular, if the predictions are too low, we most likely match up too many nodes in $r \in R$ already in the second phase (that now would execute the threshold greedy algorithm).

\subsection{Truthful mechanism for single-value unit-demand domains}
Algorithm \ref{alg:bipartite} can be turned into a truthful mechanism for \emph{single-value unit-demand domains}. Here, every node arriving online corresponds to an \emph{agent} $i$ that is interested in a subset $A_i$ of the \emph{items} in $R$ (i.e., her neighbors in $R$) between which she is indifferent. That is, agent $i$ has a value $v_i$ for all items in $A_i$, and zero for all other items. Whenever an agent arrives, she reports a value $v_i'$ for the items in $A_i$ that she is interested in (we assume that these sets are common knowledge). Based on the value $v_i'$, the mechanism decides if it wants to allocate one of the items in $A_i$ (not yet assigned to any agent) to $i$, and, if so, sets a price $\rho_i$ for this item that is charged to agent $i$. The goal is to choose an allocation that maximizes the social welfare (which is the total value of all assigned items). 

The goal of the agents is to maximize their utility which is their valuation minus the price that they are charged by the mechanism for the item assigned to them (if any).
We want to design a mechanism that incentivizes agents to report their true value $v_i$.

In Appendix \ref{app:truthful}, we provide formal definitions of all notions and show that Algorithm \ref{alg:bipartite} can be turned into a truthful mechanism for which its social welfare guarantee is $g_{\lambda,c,d}(\eta)$ as in Theorem \ref{thm:bipartite}. In particular, we do this by exploiting some of the freedom we have in Algorithm \ref{alg:bipartite}.

\subsection{Randomized algorithm}
If we allow randomization, we can give better approximation guarantees than the algorithm given in the previous section by using a convex combination of the algorithm of Kesselheim et al. \cite{KRTV2013}, and the randomized algorithm of Huang et al. \cite{HTWZ2019} for online vertex-weighted bipartite matching with uniformly random arrivals.

We give a simple, generic way to reduce an instance of online bipartite matching with predictions $p_r^*$ for $r \in R$  to an instance of online vertex-weighted bipartite matching with vertex weights (that applies in case the predictions are accurate). This reduction works under both a uniformly random arrival order of the vertices, as well as an adversarial arrival order. 

Suppose we are given an algorithm $\mathcal{A}$ for instances of the online vertex-weighted bipartite matching problem. Now, for the setting with predictions, fix some parameter $\lambda > 0$ up front. Whenever a vertex $\ell$ arrives online we only take into account edges $(\ell,r)$ with the property that $w(\ell,r) \in [p_r^* - \lambda,p_r^* + \lambda]$, and ignore all edges that do not satisfy this property. We then make the same decision for $\ell$ as we would do in algorithm $\mathcal{A}$ (only considering the edges that satisfy the interval constraint given above) based on assuming those edges have weight $p_r^*$. Then the matching that algorithm $\mathcal{A}$ generates has the property that its objective value is close (in terms of $\lambda$ and $\eta$) to the objective value under the original weights (and this remains to hold true in expectation in case of a uniformly random arrival order of the vertices).

The $0.6534$-competitive algorithm of Huang et al. \cite{HTWZ2019} is the currently best known randomized algorithm $\mathcal{A}$,  for online vertex-weighted bipartite matching with uniformly random arrivals, and can be used for our purposes. Detailed statements will be given in the journal version of this work.

\section{Deterministic graphic matroid secretary algorithm}\label{sec:graphic}
In this section we will study the graphic matroid secretary
problem. Here, we are given a (connected) graph $G = (V,E)$ of which
the edges in $E$ arrive online. There is an edge weight function $w :
2^E \rightarrow \R_{\geq 0}$ and a weight is revealed if an edge
arrives. The goal is to select a forest (i.e., a subset of edges that
does not give rise to a cycle) of maximum weight. The possible forests
of $G$ form the \emph{independent sets of the graphical matroid on
  $G$}. It is well-known that the offline optimal solution of this
problem can be found by the greedy algorithm that orders all the edge weights in decreasing order, and selects elements in this order whenever possible.

We will next explain the predictions that we consider in this
section. For every node $v \in V$, we let $p_v^*$ be a prediction for
the \emph{maximum edge weight $\max_{u \in \mathcal{N}(v)} w_{uv}$} adjacent to $v \in V$. The prediction error is defined  by
$$
\eta = \max_{v \in V} \big|p_v^* - w_{\max}(v)\big|,\quad\text{where}\quad w_{\max}(v)=\max_{u \in \mathcal{N}(v)} w_{uv}.
$$

\begin{remark}
Although the given prediction is formulated independently of any
optimal solution (as we did in the previous sections), it is
nevertheless equivalent to a prediction regarding the maximum weight
$w_{uv}$ adjacent to $v \in V$ in an optimal (greedy) solution. To see
this, note that the first time the offline greedy algorithm encounters
an edge weight $w_{uv}$ adjacent to $v$, it can always be added as currently there is no edge adjacent to $v$. So adding the edge $\{u,v\}$ cannot create a cycle.
\end{remark}

Before we give the main result of this section, we first provide a \emph{deterministic} $(1/4 - o(1))$-competitive algorithm for the graphic matroid secretary problem in Section \ref{sec:det_graphic}, which is of independent interest. Asymptotically, this is an improvement over the algorithm of Soto et al. \cite{STV2018} in the sense that we do not require randomness in our algorithm. We then continue with an algorithm incorporating the predictions in Section \ref{sec:predic_graphic}.

\subsection{Deterministic approximation algorithm}\label{sec:det_graphic}
In this section, we provide a deterministic $(1/4 -
o(1))$-approximation for the graphic matroid secretary problem. For a
given undirected graph $G = (V,E)$, we use the bipartite graph
interpretation that was also used in  \cite{BIKK2018}. That is, we
consider the bipartite graph $B_G = (E \cup V, A)$, where an edge
$\{e,v\} \in A$, for $e \in E$ and $v \in V$, if and only if $v \in
e$. Note that this means that every $e = \{u,v\}$ is adjacent to
precisely $u$ and $v$ in the bipartite graph $B_G$. Moreover, the edge
weights for $\{e,v\}$ and $\{e,u\}$ are both $w_e$ (which is revealed upon arrival of the element $e$).\footnote{We call the edges in $E$ (of the original graph $G$) elements, in order to avoid confusion with the edges of $B_G$.} We emphasize that in this setting, the $e \in E$ are the elements that arrive online.

Algorithm \ref{alg:graphic} is very similar to the algorithm in \cite{KRTV2013} with the only difference that we allow an edge $\{e,u\}$ or $\{e,v\}$ to be added to the currently selected matching $M$ in $B_G$ if and only if \emph{both} nodes $u$ and $v$ are currently not matched in $M$. In this section we often represent a (partial) matching in $B_G$ by a directed graph (of which its undirected counterpart does not contain any cycle). In particular, given some matching $M$ in $B_G$, we consider the directed graph $D_M$ with node set $V$. There is a directed edge $(u,v)$ if and only if $\{e,v\}$ is an edge in $M$, where $e = \{u,v\} \in E$. Note that every node in $D_M$ has an in-degree of at most one as $M$ is a matching.

Using the graph $D_M$ it is not too hard to see that if both $u$ and $v$ are not matched in the current matching $M$, then adding the edge $\{e,u\}$ or $\{e,v\}$ can never create a cycle in the graph formed by the elements $e \in E$ matched up by $M$, called $E[M]$, which is the currently chosen independent set in the graphic matroid. This subgraph of $G$ is precisely the undirected counterpart of the edges in $D_M$ together with $\{u,v\}$. For sake of contradiction, suppose adding the $\{u,v\}$ to $E[M]$ would create an (undirected) cycle $C$. As both $u$ and $v$ have in-degree zero (as they are unmatched in $M$), it follows that some node on the cycle $C$ must have two incoming directed edges in the graph $D_M$. This yields a contradiction.

We note that, although $u$ and $v$ being unmatched is sufficient to guarantee that the edge $\{u,v\}$ does not create a cycle, this is by no means a necessary condition.

\IncMargin{1em}
\begin{algorithm}
\SetKwData{Left}{left}\SetKwData{This}{this}\SetKwData{Up}{up}
\SetKwFunction{Union}{Union}\SetKwFunction{FindCompress}{FindCompress}
\SetKwInOut{Input}{Input}\SetKwInOut{Output}{Output}
\Input{Bipartite graph $G_B = (E \cup V, A)$ for undirected weighted graph $G = (V,E)$ with $|E| = m$.}
\Output{Matching $M$ of $G_B$ corresponding to forest in $G$.}\medskip

\textbf{Phase I:} \hfill  \\
\For{$i = 1,\dots,\lfloor m/c \rfloor$}{
Observe arrival of element $e_i$, but do nothing.}
Let $E' = \{e_1,\dots,e_{\lfloor m/c \rfloor}\}$ and $M = \emptyset$.\\
\textbf{Phase II:} \\
\For{$i = \lfloor m/c \rfloor + 1, \dots, m$}{
Let $E' = E' \cup e_i$.\\
Let $M^{i} = \text{optimal matching on } B_G[E' \cup V]$.\\
Let $a^{i} = \{e_i,u\}$ be the edge assigned to $e_i = \{u,v\}$ in $M^{i}$ (if any).\\
\If{$M \cup a^i$ is a matching and both $u$ and $v$ are unmatched in $M$}{
Set $M = M \cup a^i$.}}
\caption{Deterministic graphic matroid secretary algorithm}
\label{alg:graphic}
\end{algorithm}
\DecMargin{1em}

\begin{theorem}\label{thm:detMatroid}
Algorithm \ref{alg:graphic} is a deterministic $(1/4 - o(1))$-competitive algorithm for the graphic matroid secretary problem.
\end{theorem}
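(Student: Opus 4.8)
The plan is to mirror the analysis of Kesselheim et al.\ \cite{KRTV2013}, run on the bipartite graph $B_G = (E \cup V, A)$ with the \emph{elements} $e \in E$ playing the role of the online side. We instantiate Algorithm~\ref{alg:graphic} with $c = 2$, so that Phase~I observes the first $s := \lfloor m/2 \rfloor$ elements and Phase~II handles positions $s+1, \dots, m$. As argued just before the theorem, the extra requirement that \emph{both} endpoints of the arriving element be currently unmatched guarantees that $E[M]$ is always a forest, so the output is feasible; it remains to bound $\mathbb{E}(w(M))$. Let $F^*$ be a maximum-weight forest, so $\text{OPT} = w(F^*)$. Write $e_i$ for the element at position $i$, $E_i$ for the set of the first $i$ arrived elements, $M^i$ for the maximum-weight matching of $B_G[E_i \cup V]$, and recall $a^i$ is the $B_G$-edge assigned to $e_i$ in $M^i$ (if any). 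Then
\[
\mathbb{E}(w(M)) = \sum_{i = s+1}^{m} \mathbb{E}\!\left( w(e_i)\cdot \mathbf{1}\{e_i \text{ matched in } M^i\}\cdot \mathbf{1}\{\text{both endpoints of } e_i \text{ are free at step } i\}\right).
\]

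I would bound the summand by splitting it into a ``value'' part and a ``survival'' part. For the value part, note that $F^* \cap E_i$ is a forest and hence can be saturated by a matching of $B_G[E_i \cup V]$ of weight $w(F^* \cap E_i)$ (root each tree and match every edge to its endpoint farther from the root); thus $w(M^i) \geq w(F^* \cap E_i)$, and since $E_i$ is a uniformly random $i$-subset, $\mathbb{E}(w(M^i)) \geq \tfrac{i}{m}\text{OPT}$. As $e_i$ is uniform in $E_i$ given $E_i$, this yields $\mathbb{E}\big(w(e_i)\mathbf{1}\{e_i \text{ matched in } M^i\}\big) = \mathbb{E}\big(\tfrac1i w(M^i)\big) \geq \tfrac{1}{m}\text{OPT}$. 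The survival part is the crux, and it is where the analysis diverges from \cite{KRTV2013}: I would prove that, \emph{conditioned on the set $E_i$ and on the identity of $e_i$, both endpoints of $e_i$ are free in $M$ when $e_i$ arrives with probability at least $\tfrac{(s-1)s}{(i-1)(i-2)}$.} To see this, adapt Kesselheim's conditioning argument over the nested prefixes: for $s < j < i$, a given vertex can only become matched in step $j$ if $e_j$ is the (unique, if any) element assigned to that vertex in $M^j$; since $M^j$ is a matching, at most one element is assigned to each of the \emph{two} endpoints of $e_i$, so conditioned on the history the probability that step $j$ matches either endpoint is at most $2/j$ (here one uses that the additional ``both free'' constraint of Algorithm~\ref{alg:graphic} only makes matching a vertex harder, so the bound of \cite{KRTV2013} carries over). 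Multiplying over $j = s+1,\dots,i-1$ gives $\prod_{j=s+1}^{i-1}(1 - 2/j) = \tfrac{(s-1)s}{(i-1)(i-2)}$ by telescoping (minor index shifts affect only lower-order terms).

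Combining the two estimates is now routine: conditioned on $E_i$ and $e_i$, the factor $w(e_i)\mathbf{1}\{e_i \text{ matched in } M^i\}$ is determined and the survival bound is a constant, so
\[
\mathbb{E}(w(M)) \geq \frac{\text{OPT}}{m}\sum_{i=s+1}^{m} \frac{(s-1)s}{(i-1)(i-2)} = \frac{(s-1)s}{m}\left(\frac{1}{s-1} - \frac{1}{m-1}\right)\text{OPT} = \frac{s(m-s)}{m(m-1)}\,\text{OPT},
\]
using $\sum_{i=s+1}^{m}\tfrac{1}{(i-1)(i-2)} = \tfrac{1}{s-1} - \tfrac{1}{m-1}$. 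With $s = \lfloor m/2 \rfloor$ the coefficient $\tfrac{s(m-s)}{m(m-1)}$ tends to $1/4$ as $m \to \infty$, establishing $(1/4 - o(1))$-competitiveness; the choice $s \approx m/2$ is exactly the one maximizing $s(m-s)/m^2$. The main obstacle is the survival lemma: one must redo Kesselheim's delicate conditioning argument carefully enough to verify both that the extra algorithmic constraint does not hurt and that the per-step loss degrades only from $1/j$ to $2/j$ (the factor $2$ being precisely what turns the $1/e$ guarantee into $1/4$); the remaining ingredients are the standard random-order arguments of \cite{KRTV2013} together with the forest-feasibility observation already made above.
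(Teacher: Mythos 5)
Your proposal is correct and follows essentially the same route as the paper's own proof. You work in the same bipartite representation $B_G$, fix $c=2$ because it maximizes $(c-1)/c^2$, split the per-step contribution into a value term and a survival term, and obtain the survival bound via the same union-bound-over-two-endpoints argument that degrades the Kesselheim per-step factor from $1/j$ to $2/j$; the resulting telescoping product $\prod_{j=s+1}^{i-1}(1-2/j) = \tfrac{(s-1)s}{(i-1)(i-2)}$ and the closing summation $\sum_{i=s+1}^{m}\tfrac{(s-1)s}{(i-1)(i-2)} = \tfrac{s(m-s)}{m-1}$ are precisely what the paper's Lemma on $Pr[E^{(\ell)}]$ and Claims on the sums establish. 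The only cosmetic differences are (i) you lower-bound $w(M^i)$ via the forest $F^*\cap E_i$ converted into a $B_G$-matching by rooting trees, while the paper restricts a fixed maximum $B_G$-matching $M^\star$ to $E_i$ — both give $\mathbb{E}[w(M^i)]\ge \tfrac{i}{m}\text{OPT}$ since $w(M^\star)\ge\text{OPT}$; and (ii) you condition on $(E_i,e_i)$ and multiply by the survival probability, whereas the paper writes $\mathbb{E}[A_\ell]=\mathbb{E}[w(e^{(\ell)})\mid E^{(\ell)}]\cdot Pr[E^{(\ell)}]$ — these decouplings are equivalent here. No substantive gap.
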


\subsection{Algorithm including predictions}\label{sec:predic_graphic}
In this section we will augment Algorithm \ref{alg:graphic} with the
predictions for the maximum edge weights adjacent to the nodes in $V$. We will use the bipartite graph representation $B_G$ as introduced in Section \ref{sec:graphic}. Algorithm \ref{alg:graphic_predic} consists of three phases, similar to Algorithm \ref{alg:bipartite}. 

Instead of exploiting the predictions in Phase III, we already exploit them in Phase II for technical reasons.\footnote{One could have done the same in Algorithm \ref{alg:bipartite}, but this leads to a worst-case bound that is worse than $\ln(c/d)/c$.} Roughly speaking, in Phase II, we run a greedy-like algorithm that selects for every node $v \in V$ at most one edge that satisfies a threshold based on the prediction for node $v$. In order to guarantee that we do not select too many edges when the predictions are poor (in particular when they are too low), we also include a `'fail-safe' threshold based on the edges seen in Phase I.

\IncMargin{1em}
\begin{algorithm}[t]
\SetKwData{Left}{left}\SetKwData{This}{this}\SetKwData{Up}{up}
\SetKwFunction{Union}{Union}\SetKwFunction{FindCompress}{FindCompress}
\SetKwInOut{Input}{Input}\SetKwInOut{Output}{Output}
\Input{Bipartite graph $G_B = (E \cup V, A)$ for undirected graph $G = (V,E)$ with $|E| = m$. Predictions $p = (p_1^*,\dots,p_n^*)$. Confidence parameter $0 \leq \lambda \leq \min_i p_i^*$ and $c > d \geq 1$.}
\Output{Matching $M$ of $G_B$ corresponding to forest in $G$.}\medskip
\textbf{Phase I:} \hfill  \\
\For{$i = 1,\dots,\lfloor m/c \rfloor$}{
Let $e_i = \{u,v\}$.\\
Set $t_v = \max\{t_v,w(u,v)\}$ and $t_u = \max\{t_u,w(u,v)\}$.}
Let $E' = \{e_1,\dots,e_{\lfloor m/c \rfloor}\}$ and $M = \emptyset$.\\
\textbf{Phase II: }\\
\For{$i = \lfloor m/c \rfloor + 1, \dots, \lfloor m/d \rfloor$}{
Let $e_i = \{u,v\}$, $S = \{x \in \{u,v\} : x \notin E[M] \text{ and } w(u,v) \geq \max \{t_x, p_x^* - \lambda\}\}$ and $y_i = \text{argmax}_{x \in S} p_x^* - \lambda$.\\
\If{$E[M] \cup \{e_i\}$ does not contain a cycle}
{
Set $M = M \cup \{e_i,y_i\}$.
}
}
\textbf{Phase III:} \\
\For{$i = \lfloor m/d \rfloor + 1, \dots, m$}{
Let $E' = E' \cup e_i$.\\
Let $M^{i} = \text{optimal matching on } B_G[E' \cup V]$.\\
Let $a^{i} = \{e_i,u\}$ be the edge assigned to $e_i = \{u,v\}$ in $M^{i}$ (if any).\\
\If{$M \cup a^i$ is a matching and both $u$ and $v$ are unmatched in $M$}{
Set $M = M \cup a^i$.}}
\caption{Graphic matroid secretary algorithm with predictions}
\label{alg:graphic_predic}
\end{algorithm}
\DecMargin{1em}

\begin{theorem}\label{thm:graphic}
	For any $\lambda \ge 0$ and $c>d\ge 1$,
	there is a deterministic algorithm for the graphic matroid secretary
	problem that is asymptotically $g_{c,d,\lambda}(\eta)$-competitive in
	expectation, where
	\[
	g_{c,d,\lambda}(\eta) = \left\{
	\begin{array}{ll}
	\max\left\{\frac{d-1}{c^{2}},\frac{1}{2}\left(\frac{1}{d} - \frac{1}{c} \right)\left(1 - \frac{2(\lambda + \eta)|V|}{\text{OPT}}\right) \right\} & \text{ if } 0 \leq \eta < \lambda,\\
	\frac{d-1}{c^{2}} & \text{ if } \eta \geq \lambda.
	\end{array}\right.
	\]
\end{theorem}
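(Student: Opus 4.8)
The plan is to establish two lower bounds on $\mathbb{E}[w(M)]$, where $M$ is the matching output by Algorithm~\ref{alg:graphic_predic} (identified with the corresponding forest $E[M]$), and to take their maximum. The first bound, $\frac{d-1}{c^2}-o(1)$, will come from Phase~III alone and will hold for every prediction error; the second, $\tfrac12\bigl(\tfrac1d-\tfrac1c\bigr)\bigl(1-\tfrac{2(\lambda+\eta)|V|}{\text{OPT}}\bigr)$, will come from Phase~II and only when $\eta<\lambda$. This mirrors the two-bound structure of Theorems~\ref{thm:classical} and~\ref{thm:bipartite}.

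For the worst-case bound I would analyze Phase~III essentially as the deterministic algorithm of Theorem~\ref{thm:detMatroid}, applied to the elements of rounds $\lfloor m/d\rfloor+1,\dots,m$ with the first $\lfloor m/c\rfloor$ elements playing the role of the observation window, the only difference being that some nodes may already carry matched edges produced in Phase~II. For a fixed round $i>\lfloor m/d\rfloor$, conditioning on the set $E'$ revealed so far and on the identity of $e_i$ (uniform over $E'$), the optimal matching $M^i$ on $B_G[E'\cup V]$ has expected weight at least $\tfrac{|E'|}{m}\,\text{OPT}$ by the matroid exchange property, so $\mathbb{E}[w(a^i)]\ge\text{OPT}/m$. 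Then, as in the proof of Theorem~\ref{thm:detMatroid}, each of the two endpoints of $e_i$ has in-degree $0$ in $D_M$ at round $i$ with probability at least $\tfrac{\lfloor m/c\rfloor}{i-1}-o(1)$, and the two events combine to give probability at least $\bigl(\tfrac{m/c}{i-1}\bigr)^2-o(1)$ of adding $a^i$. The one genuinely new ingredient is controlling Phase~II: a node $v$ acquires in-degree $\ge 1$ during Phase~II only if some element incident to $v$ of weight exceeding the Phase-I record $t_v$ arrives in Phase~II; since every such edge is absent from Phase~I, each lies in Phase~III with asymptotic probability $\tfrac{c(d-1)}{d(c-1)}$ (essentially independently), and summing the resulting geometric series shows that $v$ still has in-degree $0$ in $D_M$ after Phase~II with probability at least $d/c-o(1)$. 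This is exactly the factor that makes the effective observation size for Phase~III equal to $m/c$ even though its active window only starts at $m/d$, and summing $\sum_{i=\lfloor m/d\rfloor+1}^{m}\tfrac{\text{OPT}}{m}\bigl(\tfrac{m/c}{i-1}\bigr)^2\to\tfrac{d-1}{c^2}\,\text{OPT}$ gives the claim.

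For the prediction-based bound (when $\eta<\lambda$) I would first relate the predictions to OPT: rooting an optimal forest and charging every non-root node to its parent edge gives $\sum_{v}w_{\max}(v)\ge\text{OPT}$, and since $|p_v^*-w_{\max}(v)|\le\eta<\lambda$ we get $p_v^*-\lambda\le w_{\max}(v)$ for all $v$ while $\sum_v(p_v^*-\lambda)\ge\text{OPT}-(\lambda+\eta)|V|$. Next I would analyze Phase~II as the threshold greedy algorithm (Algorithm~\ref{alg:greedy}) run on the sub-instance of elements arriving in Phase~II, with node-thresholds $\tau_v=\max\{t_v,p_v^*-\lambda\}$; the key structural fact, following the $\tfrac12\sum_r t_r$ analysis sketched after Algorithm~\ref{alg:greedy} (Aggarwal et al.~\cite{AGKA2011}), is that the Phase-II matching has weight at least $\tfrac12$ times the weight of any acyclic sub-matching of the Phase-II elements in which every incident node $v$ receives an edge of weight $\ge\tau_v$ — the cycle test never blocks such a selection because when the heaviest edge at $v$ is offered $v$ is still isolated in $E[M]$. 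Because $\tau_v\le w_{\max}(v)$, the (offline) Kruskal forest on the Phase-II elements supplies such a selection capturing, in expectation over the random split, a $\bigl(\tfrac1d-\tfrac1c\bigr)$-fraction of $\sum_v(p_v^*-\lambda)$; the unavoidable double counting when one edge is the heaviest at both of its endpoints, together with the single "root" node left uncovered in each tree, is what degrades the error term from $(\lambda+\eta)|V|$ to $2(\lambda+\eta)|V|$. Combining everything yields $\mathbb{E}[w(M)]\ge\tfrac12\bigl(\tfrac1d-\tfrac1c\bigr)\bigl(\text{OPT}-2(\lambda+\eta)|V|\bigr)-o(1)$.

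As with the earlier theorems, the main obstacle will be that the two phases must not interfere with each other. In one direction, the greedy Phase~II must not match so many nodes that Phase~III loses its worst-case guarantee; this is handled by the $d/c-o(1)$ survival estimate above, which crucially exploits that an edge beating a node's Phase-I record is automatically missing from Phase~I and hence lands in Phase~III with precisely the compensating probability. In the other direction, the Kesselheim-style Phase~III must not be able to undo the prediction-driven gains of Phase~II; this is handled by the ordering, running Phase~II first so that its matching is irrevocably locked in before Phase~III begins. The most error-prone part of a full write-up, I expect, is the bookkeeping in Phase~II for a node that gets "used up" (enters $E[M]$) by an incident edge that is light relative to that node's own threshold but heavy relative to its partner's; bounding this loss carefully is what forces the slightly weaker $2(\lambda+\eta)|V|$ error term rather than $(\lambda+\eta)|V|$.
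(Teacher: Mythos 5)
Your overall decomposition into a prediction-based Phase~II bound and a worst-case Phase~III bound is the same as the paper's, and your Phase~III discussion (survival probability $d/c-o(1)$ per endpoint after Phase~II, combined with a Kesselheim-style analysis to yield $\tfrac{d-1}{c^2}$) is headed in the right direction, though the paper makes it rigorous via a joint estimate $\Pr[\text{both $u$ and $v$ unmatched after Phase II}]\ge (d/c)^2-o(1)$, proved by conditioning on the Phase I/II element set and observing that a node survives Phase II iff the heaviest incident edge in that set landed in Phase~I.

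The genuine gap is in your Phase~II argument. You assert that ``the cycle test never blocks such a selection because when the heaviest edge at $v$ is offered $v$ is still isolated in $E[M]$,'' and you reduce the Phase~II analysis to the bipartite threshold-greedy $\tfrac12$-guarantee on that basis. This claim is false: $v$ being unmatched in the bipartite matching $M$ does \emph{not} imply $v$ is isolated in the forest $E[M]$. An edge $\{v,w\}$ can enter $E[M]$ matched to $w$, leaving $v$ unmatched but with positive degree. A concrete failure: let the Phase~II arrival order contain $\{a,b\},\{b,c\},\{c,d\},\{d,a\}$, matched in turn to $b,c,d$; when $\{d,a\}$ (say $e_{\max}(a)$) arrives, $a$ is unmatched but $E[M]\cup\{d,a\}$ is the $4$-cycle, so the test blocks it. The paper closes exactly this hole with its proxy argument (the claim that distinct unmatched $v,v'\in T$ have distinct proxies $u,u'$): when the cycle test blocks $e_{\max}(v)$, the other endpoint $u$ must already be matched by some edge $e'$ of weight $\ge p_u^*-\lambda$, and the chain $p_u^*+\eta\ge e_{\max}(u)\ge w(u,v)\ge p_v^*-\eta$ gives $2w(e')\ge [p_u^*-(\lambda+\eta)]+[p_v^*-(\lambda+\eta)]$; the injectivity of the proxy map is what delivers the factor $\tfrac12$. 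You correctly sensed that ``a node that gets used up by an incident edge that is light relative to its own threshold but heavy relative to its partner's'' is the error-prone spot, but the resolution requires this additional charging scheme, not the assertion that blocking never happens. Relatedly, your account of where the factor $2$ in $2(\lambda+\eta)|V|$ comes from (``double counting when one edge is heaviest at both endpoints'' and ``the root node'') does not match the actual source: one $(\lambda+\eta)$ comes from the two $\eta$-shifts in the proxy chain, and the second from bounding $\sum_v p_v^*\ge\mathrm{OPT}-\eta|V|$.
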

If $\lambda$ is small, we roughly obtain a bound of $(1/d - 1/c)/2$ in case $\eta$ is small, and a bound of $(d-1)/c^2$ if $\eta$ is large. Note that the roles of $c$ and $d$ have interchanged w.r.t. Algorithm \ref{alg:bipartite} as we now exploit the predictions in Phase II instead of Phase III. Roughly speaking, if $d \rightarrow 1$ and $c \rightarrow \infty$ we approach a bound of $1/2$ if the predictions are good, whereas the bound of $(d-1)/c^2$ becomes arbitrarily bad.

\begin{proof}[Proof of Theorem \ref{thm:graphic}]
As in the proof of Theorem \ref{thm:bipartite}, we provide two lower bounds on the expected value on the matching $M$ outputted by Algorithm \ref{alg:graphic_predic}. We first provide a bound of 
$$
\frac{1}{2}\left(\frac{1}{d} - \frac{1}{c} \right)\left(1 - \frac{2(\lambda + \eta)|V|}{\text{OPT}}\right)\text{OPT}
$$
in case the prediction error is small, i.e., when $\eta < \lambda$. For simplicity we focus on the case where for each $v \in V$, the weights of the edges adjacent to $v$ in $G$ are distinct.\footnote{For the general case, one can use a global ordering on all edges in $E$ and break ties where needed.}

For every $v \in V$, let $e_{max}(v)$ be the (unique) edge adjacent to $v$ with maximum weight among all edges adjacent to $v$. 
Consider the fixed order $(e_1,\dots,e_m)$ in which the elements in $E$ arrive online, and define $Q = \{v \in V : e_{max}(v) \text{ arrives in Phase II}\}$. We will show that the total weight of all edges selected in Phase II is at least
$
\frac{1}{2}\sum_{v \in Q} (p_v^* - (\lambda + \eta)).
$
Let $T \subseteq Q$ be the set of nodes for which the edge $e_{\max}(v)$ arrives in Phase II, but for which $v$ does not get matched up in Phase II.

In particular, let $v \in T$ and consider the step $\ell$ in Phase II in which $e_{\max}(v) = \{u,v\}$ arrived.
By definition of $\eta$, and because $\eta < \lambda$, we have
\begin{equation}\label{eq:wuv_emaxV}
w(u,v) = w_{\max}(v) \geq \max\{t_v, p_v^* - \eta\} \geq \max\{t_v, p_v^* - \lambda\},
\end{equation}
and so the pair $\{e_i,v\}$ is eligible (in the sense that $v \in S$). Since $v$ did not get matched, one of the following two holds:
\begin{enumerate}[i)]
\item The edge $e_{\max}(v)$ got matched up with $u$.
\item Adding the edge $\{e_{\max}(v),v\}$ to $M$ would have yielded a cycle in $E[M] \cup e_{\max}(v)$.
\end{enumerate}
Note that it can never be the case that we do \emph{not} match up $e_{\max}(v)$ to $u$ for the reason that it would create a cycle. This is impossible as both $u$ and $v$ are unmatched.

Now, in the first case, since $u$ is matched it must hold that 
$w(u,v)\geq\max\{t_u,p_u^*-\lambda\}$, and $p_{u}^* - \lambda \geq p_v^* - \lambda$
as $v$ was eligible to be matched up in the online matching $M$ (but it did not happen).
Further, combining \eqref{eq:wuv_emaxV} and the definition of $y_i$ in Phase II, yields
\begin{equation}
2w(u,v) \geq (p_{u}^* - \lambda) + (p_{v}^* - \eta) \geq (p_{u}^* - \eta - \lambda) + (p_{v}^* - \eta - \lambda).
\label{eq:in1}
\end{equation}
We call $u$ the \emph{i)-proxy} of $v$ in this case.

In the second case, if adding $e_{\max}(v)$ would have created an (undirected) cycle in the set of elements (i.e., the forest) selected so far, this yields a unique directed cycle in the graph $D_M$ defined in the previous section. If not, then there would be a node with two incoming arcs in $D_M$, as every arc on the cycle is oriented in some direction. This would imply that $M$ is not a matching.

Let $e' = \{u,z\} \in E$ be the element corresponding to the incoming arc at $u$ in $D_M$. Note that by assumption $u$ is already matched up, as $e_{\max(v)}$ creates a directed cycle in $D_{M \cup e_{\max}(v)}$. That is, we have $\{e',u\} \in M$. 
Then, by definition of $\eta$, we have
\begin{equation}\label{eq:pu_pv_eta}
\eta+p_u^*  \geq e_{\max}(u) \geq w(u,v) \geq p_v^* -\eta,
\end{equation}
where the last inequality holds by \eqref{eq:wuv_emaxV}.
Combining~\eqref{eq:pu_pv_eta} with the fact that $w(u,z)\geq p_u^*-\lambda$ (because $\{u,z\}$ got matched up to $u$), and the fact that 
$e_{\max}(v)\geq p_v^* - \eta$, by \eqref{eq:wuv_emaxV}, it follows that
\begin{equation}\label{eq:in2}
2w(u,z) \geq [p_{u}^* - (\lambda + \eta)] + [p_{v}^* - (\lambda + \eta)].
\end{equation}
In this case, we call $u$ the \emph{ii)-proxy} of $v$.\\

\begin{claim}
For any distinct $v, v' \in T$, their corresponding proxies $u$ and $u'$ are also distinct.
\end{claim}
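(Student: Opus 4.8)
The plan is to argue by contradiction. Suppose $v\neq v'$ are both in $T$ yet have the same proxy $u=u'$. I would first isolate the features of the construction that make the argument work: in either of the two cases defining a proxy, the proxy of a node $x\in T$ is the endpoint of $e_{\max}(x)$ distinct from $x$, and this endpoint is already matched in $M$ at the moment $e_{\max}(x)$ is processed in Phase II; since $M$ only grows during Phase II, that endpoint stays matched to one and the same element for the rest of Phase II, so it has a well-defined unique incoming arc in $D_M$ from then on. In particular $e_{\max}(v)=\{u,v\}$ and $e_{\max}(v')=\{u,v'\}$ are two \emph{distinct} elements. The structural facts I will lean on are: every node of $D_M$ has in-degree at most one; $D_M$ is acyclic, since its undirected image is the forest $E[M]$, so the set of nodes admitting a directed path into a fixed node forms a single ``reverse path'' emanating from it; and $M$ grows monotonically over Phase II, so arcs of $D_M$, once present, persist.

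I would then split into cases according to which of the two alternatives (i) or (ii) applies to $v$ and to $v'$. If both are of type i), then $u$ is matched simultaneously to the two distinct elements $e_{\max}(v)$ and $e_{\max}(v')$, which is impossible in a matching. If, say, $v$ is of type i) and $v'$ of type ii) (the opposite labelling is symmetric), one first observes that the step processing $v'$ must come strictly after the step processing $v$: type ii) for $v'$ needs $u$ already matched at that instant, whereas type i) for $v$ needs $u$ unmatched at its step, so the latter cannot precede the former. Hence at $v'$'s step the unique incoming arc of $u$ is $(v,u)$, induced by $e_{\max}(v)$. The directed cycle witnessing type ii) for $v'$ is a directed path $P$ from $v'$ to $u$ in $D_M$ closed by the added arc $(u,v')$; its penultimate vertex is forced to be $v$, and since $v\neq v'$ the path has length at least two, so $v$ carries an incoming arc on $P$. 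That makes $v$ matched during Phase II, contradicting $v\in T$.

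The substantive case, and the one I expect to be the main obstacle, is when both $v$ and $v'$ are of type ii). Here the unique incoming arc of $u$ is $(z,u)$ for some $z$, with $z\neq v,v'$ (were $z=v$, the element $e_{\max}(v)$ would be matched to $u$, i.e. case i)). Writing $\ell<\ell'$ for the two processing steps, the type-ii cycles give directed paths $P_v\colon v\to\cdots\to z\to u$ in $D_M^{(\ell)}$ and $P_{v'}\colon v'\to\cdots\to z\to u$ in $D_M^{(\ell')}$, where by monotonicity $D_M^{(\ell)}\subseteq D_M^{(\ell')}$ and $u$'s in-neighbour is $z$ at both times. Thus both $v$ and $v'$ admit directed paths to $z$ in $D_M^{(\ell')}$, so by acyclicity and the in-degree bound they lie on one common reverse path back from $z$; as $v\neq v'$, one of them lies strictly between the other and $z$. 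If $v$ lies on $P_{v'}$ it has an incoming arc in $D_M^{(\ell')}$, hence is matched during Phase II, contradicting $v\in T$; if $v'$ lies on $P_v$ it is matched in $D_M^{(\ell)}$, contradicting $v'\in T$. The genuinely delicate point throughout is the time-dependence: the two proxies are produced at different moments of Phase II, so one must consistently pass to the later matching and invoke monotonicity of $M$ and $D_M$ before drawing any structural conclusion; once that bookkeeping is set up, each case collapses to a short argument about reverse paths in an acyclic in-degree-$\le 1$ digraph.
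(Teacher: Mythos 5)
Your proof is correct and follows the same overall strategy as the paper's: a case distinction on the proxy types, exploiting that every node of $D_M$ has in-degree at most one and that $M$ grows monotonically over Phase II. The case of two type-i) proxies and the mixed case match the paper's argument (in the mixed case you first pin down that $v$'s step precedes $v'$'s, then note that the cycle witnessing type ii) for $v'$ must pass through $v$, forcing $v$ to be matched; a small wording slip aside — ``the latter cannot precede the former'' appears to have former and latter swapped — your stated conclusion is right). Where you genuinely improve on the paper is the case of two type-ii) proxies. The paper asserts that the two directed cycles immediately yield a vertex with two incoming arcs in $D_M$, but this is not quite right: since $D_M$ already has in-degree at most one, the directed paths from $v$ and from $v'$ to $u$ must be nested, one a suffix of the other along the unique reverse chain emanating from $u$, and the actual contradiction is that the starting vertex of the shorter path, $v$ or $v'$, acquires a \emph{single} incoming arc and is therefore matched in Phase II, contradicting membership in $T$. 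You carry out exactly this reverse-chain argument, making your treatment of that case more careful and complete than the one in the paper.
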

\begin{proof}
Suppose that $u = u'$. The proof proceeds by case distinction based on the proxy types.
\begin{enumerate}
	\item $u = u'$ is defined as i)-proxy for both $v$ and $v'$: This cannot happen as $u = u'$ would then have been matched up twice by Algorithm \ref{alg:graphic_predic}.	
	\item $u = u'$ is defined as ii)-proxy for both $v$ and $v'$: In this case there is a directed cycle with the arc $(u,v) = (u',v)$ and 
	another directed cycle with the arc $(u',v') = (u,v')$. Hence, there is a vertex with two incoming arcs in $D_M$.
	This also means that Algorithm \ref{alg:graphic_predic} has matched up a vertex twice, which is not possible.	
	\item $u = u'$ is defined as i)-proxy for $v$ and as ii)-proxy for $v'$: Then $e_{\max}(v)$, which gets matched up with $u = u'$, must have arrived before $e_{\max}(v')$. If not, then both $v'$ and $u = u'$ would have been unmatched when $e_{\max}(v')$ arrived and we could have matched it up with at least $v'$ (as this cannot create a cycle since $u = u'$ is also not matched at that time). This means that when $e_{\max}(v')$ arrived, the reason that we did not match it up to $v'$ is because this would create a directed cycle in $D_{M \cup e_{\max}(v')}$. But, as $u$ has an incoming arc from $v$ in $D_M$, this means that the directed cycle goes through $v$, which implies that $v$ \textit{did} get matched up in Phase II, which we assumed was not the case.

\end{enumerate}
This concludes the proof of the claim. \label{claim:proxy}
\end{proof}
Using Claim \ref{claim:proxy} in combination with \eqref{eq:in1} and \eqref{eq:in2}, we then find that 
$$
w[M_{II}] \geq \frac{1}{2}\sum_{v \in Q} [p_v^* - (\lambda + \eta)],
$$
where $M_{II}$ contains all the edges obtained in Phase II. Roughly speaking, for every edge $e_{\max}(v)$ that we cannot select in Phase II, there is some other edge selected in Phase II that `covers' its weight in the summation above (and for every such $v$ we can find a unique edge that has this property).

Now, in general, we have a uniformly random arrival order, and therefore, for every $v \in V$, the probability that edge $e_{\max}(v)$ arrives in Phase II equals
$
\frac{1}{d} - \frac{1}{c}.
$
Therefore, with expectation taken over the arrival order, we have
$$
\mathbb{E}[M_{II}] \geq \frac{1}{2}\left(\frac{1}{d} - \frac{1}{c} \right) \sum_{v \in V} (p_v^* - (\lambda+\eta)) \geq \frac{1}{2}\left(\frac{1}{d} - \frac{1}{c} \right)\left(1 - \frac{2(\lambda + \eta)|V|}{\text{OPT}}\right)\text{OPT}.
$$\medskip

\noindent We continue with the worst-case bound that holds even if the prediction error is large. We first analyze the probability that two given nodes $u$ and $v$ do not get matched up in Phase II. Here, we will use the thresholds $t_v$ defined in Algorithm \ref{alg:graphic_predic}. 

Conditioned on the set of elements $A$ that arrived in Phase I/II, 
the probability that the maximum edge weight adjacent to $v$, over all edges adjacent to $v$ in $A$, 
appears in Phase I is at equal to $(d/c)$.
This implies that $v$ will not get matched up in Phase II, 
by definition of Algorithm \ref{alg:graphic_predic}.
The worst-case bound of $(d-1)/c^2$ is proven in Appendix \ref{app:graphic_prediction}.
\end{proof}

\section{Conclusion}
Our results can be seen as the first evidence that online selection problems are a promising area
for the incorporation of machine learned advice following the frameworks of  \cite{LykourisV18,PurohitSK18}. Many interesting problems and directions remain open. 
For example, does there exist a natural prediction model for the general matroid secretary problem? It is 
still open whether this problem admits a constant-competitive algorithm. Is it possible to show that there exists 
an algorithm under a natural prediction model that is
constant-competitive for accurate predictions, and that is still
$O(1/\log(\log(r)))$-competitive in the worst case, matching the
results in \cite{FSZ2014,L2014}? 
Furthermore, although our results are optimal within
	this specific three phased approach, it remains an open question whether
	they are optimal in general for the respective problems.

\paragraph*{Acknowledgments:}
We would like to thank Kurt Mehlhorn for drawing our attention to
  the area of algorithms augmented with machine learned predictions.

\bibliographystyle{plain}
\bibliography{ref}

\appendix
\section{Secretary Problem}\label{app:secProb}

This section is organized as follows:

In Subsection~\ref{app:randAlgo}, we show that the discontinuity at $\eta = \lambda$, in Theorem~\ref{thm:classical}, can be smoothed out by selecting $\lambda$ according 
to some distribution with mean representing our confidence in the prediction $p^*$.
Further, we study the competitive ratio as a function of the prediction error $\eta$.

In Subsection~\ref{app:CompNRndAlgo}, we demonstrate that our deterministic Algorithm~\ref{alg:classical}
can achieve a better competitive ratio than its corresponding naive randomization.

\subsection{Randomized Algorithms}\label{app:randAlgo}
We note that Algorithm~\ref{alg:classical} although deterministic, can
be relatively easily transformed to an algorithm that picks the
confidence parameter $\lambda\in[0,p^*]$ according to some probability
distribution. Algorithm~\ref{alg:classical} is then the special case
where the whole probability mass of the distribution is at one point
in $[0,p^*]$.

This naturally gives rise to the question of whether there exists a
distribution that outperforms the deterministic algorithm. It can be
relatively easily seen, that the deterministic algorithm with $\lambda
= \eta$ is the best possible competitive ratio that can be obtained
with such an approach. Therefore, a randomized algorithm can at best
match the deterministic one with $\lambda = \eta$, and this happens
only in the case in the center of mass of the used distribution is at
$\eta$. Since $\eta$ is unknown to the algorithm, it is not possible
to select a distribution that outperforms any deterministic algorithm
for all possible $\eta$'s.

Despite that, it may still be advantageous to pick $\lambda$ at random
according to some distribution, in order to avoid the ``jump'' that
occurs at $\eta = \lambda$ in the competitive ratio of the
deterministic algorithm. In particular for an appropriate distribution
with density function $h_\lambda(x)$ the expected competitive ratio is
given by:
\begin{align*}
	\mathbb{E}_\lambda\left[g_{c,\lambda}(\eta)\right] =
	Pr[\lambda<\eta]\cdot\frac{1}{ce} + f(c)\int_\eta^{p^*}h_\lambda(x)\left(1-\frac{x+\eta}{OPT}\right)dx,
\end{align*}
which can be seen as a convex combination of two competitive
ratios. 

Some example distributions and how they compare to an algorithm that
selects $\lambda$ deterministically can be seen in Figure~\ref{fig:test}.

\begin{figure}[h]
	\centering
	\begin{subfigure}{0.5\textwidth}
		\centering
		\includegraphics[width=0.8\linewidth, bb=0 0 1100 850]{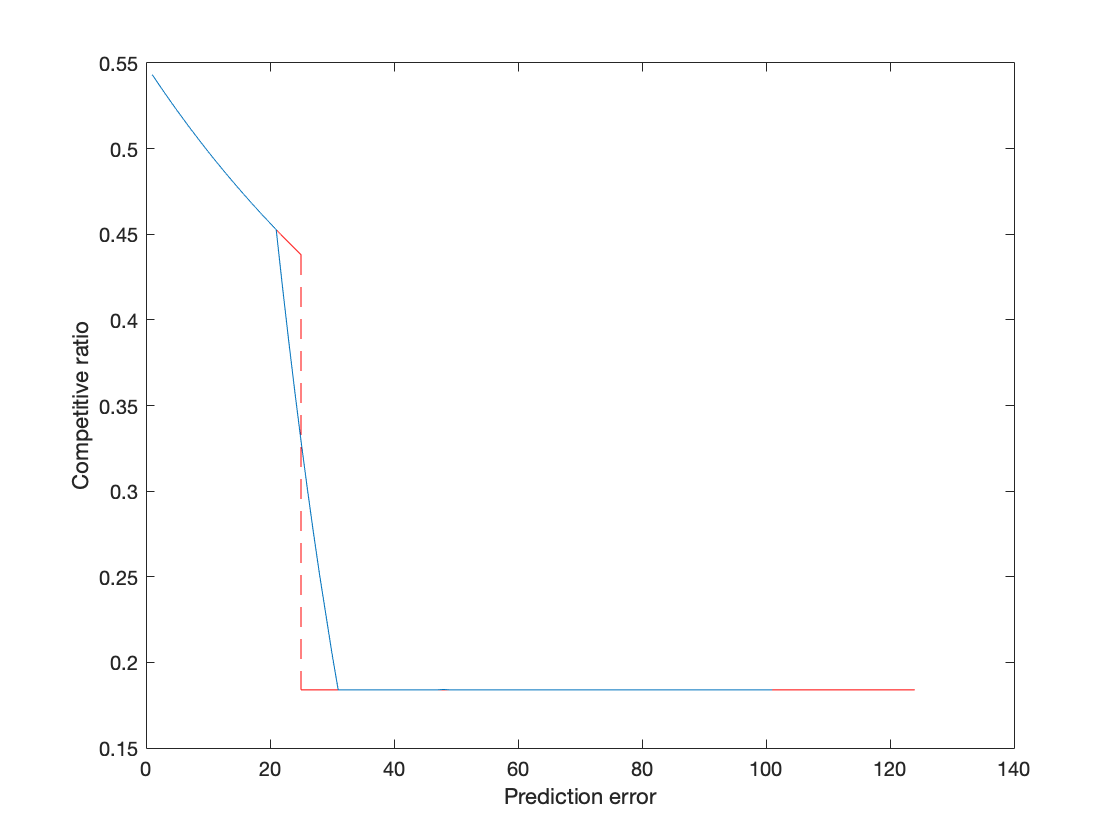}
		\caption{}
		\label{fig:sub1}
	\end{subfigure}%
	\begin{subfigure}{0.5\textwidth}
		\centering
		\includegraphics[width=0.8\linewidth, bb=0 0 1100 850]{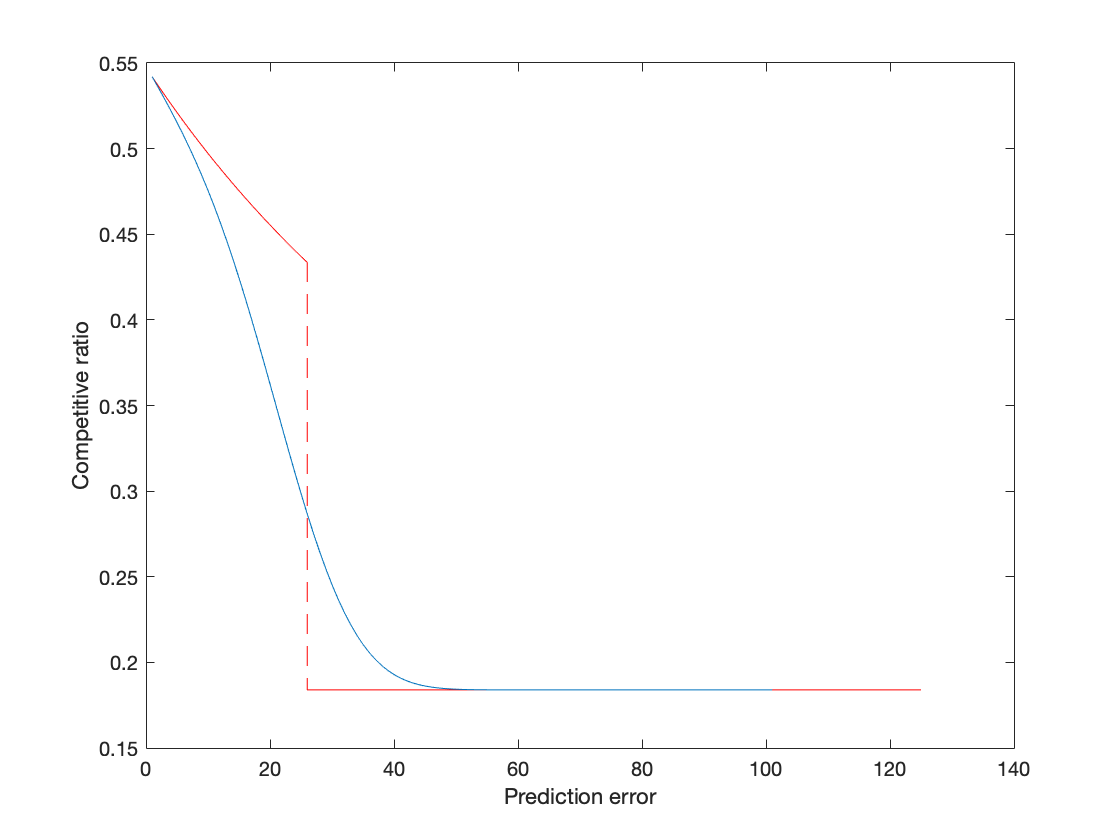}
		\caption{}
		\label{fig:sub2}
	\end{subfigure}\\
	\begin{subfigure}{0.5\textwidth}
		\centering
		\includegraphics[width=0.8\linewidth, bb=0 0 1100 850]{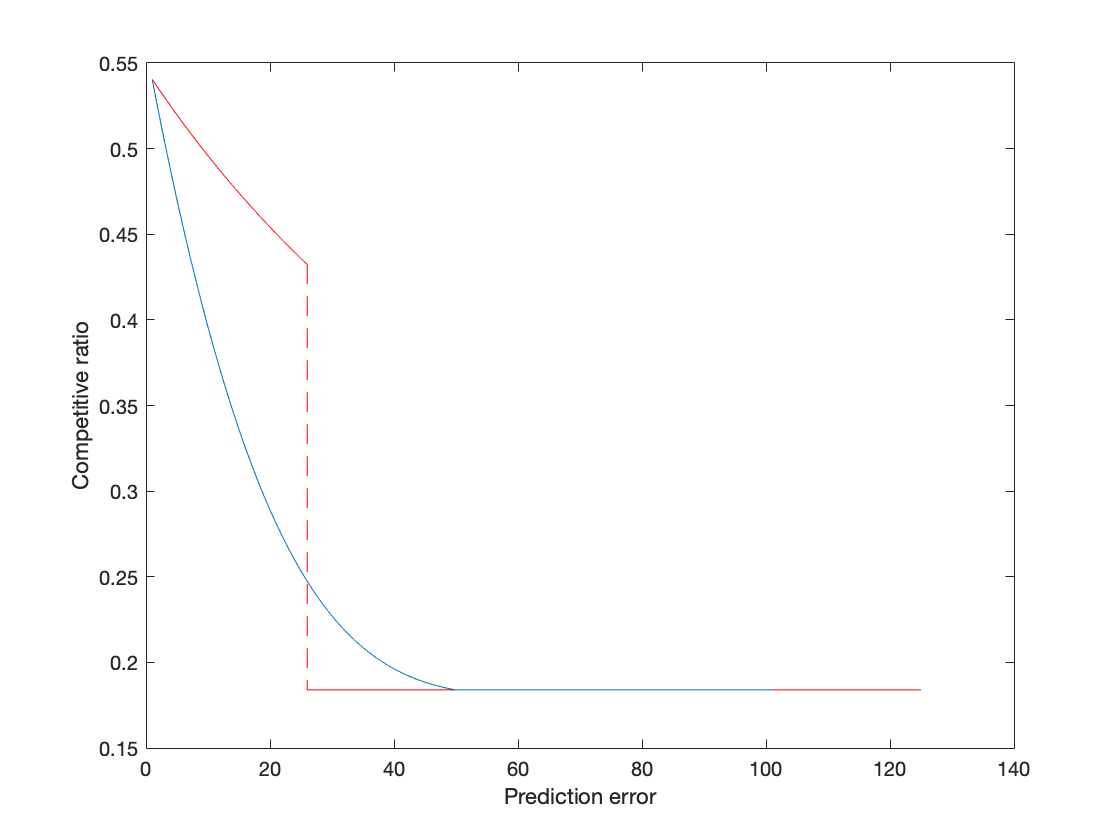}
		\caption{}
		\label{fig:sub3}
	\end{subfigure}%
	\caption{A comparison the deterministic (in red) and
		randomized (in blue) choice
		of $\lambda$. The $y$-axes are the competitive ratio, the
		$x$-axes are the prediction error $\eta$, and all figures consider
		$p^*=100$. In Subfigure~\ref{fig:sub1} we take $\lambda = 25$ for
		the deterministic algorithm and choose $\lambda$ according to the
		uniform distribution in $(20,30)$ for the randomized one. In
		Subfigure~\ref{fig:sub2} we have $\lambda\approx 25$ for the deterministic
		one and the normal distribution with mean $25$ and variance
		$10$. Finally, in Subfigure~\ref{fig:sub3} we have $\lambda \approx 25$, and
		the normal distribution with mean $0$ and variance $32$. All plots
		have $c=2$.}
	\label{fig:test}
\end{figure}

\subsection{Comparison between Algorithm~\ref{alg:classical} and its naive randomization}\label{app:CompNRndAlgo}

One natural question that arises from the bound in Theorem \ref{thm:classical} is whether one can significantly
improve the result using randomization. 

Here, we provide a brief comparison with the following naive randomization of Algorithm~\ref{alg:classical},
which randomly chooses between running the classical secretary problem without predictions
and (roughly speaking) the greedy prediction-based procedure in Phase II in Algorithm~\ref{alg:classical}.
That is, given $\gamma \in [0,1]$, with probability $\gamma$ it runs the classical secretary problem, 
and with probability $1 - \gamma$, it runs the prediction-based algorithm that simply selects 
the first element with value greater or equal than $p^* - \lambda$ (if any).
Note that its expected competitive ratio at least
\begin{equation}\label{eq:obvious}
\gamma \frac{1}{e} + (1 - \gamma)\left(\max\left\{1 - \frac{\lambda + \eta}{\text{OPT}},0\right\}\right).
\end{equation}

In order to compare Algorithm~\ref{alg:classical} with its naive randomization, we set $\gamma = 1/c$.
This implies that both algorithms are at least $1/(ce)$-competitive in the worst-case when 
the predictions are poor. Having the same worst-case guarantee, we now focus on their performance 
in case when the prediction error $\eta$ and the confidence parameter $\lambda$ are small. 
In particular, let us consider the case where 
$\lambda + \eta = \delta\cdot\text{OPT}$ for some small $\delta > 0$. 
Then, the expected competitive ratio in \eqref{eq:obvious} reduces to
\begin{equation}\label{eq:naiveRandAlglCR}
\frac{1}{ce} + \left(1 - \frac{1}{c}\right)(1 - \delta).
\end{equation}

We now compare the expected competitive ratio of Algorithm~\ref{alg:classical}
and its naive randomization, which read $f(c)(1- \delta)$ and \eqref{eq:naiveRandAlglCR} 
respectively.
In Figure~\ref{fig:OblRndvsOurDet}, we conduct a numerical experiment with fixed
$\delta = 0.1$ and $\lambda + \eta = 0.1\text{OPT}$.
Our experimental data indicates that for $c\geq1.185$,
Algorithm~\ref{alg:classical} is at least $1/e$-competitive and it significantly 
outperforms the classical secretary algorithm as $c$ increases. 
Furthermore, for $c \geq 1.605$ Algorithm~\ref{alg:classical} performs better than its naive randomization.
On the other hand, we note that our experiments indicate that
as $\delta$ increases the competitive advantage of Algorithm~\ref{alg:classical} 
over its naive randomization decreases.

\begin{figure}
	\begin{center}
		\includegraphics[width=0.6\textwidth, bb=0 0 500 380]{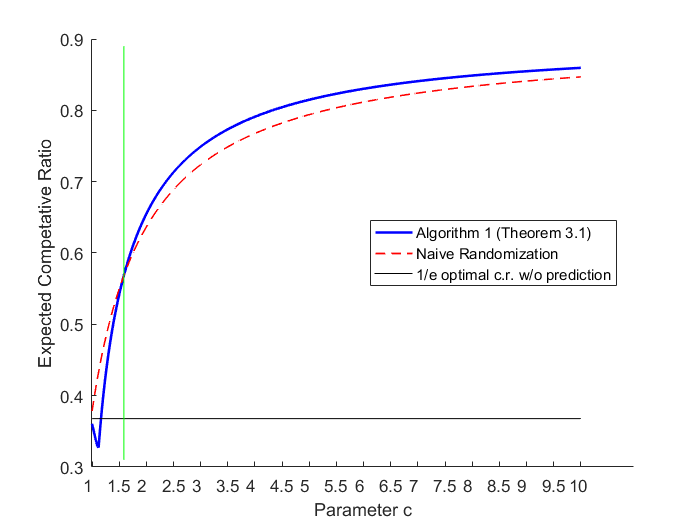}
	\end{center}
	\caption{The black horizontal line indicates the  tight bound of $1/e$ for the classical secretary algorithm. The bold blue line is the performance guarantee for Algorithm~\ref{alg:classical}; and the dashed red line is the performance guarantee for the obvious randomized algorithm.}
	\label{fig:OblRndvsOurDet}
\end{figure}

\newpage
\section{Perfect Matching Instances}

Given an undirected weighted bipartite graph $G^{\prime}=(L^{\prime}\cup R^{\prime},E^{\prime},w^{\prime})$
we construct an augmented bipartite graph $G=(L\cup R,E,w)$ as follows:

i) the left node set $L=L^{\prime}$; ii) the right node set $R=R^{\prime}\cup L^{\prime}$;
iii) the edge set $E=E^{\prime}\cup F^{\prime}$ where the set $F^{\prime}$
consists of edges $\{u_{i},v_{i}\}$ such that $u_{i}$ and $v_{i}$ are
the $i$-th node in $L$ and $L^{\prime}$ respectively, for all $i\in\{1,\dots,|L|\}$;
iv) $w(e)=w^{\prime}(e)$ for all edges $e\in E^{\prime}$ and $w(e)=0$
for all edges $e\in F^{\prime}$.

We call the resulting bipartite graph $G$ \emph{perfect}.
\begin{fact}\label{fact:pBG} 
	Suppose $G=(L\cup R,E,w)$ is a perfect bipartite
	graph. Let $\ell\in\{1,\dots,|L|\}$ be an arbitrary index and $\mathcal{L}(\ell)=\left\{ S\subseteq L\ :\ |S|=\ell\right\} $
	be the set of all subsets of nodes in $L$ of size $\ell$. Then,
	for every subset $S\in\mathcal{L}(\ell)$ the induced subgraph $G[S\cup N(S)]$
	has a perfect matching $M_{S}$ of size $\ell$, i.e., $|M_{S}|=|S|=\ell$.
\end{fact}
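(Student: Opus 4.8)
The plan is to exhibit the required matching $M_S$ explicitly, using the ``private'' zero-weight edges collected in $F'$. First I would recall what the construction gives us: the right node set is the disjoint union $R = R' \cup L'$, so it contains a fresh copy of $L' = L$, and $F'$ consists precisely of the edges $\{u_i,v_i\}$ joining the $i$-th node $u_i \in L$ to its private copy $v_i \in R$. In particular every $u_i \in L$ has degree at least one in $G$, and its $F'$-neighbour $v_i$ is adjacent \emph{only} to $u_i$.

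Given $S \in \mathcal{L}(\ell)$, I would then simply take $M_S = \{\, \{u_i,v_i\} \;:\; u_i \in S \,\}$ and verify three things. (i) $M_S$ is a matching: its left endpoints are exactly the (distinct) elements of $S$, and its right endpoints $v_i$ are pairwise distinct because distinct indices yield distinct private copies and these copies are distinct from all nodes of $R'$ by the disjoint-union definition of $R$; hence no two edges of $M_S$ share an endpoint. (ii) $M_S$ lies inside the induced subgraph $G[S \cup N(S)]$: indeed $u_i \in S$, we have $v_i \in N(u_i) \subseteq N(S)$, and $\{u_i,v_i\} \in F' \subseteq E$. (iii) $M_S$ covers all of $S$: each $u_i \in S$ is the left endpoint of $\{u_i,v_i\} \in M_S$. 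Combining (i)--(iii), $M_S$ is perfect w.r.t.\ $S$ and $|M_S| = |S| = \ell$, which is the claim.

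I do not expect any genuine obstacle: the statement is immediate from the definition of $F'$, and the only point requiring a line of care is the bookkeeping in step (i), namely that the private copies $v_1,\dots,v_{|L|}$ are pairwise distinct and disjoint from $R'$ — which is exactly what the definition $R = R' \cup L'$ enforces. I would close by noting that this Fact is precisely what justifies, without loss of generality, the assumption made in the Remark following the graph-theoretic preliminaries (that $G[S \cup R]$ admits a perfect matching w.r.t.\ $S$ for every $S \subseteq L$): passing to the perfect bipartite graph adds only zero-weight edges, so the optimal objective value is unchanged, while the existence of the matchings $M_S$ is now guaranteed.
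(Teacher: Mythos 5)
Your proof is correct, and since the paper states Fact~\ref{fact:pBG} without an explicit proof, yours fills that gap in exactly the intended way: the construction of the ``perfect'' bipartite graph is designed precisely so that each $u_i \in L$ has a private, zero-weight partner $v_i$ in $R$, and matching every $u_i \in S$ to its $v_i$ gives the required $M_S$ directly. Your bookkeeping in step (i) --- that the $v_i$ are pairwise distinct and disjoint from $R'$ --- is the only point that needs mentioning, and you handle it; nothing else is required.
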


\newpage

\section{General analysis of the algorithm of Kesselheim et al. \cite{KRTV2013}} \label{app:kesselheim}
In this section, we analyze a modified version of the algorithm of Kesselheim et al. \cite{KRTV2013}, 
see Algorithm~\ref{alg:kesselheim}. Our analysis extends the proof techniques presented in
\cite[Lemma 1]{KRTV2013}.

\begin{theorem}
	\label{thm:main} Given a perfect bipartite graph, Algorithm~\ref{alg:kesselheim} is $(\frac{1}{c}-\frac{1}{n})\ln\frac{c}{d}$ competitive in expectation. 
	In addition, the expected weighted contribution of the nodes $\{\lfloor n/d\rfloor+1,\dots,n\}$ 
	to the online matching $M$ is $OPT\cdot(\frac{1}{c}-\frac{1}{n})\ln d$.
\end{theorem}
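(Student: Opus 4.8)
The plan is to re-run Kesselheim et al.'s analysis \cite[Lemma~1]{KRTV2013} of their online matching algorithm, but with the sampling cutoff set to $\lfloor n/c\rfloor$ rather than $\lfloor n/e\rfloor$, obtain a clean lower bound on the expected value collected in a generic step $i$, and then split the resulting harmonic sum at $\lfloor n/d\rfloor$ to read off the two claimed quantities. Fix an offline optimal matching $\psi$ with $v(\psi)=\text{OPT}$. For $i>\lfloor n/c\rfloor$ let $C_i=\{\ell_1,\dots,\ell_i\}$ be the (unordered) set of the first $i$ arrivals, $M^{(i)}$ the optimal matching on $G[C_i\cup R]$ (made unique by a fixed tie-break), and $e^{(i)}=(\ell_i,r^{(i)})$ the edge $M^{(i)}$ assigns to $\ell_i$; since $G$ is perfect, Fact~\ref{fact:pBG} gives $|M^{(i)}|=i$, so every $\ell\in C_i$, in particular $\ell_i$, is matched and $r^{(i)}$ is well defined.

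First I would bound the weight $e^{(i)}$ carries before worrying about availability. Conditioned on the set $C_i$, the matching $M^{(i)}$ is fixed and $\ell_i$ is uniform over $C_i$, so $\mathbb{E}[w(e^{(i)})\mid C_i]=\tfrac1i\,v(M^{(i)})$; restricting $\psi$ to $C_i$ is a feasible matching in $G[C_i\cup R]$ of expected value $\tfrac in\text{OPT}$ (each $\ell\in L$ lies in $C_i$ with probability $i/n$), hence $\mathbb{E}[v(M^{(i)})]\ge\tfrac in\text{OPT}$ and $\mathbb{E}[w(e^{(i)})]\ge\text{OPT}/n$. The core of the proof is the availability bound: conditioned on $C_i$ and on the identity of $\ell_i$, the edge $e^{(i)}$ is added exactly when $r^{(i)}$ is still unmatched at the start of step $i$, and I want $\Pr[e^{(i)}\text{ added}\mid C_i,\ell_i]\ge\lfloor n/c\rfloor/(i-1)$. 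The decisive observation — and the one point I expect to require care — is that $M^{(k)}$ depends on $C_k$ only as a set, so the partner of any fixed $v\in R$ in $M^{(k)}$ is a function of $C_k$; therefore $v$ can get matched at step $k$ only when $\ell_k$ happens to be exactly that element. This lets me prove by induction on $k\ge\lfloor n/c\rfloor$, with the strengthened hypothesis ``$\Pr[v\text{ unmatched after step }k\mid\{\ell_1,\dots,\ell_k\}=C_k]\ge\lfloor n/c\rfloor/k$ for every size-$k$ set $C_k$ and every $v\in R$'', that the free-probability loses only the factor $(k-1)/k$ per step: averaging over the uniform choice of $\ell_k\in C_k$, the one term in which $\ell_k$ is $v$'s partner in $M^{(k)}$ contributes $0$ and each of the other $k-1$ terms is at least $\lfloor n/c\rfloor/(k-1)$ by the inductive hypothesis. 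Applying this with $k=i-1$ and $v=r^{(i)}$ — using that, given $C_i$ and $\ell_i$, the first $i-1$ arrivals form a uniformly random ordering of $C_i\setminus\{\ell_i\}$ and that the event in question depends only on those steps — yields the availability bound.

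Conditioning on $(C_i,\ell_i)$ decouples the (then deterministic) weight $w(e^{(i)})\ge0$ from the availability event, so combining the two bounds and using $\lfloor n/c\rfloor\ge n/c-1$,
\[
\mathbb{E}\bigl[w(e^{(i)})\,\mathbf{1}[e^{(i)}\text{ added}]\bigr]\;\ge\;\frac{\lfloor n/c\rfloor}{i-1}\cdot\frac{\text{OPT}}{n}\;\ge\;\text{OPT}\Bigl(\frac1c-\frac1n\Bigr)\frac1{i-1}.
\]
Summing over $i\in\{\lfloor n/c\rfloor+1,\dots,\lfloor n/d\rfloor\}$ and bounding $\sum_{k=\lfloor n/c\rfloor}^{\lfloor n/d\rfloor-1}\tfrac1k\ge\ln\tfrac{\lfloor n/d\rfloor}{\lfloor n/c\rfloor}=\ln\tfrac cd-o(1)$ shows that the value collected on this block — hence the value of the whole matching $M$, since the remaining steps only add nonnegative weight — is at least $\text{OPT}(\tfrac1c-\tfrac1n)\ln\tfrac cd$ asymptotically, which is the competitiveness claim. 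Summing the same per-step bound over $i\in\{\lfloor n/d\rfloor+1,\dots,n\}$ and using $\sum_{k=\lfloor n/d\rfloor}^{n-1}\tfrac1k\ge\ln\tfrac{n}{\lfloor n/d\rfloor}\ge\ln d$ gives the ``in addition'' statement that the nodes $\{\lfloor n/d\rfloor+1,\dots,n\}$ contribute at least $\text{OPT}(\tfrac1c-\tfrac1n)\ln d$ in expectation.

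The only genuinely delicate step is the inductive availability bound, and within it the bookkeeping of the conditioning: that $M^{(k)}$ is a function of the set $C_k$ (so ``who is matched at step $k$'' is governed by the single uniform choice of $\ell_k$), and that conditioning on the prefix set $C_i$ together with the identity of $\ell_i$ leaves the order of the first $i-1$ arrivals uniform and leaves untouched everything the ``unmatched after step $i-1$'' event depends on. Everything else is exactly the accounting of \cite{KRTV2013}, re-run with cutoff $\lfloor n/c\rfloor$ and with the final harmonic sum split at $\lfloor n/d\rfloor$.
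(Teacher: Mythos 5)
Your proposal is correct and follows essentially the same route as the paper's proof in Appendix~\ref{app:kesselheim}: decompose $\mathbb{E}[A_\ell]$ into the expected weight of the tentatively chosen edge (bounded below by $\text{OPT}/n$ as in Lemma~\ref{lem:BipGraph}) times the probability that the partner node is still free (bounded below by $\lfloor n/c\rfloor/(\ell-1)$ as in Lemma~\ref{lem:PEEex}), then split the harmonic sum at $\lfloor n/d\rfloor$ to read off both claims. The only presentational difference is that you establish the availability bound by a forward induction on $\Pr[v\text{ unmatched after step }k\mid C_k]\geq\lfloor n/c\rfloor/k$, whereas the paper unrolls the same telescoping as a chain-rule product over the per-step events $Q_k$; your conditioning on $(C_i,\ell_i)$, which makes $w(e^{(i)})$ deterministic before multiplying by the availability probability, is a slightly tidier way to obtain the decoupling the paper achieves by conditioning on $E^{(\ell)}$.
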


For convenience of notation, we will number the vertices in $L$ from
$1$ to $n$ in the random order they are presented to the algorithm.
Hence, we will use the variable $\ell$ as an integer, the
name of an iteration and the name of the current node (the last
so far).\\

\IncMargin{1em}
\begin{algorithm}[H]
	\SetKwData{Left}{left}\SetKwData{This}{this}\SetKwData{Up}{up}
	\SetKwFunction{Union}{Union}\SetKwFunction{FindCompress}{FindCompress}
	\SetKwInOut{Input}{Input}\SetKwInOut{Output}{Output}
	\Input{Vertex set $R$ and cardinality $|L| = n$.}
	\Output{Matching $M$.}\medskip
	
	\textbf{Phase I:} \\
	\For{$\ell = 1,\dots,\lfloor n/c \rfloor$}{
		Observe arrival of node $\ell$, but do nothing.}
	Let $L' = \{1,\dots,\lfloor n/c \rfloor\}$ and $M = \emptyset$.\\
	\textbf{Phase II:} \\
	\For{$\ell = \lfloor n/c \rfloor + 1, \dots, \lfloor n/d \rfloor$}{
		Let $L' = L' \cup \ell$.\\
		Let $M^{(\ell)} = \text{optimal matching on } G[L' \cup R]$.\\
		Let $e^{(\ell)} = (\ell,r)$ be the edge assigned to $\ell$ in $M^{(\ell)}$.\\
		\If{$M \cup e^{(\ell)}$ is a matching}{
			Set $M = M \cup \{e^{(\ell)}\}$.}}
	\caption{Online bipartite matching algorithm (under uniformly random vertex arrivals)}
	\label{alg:kesselheim}
\end{algorithm}
\DecMargin{1em}

\subsubsection*{Organization}

In Subsection \ref{sub:Notation}, we present the notation. In Subsection
\ref{sub:SL}, we give the main structural result and prove Theorem
\ref{thm:main}. In addition, in Subsection \ref{sub:PhaseIII}, we
give a lower bound on the probability that an arbitrary node $r\in R$
remains unmatched after the completion of Phase II.

\subsection{Notation \label{sub:Notation}}

Consider the following random process:

Sample uniformly at random a permutation of the nodes $L$. Let $L_{\ell}$
be a list containing the first $\ell$ nodes in $L$, in the order
as they appear, and let $M^{(\ell)}$ be the corresponding optimum
matching of the induced subgraph $G^{(\ell)}$ on the node set $L_{\ell}\cup N(L_{\ell})$.

Let $E^{(\ell)}$ be the event \{$e^{(\ell)}\cup M$ is a matching\},
where (r.v.) $M$ is the current online matching. Note that the existence
of edge (r.v.) $e^{(\ell)}$ is guaranteed by the Fact \ref{fact:pBG} and $G$
is a perfect bipartite graph. We define a random variable
\[
A_{\ell}=\begin{cases}
w(e^{(\ell)}) & ,\text{ if event }E^{(\ell)}\text{ occur};\\
0 & ,\text{ otherwise.}
\end{cases}
\]

\subsection{Structural Lemma \label{sub:SL}}
\begin{lemma}
	\label{lem:Exp} Suppose $G=(L\cup R,E,w)$ is a perfect bipartite
	graph. Then, for every $c>1$ it holds for every $\ell\in\{\lfloor n/c \rfloor + 1,\dots,n\}$
	that
	\[
	\mathbb{E}[A_{\ell}]\geq\frac{\lfloor n/c\rfloor}{n}\cdot\frac{OPT}{\ell-1}.
	\]
	
\end{lemma}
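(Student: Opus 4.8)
The plan is to bound $\mathbb{E}[A_\ell]$ by conditioning first on the \emph{unordered} set $L_\ell$ of the first $\ell$ arrived vertices and then on the identity of the last among them, combining three ingredients. First, conditioned on $L_\ell$, the optimal matching $M^{(\ell)}$ on $G^{(\ell)}$ is fixed and may be taken to saturate $L_\ell$ (by Fact~\ref{fact:pBG}, padding with the zero-weight private edges if necessary), so the $\ell$-th arrival $V_\ell$ is uniform over the $\ell$ left-endpoints of $M^{(\ell)}$ and $e^{(\ell)}$ is the $M^{(\ell)}$-edge incident to $V_\ell$. Second, $\mathbb{E}\big[w(M^{(\ell)})\big]\ge \frac{\ell}{n}\,OPT$: restricting a fixed global optimal matching to the edges whose left-endpoint lies in the uniformly random $\ell$-subset $L_\ell$ gives a feasible matching in $G^{(\ell)}$ of expected weight $\frac{\ell}{n}\,OPT$, and $M^{(\ell)}$ is at least as heavy. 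Third, and this is the crux, conditioned on $L_\ell$ and on \emph{any} choice of $V_\ell$, the event $E^{(\ell)}$ that the right-endpoint $r$ of $e^{(\ell)}$ is still unmatched has probability at least $\frac{\lfloor n/c\rfloor}{\ell-1}$.

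To prove this third bound, I would fix $L_\ell=S$ and $V_\ell=v$, and let $r$ be the partner of $v$ in $M^{(\ell)}$. Since Phase~I selects nothing, $r$ can become matched only at some step $k\in\{\lfloor n/c\rfloor+1,\dots,\ell-1\}$, and only if the edge $e^{(k)}$ is incident to $r$, i.e.\ only if the event $B_k:=\{R_k=r\}$ occurs, where $R_k$ denotes the partner of $V_k$ in $M^{(k)}$. Hence $\Pr[E^{(\ell)}\mid S,V_\ell=v]$ is at least the probability that none of $B_{\lfloor n/c\rfloor+1},\dots,B_{\ell-1}$ occurs. I would estimate this by exposing the permutation \emph{backwards}: having revealed $V_\ell,\dots,V_{k+1}$ one knows the sets $L_j=S\setminus\{V_{j+1},\dots,V_\ell\}$, hence the matchings $M^{(j)}$ for all $j\ge k$, so each $B_j$ with $j\ge k+1$ is already determined, whereas $V_k$ is still uniform over the size-$k$ set $L_k$; therefore $\Pr[B_k\mid S,V_{k+1},\dots,V_\ell]\le \frac1k$, and averaging over the outcomes consistent with $\neg B_{k+1},\dots,\neg B_{\ell-1}$ yields $\Pr[\neg B_k\mid S,V_\ell=v,\neg B_{k+1},\dots,\neg B_{\ell-1}]\ge \frac{k-1}{k}$. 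Multiplying these factors via the chain rule telescopes:
\[
\Pr\big[\,\neg B_{\lfloor n/c\rfloor+1}\wedge\cdots\wedge\neg B_{\ell-1}\mid S,V_\ell=v\,\big]\ \ge\ \prod_{k=\lfloor n/c\rfloor+1}^{\ell-1}\frac{k-1}{k}\ =\ \frac{\lfloor n/c\rfloor}{\ell-1}.
\]

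Finally I would put the pieces together: conditioned on $S=L_\ell$, since $V_\ell$ is uniform over the $\ell$ edges of $M^{(\ell)}$ and each edge $(v,r)$ contributes its weight times $\Pr[E^{(\ell)}\mid S,V_\ell=v]\ge \frac{\lfloor n/c\rfloor}{\ell-1}$, one obtains $\mathbb{E}[A_\ell\mid S]\ge \frac{\lfloor n/c\rfloor}{\ell(\ell-1)}\,w(M^{(\ell)})$; taking expectation over $S$ and using $\mathbb{E}[w(M^{(\ell)})]\ge\frac{\ell}{n}OPT$ gives $\mathbb{E}[A_\ell]\ge\frac{\lfloor n/c\rfloor}{n}\cdot\frac{OPT}{\ell-1}$ (tight at $\ell=\lfloor n/c\rfloor+1$, where the product is empty and $E^{(\ell)}$ holds with probability $1$). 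The main obstacle is exactly this third ingredient: a naive union bound over the steps $k$ gives $\sum_k \frac1k>1$ and is useless, so one genuinely needs the observation that conditioning on the \emph{set} of vertices arrived so far freezes all prefix-optimal matchings and leaves only the ``next vertex'' random, which is what lets the per-step failure probabilities compose multiplicatively into the telescoping product. A minor secondary point is justifying that $M^{(\ell)}$ may be taken to saturate $L_\ell$ (so that $e^{(\ell)}$ exists), which follows from the perfect-graph construction and Fact~\ref{fact:pBG}.
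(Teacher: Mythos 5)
Your proof is correct and takes essentially the same route as the paper's (Lemmas~\ref{lem:BipGraph} and~\ref{lem:PEEex} in Appendix~\ref{app:kesselheim}): condition on the arrived set, use that the last arrival is uniform over $M^{(\ell)}$, bound $\mathbb{E}[w(M^{(\ell)})]\ge\tfrac{\ell}{n}\,OPT$ by restricting a fixed global optimum, and obtain $\Pr[E^{(\ell)}\mid S,V_\ell=v]\ge \lfloor n/c\rfloor/(\ell-1)$ via the telescoping product $\prod_{k}(1-1/k)$. Your backward-exposure framing of the chain rule (reveal $V_\ell,\dots,V_{k+1}$ first so that $L_k$ and $M^{(k)}$ are frozen before $V_k$ is drawn uniformly from $L_k$) is a slightly cleaner articulation of the step the paper writes as a product of conditionals $\Pr[Q_t\mid W_{i,t}]$, but the underlying calculation is identical.
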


Before we prove Lemma \ref{lem:Exp}, we show that it implies Theorem
\ref{thm:main}.

\subsubsection{Proof of Theorem \ref{thm:main} \label{sub:PT1}}

Using Lemma \ref{lem:Exp}, we have
\[
\mathbb{E}\left[\sum_{\ell=1}^{n/d}A_{\ell}\right]=\sum_{\ell=\lfloor n/c\rfloor+1}^{n/d}\mathbb{E}\left[A_{\ell}\right]\geq\sum_{\ell=\lfloor n/c\rfloor+1}^{n/d}\frac{\lfloor n/c\rfloor}{n}\cdot\frac{OPT}{\ell-1}\geq OPT\cdot\left(\frac{1}{c}-\frac{1}{n}\right)\cdot\ln\frac{c}{d},
\]where the inequalities follow by combining $\frac{\lfloor n/c\rfloor}{n}\geq\frac{1}{c}-\frac{1}{n}$
and
\[
\sum_{\ell=\lfloor n/c\rfloor+1}^{n/d}\frac{1}{\ell-1}=\sum_{\ell=\lfloor n/c\rfloor}^{n/d-1}\frac{1}{\ell}\geq\ln\frac{n/d}{\lfloor n/c\rfloor}\geq\ln\frac{c}{d}.
\]

\subsubsection{Proof of Lemma \ref{lem:Exp}}

We prove Lemma \ref{lem:Exp} in two steps. Observe that $\mathbb{E}[A_{\ell}\ |\invneg E^{(\ell)}]=0$ implies
\[
\mathbb{E}\left[A_{\ell}\right]=\mathbb{E}\left[w(e^{(\ell)})\ |\ E^{(\ell)}\right]\cdot Pr\left[E^{(\ell)}\right].
\]

We proceed by showing, in Lemma \ref{lem:BipGraph}, that $\mathbb{E}\left[w(e^{(\ell)})\ |\ E^{(\ell)}\right]\geq\frac{OPT}{n}$,
and then in Lemma \ref{lem:PEEex} that $Pr\left[E^{(\ell)}\ |\ E^{(\ell)}\right]\geq\frac{\lfloor n/c\rfloor}{\ell-1}$.\\

Let $S$ be a subset of $L$ of size $\ell$, and let $M_{S}$ be
the optimum weighted matching w.r.t. the induced subgraph $G[S\cup N(S)]$. 
For a fixed subset $S\subseteq L$
with size $\ell$, let $R_{\ell}(S)$ be the event that \{the node set of $L_{\ell}$ equals $S$\}, 
i.e. $\text{Set}(L_{\ell})=S$.
Let $\mathcal{L}(\ell)$ be the set of all subsets of $L$ of size
$\ell$, i.e., $\mathcal{L}(\ell)=\left\{ S\subseteq L\ :\ |S|=\ell\right\} $.
\begin{lemma}\label{lem:BipGraph} 
	For every perfect bipartite graph $G=(L\cup R,E,w)$
	it holds that
	\[
	\mathbb{E}\left[w(e^{(\ell)})\ |\ E^{(\ell)}\right]\geq\frac{OPT}{n}.
	\]
\end{lemma}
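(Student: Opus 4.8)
\textbf{Proof plan for Lemma~\ref{lem:BipGraph}.} The plan is to follow the principle of deferred decisions, in the same spirit as the analysis of Kesselheim et al.~\cite{KRTV2013}. Fix the index $\ell$ and generate the random permutation of $L$ in two stages: first reveal the unordered set $S$ of the first $\ell$ arriving vertices (a uniformly random $\ell$-subset of $L$), and then reveal the arrival order within $S$. Conditioned on $\mathrm{Set}(L_\ell)=S$, the matching $M^{(\ell)}$ is exactly $M_S$, the optimum matching of $G[S\cup N(S)]$, which is now deterministic; moreover, by Fact~\ref{fact:pBG} and since $G$ is perfect, $M_S$ is a perfect matching of $S$, so every $v\in S$ is matched by $M_S$ to some $r_S(v)\in R$ with weight $w_S(v):=w(v,r_S(v))$, and $\sum_{v\in S}w_S(v)=w(M_S)$. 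The last arrival $v_\ell$ is uniform over $S$ and $e^{(\ell)}$ is the $M_S$-edge incident to $v_\ell$, so $w(e^{(\ell)})=w_S(v_\ell)$ is a deterministic function of $v_\ell$ once $S$ is fixed.

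The reduction is then the following. The event $E^{(\ell)}$ is exactly the event that the right endpoint $r_S(v_\ell)$ of the proposed edge is still unmatched in the online matching $M$ built from the first $\ell-1$ arrivals, i.e.\ $r_S(v_\ell)\notin R[M]$. I would first argue that, conditioned on $\mathrm{Set}(L_\ell)=S$, the event $E^{(\ell)}$ is independent of the identity of $v_\ell$ (equivalently, of $w(e^{(\ell)})$), and, by the analogous argument over the choice of $S$, also independent of $\mathrm{Set}(L_\ell)$ itself. Granting this, the occurrence of $E^{(\ell)}$ does not reweight which vertex of $S$ is singled out as $v_\ell$, so $\mathbb{E}\big[w(e^{(\ell)})\mid E^{(\ell)},\,\mathrm{Set}(L_\ell)=S\big]=\mathbb{E}\big[w_S(v_\ell)\mid \mathrm{Set}(L_\ell)=S\big]=\tfrac{1}{\ell}\sum_{v\in S}w_S(v)=\tfrac{w(M_S)}{\ell}$, and then $\mathbb{E}\big[w(e^{(\ell)})\mid E^{(\ell)}\big]=\mathbb{E}_S\big[\tfrac{w(M_S)}{\ell}\big]=\tfrac{1}{\ell}\,\mathbb{E}[w(M_S)]\ge \tfrac{1}{\ell}\cdot\tfrac{\ell}{n}\,OPT=\tfrac{OPT}{n}$. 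Here $\mathbb{E}[w(M_S)]\ge \tfrac{\ell}{n}\,OPT$ is the standard bound: restrict a fixed optimal matching of $G$ to the uniformly random $\ell$-subset $S$ of $L$, note its expected weight is $\tfrac{\ell}{n}\,OPT$, and use optimality of $M_S$ on $G[S\cup N(S)]$.

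The main obstacle is exactly the claimed independence of $E^{(\ell)}$ from the last arrival $v_\ell$ (and from $S$); this is the technical heart and is where the structure of the algorithm of Kesselheim et al.~must be exploited. One has to show that the set $R[M]$ of right vertices occupied by the online matching after $\ell-1$ steps is distributed in a way that is ``balanced'' with respect to the proposed edge, so that whether $r_S(v_\ell)$ has already been grabbed does not correlate with which $v\in S$ is taken to be $v_\ell$. I would attack this by the same conditioning-and-telescoping scheme used for the probability bound on $E^{(\ell)}$ in Appendix~\ref{sub:PhaseIII}: condition further on the set of the first $i$ arrivals for each $i<\ell$, use that $v_i$ is then uniform over that set and that the optimum matching $M^{(i)}$ maps it to a uniformly random occupied right vertex, and bound the step-$i$ ``collision'' probability $\Pr[\,\rho_i=r_S(v_\ell)\mid \text{no collision before step }i\,]$ uniformly in $v_\ell$. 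The delicate point, and where the bulk of the work lies, is to push this conditioning through while simultaneously respecting the event that $r_S(v_\ell)$ survived all earlier steps — so that the product of the per-step bounds is genuinely uniform over the choice of $v_\ell$ (and of $S$), which is precisely what the reduction in the previous paragraph consumes.
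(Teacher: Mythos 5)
Your plan coincides with the paper's own proof: the paper likewise conditions on the set $S$ of the first $\ell$ arrivals, uses that the last arrival is uniform over $S$ so that $\mathbb{E}\left[w(e^{(\ell)})\ \middle|\ R_{\ell}(S)\wedge E^{(\ell)}\right]=\frac{1}{\ell}\,w(M_S)$, and then lower-bounds the average of $w(M_S)$ by restricting a fixed optimal matching $M^{\star}$ to $S$ and counting ($\binom{n-1}{\ell-1}/\binom{n}{\ell}=\ell/n$), which is exactly your $\mathbb{E}\left[w(M^{\star}|_{S})\right]=\frac{\ell}{n}\,OPT$ bound. The uniformity in the identity of the last vertex that you flag as the technical heart is not re-derived inside this lemma in the paper either; it is handled by the companion probability lemma (Lemma~\ref{lem:PEEex}), whose bound is proved conditioned on $F_{S,\ell}^{(i)}$ and hence uniformly over the last vertex and over $S$, exactly along the telescoping lines you sketch.
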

\begin{proof}
	Using conditional expectation,
	\begin{equation}
	\mathbb{E}\left[w(e^{(\ell)})\ |\ E^{(\ell)}\right]=\sum_{S\in\mathcal{L}(\ell)}\mathbb{E}\left[w(e^{(\ell)})\ |\ R_{\ell}(S)\wedge E^{(\ell)}\right]\cdot Pr\left[R_{\ell}(S)\right].\label{eq:WweE}
	\end{equation}

	Since the order of $L$ is sampled u.a.r. we have $Pr\left[R_{\ell}(S)\right]=1/{n \choose \ell}$,
	and thus it suffices to focus on the conditional expectation 
	\begin{eqnarray}
	\mathbb{E}\left[w(e^{(\ell)})\ |\ R_{\ell}(S)\wedge E^{(\ell)}\right] & = & \sum_{e^{(i)}\in M_{S}}w(e^{(i)})\cdot Pr_{e^{(\ell)}\sim M_{S}}\left[e^{(\ell)}=e^{(i)}\right]\nonumber \\
	& = & \frac{1}{\ell}\sum_{e^{(i)}\in M_{S}}w(e^{(i)}).\label{eq:EwSE}
	\end{eqnarray}
	where the last equality uses $G$ is a perfect bipartite graph and Fact
	\ref{fact:pBG}. Then, by combining (\ref{eq:WweE},\ref{eq:EwSE})
	we have 
	\begin{equation}
	\mathbb{E}\left[w(e^{(\ell)})\ |\ E^{(\ell)}\right]=\frac{1}{{n \choose \ell}}\cdot\frac{1}{\ell}\sum_{S\in\mathcal{L}(\ell)}\sum_{e^{(i)}\in M_{S}}w(e^{(i)}).\label{eq:EAEexits}
	\end{equation}

	Observe that for any subset $S\subseteq L$, it holds for $M^{\star}|_{S}=\{e^{(i)}=(i,r_{i})\in M^{\star}\ :\ i\in S\}$
	the restriction of the optimum matching $M^{\star}$ (w.r.t. the whole
	graph $G$) on $S$ that 
	\begin{equation}
	\sum_{e^{(i)}\in M_{S}}w(e^{(i)})\geq\sum_{e^{(i)}\in M^{\star}|_{S}}w(e^{(i)}).\label{eq:obs}
	\end{equation}
	Further, since every vertex $i\in L(M^{\star})$ appears in ${n-1 \choose \ell-1}$
	many subsets of size $\ell$ and ${n-1 \choose \ell-1}/{n \choose \ell}=\ell/n$,
	it follows by (\ref{eq:EAEexits},\ref{eq:obs}) that
	\begin{eqnarray*}
		\mathbb{E}\left[A_{\ell}\ |\ E^{(\ell)}\right] & \geq & \frac{1}{{n \choose \ell}}\cdot\frac{1}{\ell}\sum_{S\in\mathcal{L}(\ell)}\sum_{e^{(i)}\in M^{\star}|_{S}}w(e^{(i)})\\
		& = & \frac{{n-1 \choose \ell-1}}{{n \choose \ell}}\cdot\frac{1}{\ell}\sum_{e^{(i)}\in M^{\star}}w(e^{(i)})=\frac{OPT}{n}.
	\end{eqnarray*}
	
\end{proof}

\begin{lemma}
	\label{lem:PEEex} For every perfect bipartite graph $G=(L\cup R,E,w)$
	it holds that
	\begin{equation}
	Pr\left[E^{(\ell)}\right]\geq\frac{\lfloor n/c\rfloor}{\ell-1}.\label{eq:CBB}
	\end{equation}
\end{lemma}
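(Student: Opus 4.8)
The plan is to reduce the event $E^{(\ell)}$ to a combinatorial statement about the sequence of ``intended partners'' that the algorithm produces, and then estimate the resulting probability by a telescoping product obtained from a backward exposure of the random arrival order.

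First I would make the following reformulation. Write $e^{(k)} = (v_k, r^{(k)})$, where $v_k$ is the $k$-th arriving node of $L$ and $r^{(k)} \in R$ is the partner assigned to $v_k$ in the optimal matching $M^{(k)}$ of $G[L_k \cup N(L_k)]$; this partner exists because $G$ is perfect and $M^{(k)}$ is, by Fact~\ref{fact:pBG}, a perfect matching with respect to $L_k$. Since $v_\ell$ has just arrived when step $\ell$ is processed, it is unmatched, so $E^{(\ell)}$ occurs exactly when $r^{(\ell)}$ is unmatched in the online matching $M$ at the end of step $\ell-1$. I then claim this happens if and only if $r^{(k)} \neq r^{(\ell)}$ for every $k \in \{\lfloor n/c\rfloor + 1, \dots, \ell-1\}$. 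Indeed, in every step $k$ the algorithm considers \emph{only} the single edge $e^{(k)}$, whose unique $R$-endpoint is $r^{(k)}$, so if no $r^{(k)}$ equals $r^{(\ell)}$ then $r^{(\ell)}$ is never touched and stays free. Conversely, if $k^\star$ is the first index in $\{\lfloor n/c\rfloor+1,\dots,\ell-1\}$ with $r^{(k^\star)} = r^{(\ell)}$, then no earlier step touched $r^{(\ell)}$, so both $v_{k^\star}$ and $r^{(\ell)}$ are free at step $k^\star$, the condition ``$M \cup e^{(k^\star)}$ is a matching'' holds, and the algorithm matches $r^{(\ell)}$ at step $k^\star \le \ell - 1$.

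It then remains to bound $Pr\big[r^{(k)} \neq r^{(\ell)} \text{ for all } k \in \{\lfloor n/c\rfloor+1,\dots,\ell-1\}\big]$. Following the style of Lemma~\ref{lem:BipGraph}, I would first condition on the set $L_\ell = S$ (the target bound does not depend on $S$), and then expose the arrival order \emph{inside} $S$ backwards: first $v_\ell$, then $v_{\ell-1}$, and so on down to $v_{\lfloor n/c\rfloor+1}$. The point is that once $v_\ell,\dots,v_{k+1}$ have been revealed, the set $L_k = S \setminus \{v_\ell,\dots,v_{k+1}\}$ is fixed (hence $M^{(k)} = M_{L_k}$ and the already-revealed partners $r^{(\ell)},\dots,r^{(k+1)}$ are fixed), while $v_k$ is still uniformly distributed over the $k$ nodes of $L_k$. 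Since $M_{L_k}$ is a matching, at most one node of $L_k$ is matched to $r^{(\ell)}$, so $Pr\big[r^{(k)} = r^{(\ell)} \mid v_\ell,\dots,v_{k+1}\big] \le 1/k$. Chaining these conditional survival probabilities with the tower rule and telescoping gives
\[
Pr[E^{(\ell)}] \;\ge\; \prod_{k=\lfloor n/c\rfloor+1}^{\ell-1}\left(1 - \frac{1}{k}\right) \;=\; \prod_{k=\lfloor n/c\rfloor+1}^{\ell-1}\frac{k-1}{k} \;=\; \frac{\lfloor n/c\rfloor}{\ell-1},
\]
which is the claim; combined with Lemma~\ref{lem:BipGraph} this yields Lemma~\ref{lem:Exp}.

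The step I expect to be the main obstacle is the careful bookkeeping of the backward exposure: one has to verify that conditioning on all the information revealed before $v_k$ — equivalently, on the values $v_\ell,\dots,v_{k+1}$, and in particular on the events $r^{(j)} \neq r^{(\ell)}$ for $j > k$ that the telescoping implicitly uses — leaves $v_k$ exactly uniform over $L_k$, so that the per-step bound $1/k$ is a valid conditional bound and the product is legitimate. Once that is set up, the remaining manipulations are routine.
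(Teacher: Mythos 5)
Your proposal is correct, and at heart it is the same telescoping argument as the paper's: reformulate $E^{(\ell)}$ as ``no $k$ in Phase~II has $r^{(k)}=r^{(\ell)}$,'' get a per-step factor of $1-\tfrac1k$ from the fact that at most one vertex of $L_k$ is matched to $r^{(\ell)}$ in $M_{L_k}$ and $v_k$ is uniform over $L_k$, and telescope to $\lfloor n/c\rfloor/(\ell-1)$.

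The real difference is \emph{which filtration you chain over}, and your version is the more careful one. The paper applies the chain rule to the events $Q_j$ themselves, i.e.\ it conditions on $W_{i,t}=F_{S,\ell}^{(i)}\wedge\bigwedge_{j<t}Q_j$ and asserts $\Pr[\lnot Q_t\mid W_{i,t}]\le 1/t$. But conditioning on $Q_{<t}$ does, in general, skew the distribution of $v_t$ within $L_t$: already with $\ell=4$, $t=3$, $L_3=\{a,b,c\}$, one can arrange the subset-optimal matchings so that $r^{(\ell)}$ is matched to $a$ in $M_{\{a,b,c\}}$ and in $M_{\{a,b\}},M_{\{a,c\}}$ but is unmatched in $M_{\{b,c\}}$; then $\Pr[\lnot Q_3\mid Q_2]=1/2>1/3$, while the product $\Pr[Q_2\wedge Q_3]$ is still exactly $1/3=(1-\tfrac12)(1-\tfrac13)$. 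So the paper's per-step conditional bound is not literally valid, even though the telescoped conclusion is. Your proposal sidesteps this entirely by chaining over the \emph{backward filtration} $\mathcal{F}_k=\sigma(L_\ell,v_\ell,\dots,v_{k+1})$: given $\mathcal{F}_k$, the set $L_k$, the matching $M_{L_k}$, and all of $r^{(\ell)},\dots,r^{(k+1)}$ are determined, and $v_k$ is genuinely uniform over $L_k$, so $\Pr[r^{(k)}=r^{(\ell)}\mid\mathcal{F}_k]\le 1/k$ pointwise. Since $\bigwedge_{j>k}Q_j$ is $\mathcal{F}_k$-measurable, one can peel the factors one at a time by the tower property, and the product $\prod(1-\tfrac1k)$ is a rigorous lower bound. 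The worry you flagged in your last paragraph --- that the bookkeeping of ``what has been revealed'' is the crux --- is exactly right, and your resolution is sound. In short: same idea, same bound, but your exposure order makes the per-step inequality actually hold as a conditional-probability statement, which the paper's $Q_j$-based chain rule does not.
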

\begin{proof}
	For a fixed subset $S\subseteq L$ with size $\ell$, let $F_{S,\ell}^{(i)}$
	be the event that \{event $R_{\ell}(S)$ occurs\} \emph{and} \{the edge
	$e^{(\ell)}=(i,r_{i})$\}. Then, by conditioning on the choice of
	subset $S\in\mathcal{L}(\ell)$, we have 
	\begin{eqnarray}
	Pr\left[E^{(\ell)}\right] & = & \frac{1}{{n \choose \ell}}\sum_{S\in\mathcal{L}(\ell)}Pr\left[e^{(\ell)}\cup M\text{ is a matching}\ |\ R_{\ell}(S)\right]\nonumber \\
	& = & \frac{1}{{n \choose \ell}}\frac{1}{\ell}\sum_{S\in\mathcal{L}(\ell)}\sum_{(i,r_{i})\in M_{S}}Pr\left[(i,r_{i})\cup M\text{ is a matching}\ |\ F_{S,\ell}^{(i)}\right].\label{eq:PEMEl}
	\end{eqnarray}
	\emph{Note that in (\ref{eq:PEMEl}), the subset $S\subseteq L$ with
		size $\ell$ and the edge $(i,r_{i})\in M_{S}$ are fixed!}
	
	Given the first $k$ nodes of $L$, in the order as they arrive, and
	the corresponding perfect matching (r.v.) $M^{(k)}$, let (r.v.) $e^{(k)}=(k,r_{k})$
	be the edge matched in (r.v.) $M^{(k)}$ from the last node (r.v.)
	$k$. Note that since $G$ is perfect, it follows by Fact \ref{fact:pBG}
	that edge $e^{(k)}$ exists and $|M^{(k)}|=k$. We denote by (r.v.)
	$M^{(k)}[k]$ the corresponding right node $r_{k}$. 
	
	Let $Q_{k}$ be the event that 
	\[
	\{\text{node }r_{i}\not\in M^{(k)}\}\vee\left\{ \{\text{node }r_{i}\in M^{(k)}\}\wedge\{M^{(k)}[k]\neq r_{i}\}\right\} .
	\]
	Then, we have
	\begin{equation}
	Pr\left[(i,r_{i})\cup M\text{ is a matching}\ |\ F_{S,\ell}^{(i)}\right]=Pr\left[\bigwedge_{k=\lfloor n/c\rfloor+1}^{\ell-1}Q_{k}\ |\ F_{S,\ell}^{(i)}\right].\label{eq:fancy}
	\end{equation}
	Observe that the probability of event $\invneg Q_{k}$ is equal to
	\begin{eqnarray}
	&  & Pr\left[\{\text{node }r_{i}\in M^{(k)}\}\wedge\left\{ \{\text{node }r_{i}\notin M^{(k)}\}\vee\{M^{(k)}[k]=r_{i}\}\right\} \right]\nonumber \\
	& = & Pr\left[\{\text{node }r_{i}\in M^{(k)}\}\wedge\{M^{(k)}[k]=r_{i}\}\right].\label{eq:notProb}
	\end{eqnarray}

	Let $W_{i,t}$ be the event that $F_{S,\ell}^{(i)}\wedge\left(\bigwedge_{j=k}^{t-1}Q_{j}\right)$
	for $t\in\{k+1,\dots,\ell-1\}$, and $W_{i,k}$ be the event $F_{S,\ell}^{(i)}$.
	Using conditional probability, we have
	\begin{equation}
	Pr\left[\bigwedge_{k=\lfloor n/c\rfloor+1}^{\ell-1}Q_{k}\ |\ F_{S,\ell}^{(i)}\right]=Pr\left[Q_{\ell-1}\ |\ W_{i,\ell-1}\right]\cdots Pr\left[Q_{k+1}\ |\ W_{i,k+1}\right]\cdot Pr\left[Q_{k}\ |\ W_{i,k}\right].\label{eq:condProb}
	\end{equation}
	We now analyze the terms in (\ref{eq:condProb}) separately. Let $\mathcal{T}(i,k)$
	be the set of all matchings $M^{(k)}$ satisfying the event $F_{S,\ell}^{(i)}\wedge\{\text{node }r_{i}\in M^{(k)}\}$.
	Using (\ref{eq:notProb}), we have
	\begin{eqnarray}
	Pr\left[\invneg Q_{k}\ |\ F_{S,\ell}^{(i)}\right] & = & Pr\left[\{\text{node }r_{i}\in M^{(k)}\}\wedge\{M^{(k)}[k]=r_{i}\}\ |\ F_{S,\ell}^{(i)}\right]\nonumber \\
	& \leq &  Pr\left[M^{(k)}[k]=r_{i}\ |\ F_{S,\ell}^{(i)}\wedge\{\text{node }r_{i}\in M^{(k)}\}\right]\nonumber \\
	& = & \frac{1}{|\mathcal{T}(i,k)|}\sum_{M^{\prime}\in\mathcal{T}(i,k)}\frac{1}{|M^{\prime}|}=\frac{1}{k},\label{eq:notQkFsli}
	\end{eqnarray}
	where the last equality follows by Fact \ref{fact:pBG} and $G$
	is perfect. Thus,
	\begin{equation}
	Pr\left[Q_{k}\ |\ W_{i,k}\right]=1-Pr\left[\invneg Q_{k}\ |\ F_{S,\ell}^{(i)}\right]\geq1-\frac{1}{k}.\label{eq:QkW}
	\end{equation}
	Similarly, for $t\in\{k+1,\dots,\ell-1\}$ we have
	\[
	Pr\left[\invneg Q_{t}\ |\ W_{i,t}\right]\leq Pr\left[M^{(t)}[t]=r_{i}\ |\ W_{i,t}\wedge\{\text{node }r_{i}\in M^{(t)}\}\right]=\frac{1}{t},
	\]
	and therefore
	\begin{equation}
	Pr\left[Q_{t}\ |\ W_{i,t}\right]=1-Pr\left[\invneg Q_{t}\ |\ W_{i,t}\right]\geq1-\frac{1}{t}.\label{eq:QkWk}
	\end{equation}

	By combining (\ref{eq:fancy},\ref{eq:condProb},\ref{eq:QkW},\ref{eq:QkWk}),
	we obtain
	
	\begin{equation}
	Pr\left[(i,r_{i})\cup M\text{ is a matching}\ |\ F_{S,\ell}^{(i)}\right]\geq\prod_{k=\lfloor n/c\rfloor+1}^{\ell-1}\left(1-\frac{1}{k}\right)=\frac{\lfloor n/c\rfloor}{\ell-1}.\label{eq:notMatched}
	\end{equation}
	Since every summand in (\ref{eq:PEMEl}) is lower bounded by (\ref{eq:notMatched}),
	we have 
	\[
	Pr\left[E^{(\ell)}\ |\ E_{\exists}^{(\ell)}\right]\geq\frac{\lfloor n/c\rfloor}{\ell-1}.
	\]
	
\end{proof}

\subsection{Algorithm~\ref{alg:bipartite} (Omitted Proofs) \label{sub:PhaseIII}}

We now lower bound the probability that an arbitrary node $r\in R$
remains unmatched after the completion of Phase II in Algorithm~\ref{alg:bipartite}.
Our analysis uses similar arguments as in Lemma \ref{lem:PEEex},
but for the sake of completeness we present the proof below.
\begin{lemma}
	\label{lem:RinPhaseIII} For every constants $c\geq d\geq1$ and for
	every perfect bipartite graph $G=(L\cup R,E,w)$, it holds for every
	node $r\in R$ that
	\[
	Pr\left[r\text{ is not matched in }\mathrm{Phase\,II}\right]\geq\frac{d}{c}-o(1).
	\]
\end{lemma}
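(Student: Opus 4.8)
The plan is to follow the strategy of Lemma~\ref{lem:PEEex}: I would express the event ``$r$ gets matched during Phase~II'' through the optimal matchings $M^{(\ell)}$ computed by the algorithm, and then lower bound its complement by a telescoping product of conditional probabilities.

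I would first record a structural observation about Phase~II of Algorithm~\ref{alg:bipartite}: node $r$ can become matched at step $\ell\in\{\lfloor n/c\rfloor+1,\dots,\lfloor n/d\rfloor\}$ only if the edge $e^{(\ell)}$ assigned to the freshly arrived node $\ell$ in the optimal matching $M^{(\ell)}$ of $G[L_\ell\cup N(L_\ell)]$ is exactly the edge between node $\ell$ and $r$, i.e.\ only if $M^{(\ell)}[\ell]=r$. Writing $Q_\ell=\{M^{(\ell)}[\ell]\neq r\}$, this gives
\[
\Pr\big[r\text{ not matched in Phase II}\big]\;\ge\;\Pr\Big[\textstyle\bigwedge_{\ell=\lfloor n/c\rfloor+1}^{\lfloor n/d\rfloor}Q_\ell\Big],
\]
so it suffices to lower bound the right-hand side; this is the analogue of the quantity bounded in~\eqref{eq:notMatched}, now tracking a fixed right node rather than the edge $e^{(\ell)}$.

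To bound the product I would reveal the random arrival order \emph{in reverse over Phase~II}: first expose the unordered set $\text{Set}(L_{\lfloor n/d\rfloor})$, and then reveal the identity of the $\ell$-th arrival for $\ell=\lfloor n/d\rfloor,\lfloor n/d\rfloor-1,\dots,\lfloor n/c\rfloor+1$, each uniform over the part of $\text{Set}(L_{\lfloor n/d\rfloor})$ not yet exposed. This is a legitimate way to draw the first $\lfloor n/d\rfloor$ arrivals of a uniformly random permutation, and its advantage is that just before the $\ell$-th arrival is revealed, the set $\text{Set}(L_\ell)$ --- and hence $M^{(\ell)}$ and all the events $Q_{\ell+1},\dots,Q_{\lfloor n/d\rfloor}$ --- is already determined. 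Since $|\text{Set}(L_\ell)|=\ell$, at most one node of $\text{Set}(L_\ell)$ is matched to $r$ in $M^{(\ell)}$, and the $\ell$-th arrival is uniform over $\text{Set}(L_\ell)$ given the history so far, we obtain $\Pr[\neg Q_\ell\mid\text{history before step }\ell]\le 1/\ell$, i.e.\ $\Pr[Q_\ell\mid\text{history}]\ge 1-1/\ell$. The reverse-order chain rule then telescopes, exactly as in the derivation of~\eqref{eq:notMatched}, to
\[
\Pr\Big[\textstyle\bigwedge_{\ell=\lfloor n/c\rfloor+1}^{\lfloor n/d\rfloor}Q_\ell\Big]\;\ge\;\prod_{\ell=\lfloor n/c\rfloor+1}^{\lfloor n/d\rfloor}\Big(1-\tfrac1\ell\Big)\;=\;\frac{\lfloor n/c\rfloor}{\lfloor n/d\rfloor}\;\ge\;\frac{d}{c}-o(1),
\]
where the last inequality uses that $d$ is a constant.

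The step that I expect to require the most care is justifying that, conditioned on the revealed history, the $\ell$-th arrival is genuinely uniform over the $\ell$-element set $\text{Set}(L_\ell)$ --- equivalently, that conditioning on $\text{Set}(L_{\lfloor n/d\rfloor})$ together with arrivals $\ell+1,\dots,\lfloor n/d\rfloor$ (hence on $Q_{\ell+1},\dots,Q_{\lfloor n/d\rfloor}$) introduces no bias. This is exactly what the reverse revelation is designed to make transparent; alternatively one can reproduce the forward, $W_{i,t}$-style conditioning of Lemma~\ref{lem:PEEex} essentially verbatim, at the cost of more bookkeeping. A secondary point is the well-definedness of $M^{(\ell)}$ and the fact that the $\ell$-th arrival is matched in it with only one endpoint touching $r$; this is inherited from the perfect-bipartite-graph assumption via Fact~\ref{fact:pBG}, as in Appendix~\ref{app:kesselheim}.
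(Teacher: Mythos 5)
Your argument is correct and follows the same essential route as the paper: reduce the lemma to the product $\prod_{\ell}\bigl(1-\tfrac{1}{\ell}\bigr)$ over Phase~II via a per-step conditional bound $\Pr[\neg Q_\ell\mid\cdot]\le 1/\ell$, then evaluate the telescoping product. The paper proves Lemma~\ref{lem:RinPhaseIII} tersely, simply invoking the bound~\eqref{eq:notMatched} that was established in Lemma~\ref{lem:PEEex} via forward conditioning on the events $W_{i,t}=F_{S,\ell}^{(i)}\wedge\bigwedge_{j=k}^{t-1}Q_j$; you instead derive the same per-step bound directly through a reverse revelation of the arrival order. The two are logically interchangeable: your reverse revelation makes it immediate that, given the history, $\mathrm{Set}(L_\ell)$ and hence $M^{(\ell)}$ are determined while the $\ell$-th arrival is uniform over the $\ell$ candidates, so the bound $1/\ell$ needs no extra bookkeeping; the paper's forward $W_{i,t}$ conditioning takes a bit more notation but is already in place for Lemma~\ref{lem:PEEex}, which is why the appendix simply reuses it. One minor point: you state only $\Pr[r\text{ not matched in Phase II}]\ge\Pr[\bigwedge_\ell Q_\ell]$, whereas the paper asserts equality; equality does in fact hold here (consider the first $\ell$ with $\neg Q_\ell$: at that point both $\ell$ and $r$ are unmatched, so $e^{(\ell)}$ is accepted), but your weaker inequality is all that is needed.
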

\begin{proof}
	Observe that
	\[
	Pr\left[r\text{ is not matched in Phase II}\right]=Pr\left[\bigwedge_{k=\lfloor n/c\rfloor+1}^{\lfloor n/d\rfloor}Q_{k}\right].
	\]
	Using (\ref{eq:notMatched}), we have
	\[
	Pr\left[r\text{ is not matched in Phase II}\right]\geq\prod_{k=\lfloor n/c\rfloor+1}^{\lfloor n/d\rfloor}\left(1-\frac{1}{k}\right)\geq\frac{\lfloor n/c\rfloor}{\lfloor n/d\rfloor}\geq\frac{d}{c}-\frac{d}{n}.
	\]
\end{proof}

\newpage
\section{Deterministic Graphic Matroid Secretary Algorithm}\label{app:DetGrMatrScrAlg}
In this section, we analyze the competitive ratio of  Algorithm~\ref{alg:graphic}.
\medskip

\noindent \textbf{Theorem \ref{thm:detMatroid}.}\textit{
The deterministic Algorithm~\ref{alg:graphic} is $(1/4 - o(1))$-competitive
for the graphic matroid secretary problem.}
\medskip

The rest of this section is devoted to proving Theorem~\ref{thm:detMatroid}, and is 
organized as follows. In Subsection~\ref{sub:Elementary-Proof}, we give two useful
summation closed forms. In Subsection~\ref{app:subC2Not}, we present our notation.
In Subsection~\ref{app:StructuralLemmaDGM}, we extend Lemma~\ref{lem:Exp} to 
bipartite-matroid graphs. In Subsection~\ref{app:ProofTheoremDetMat}, we prove 
Theorem~\ref{thm:detMatroid}.

\subsection{Summation bounds} \label{sub:Elementary-Proof}
\begin{claim}
	\label{clm:Ind} For any $k\in\mathbb{N}$ and $n\in\mathbb{N}_{+}$ ,
	we have
	\[
	\sum_{\ell=n}^{n+k}\frac{1}{\ell\cdot(\ell+1)}=\frac{k+1}{n(n+k+1)}.
	\]
\end{claim}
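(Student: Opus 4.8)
The plan is to prove the identity by partial-fraction decomposition followed by telescoping, which is the cleanest route; I will also note the induction alternative in case a fully elementary argument is preferred.

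First I would observe that for every integer $\ell \geq 1$ one has the partial-fraction identity $\frac{1}{\ell(\ell+1)} = \frac{1}{\ell} - \frac{1}{\ell+1}$, which is verified by putting the right-hand side over the common denominator $\ell(\ell+1)$. Substituting this into the sum gives
\[
\sum_{\ell=n}^{n+k}\frac{1}{\ell(\ell+1)} = \sum_{\ell=n}^{n+k}\left(\frac{1}{\ell} - \frac{1}{\ell+1}\right).
\]
The right-hand side is a telescoping sum: all intermediate terms $\frac{1}{n+1}, \frac{1}{n+2}, \dots, \frac{1}{n+k}$ cancel, leaving only the first summand's positive part and the last summand's negative part, namely $\frac{1}{n} - \frac{1}{n+k+1}$.

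Finally I would combine these two fractions over the common denominator $n(n+k+1)$, obtaining $\frac{(n+k+1) - n}{n(n+k+1)} = \frac{k+1}{n(n+k+1)}$, which is exactly the claimed closed form. Since $n \in \mathbb{N}_+$ and $k \in \mathbb{N}$, all denominators appearing are strictly positive, so no division-by-zero issue arises and the manipulation is valid; the case $k=0$ is also covered, where both sides equal $\frac{1}{n(n+1)}$.

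There is essentially no obstacle here: the only thing to be careful about is bookkeeping the index range of the telescoping cancellation correctly (the sum runs over $\ell = n, \dots, n+k$, so $k+1$ terms, and the surviving endpoints are $\frac1n$ and $\frac{1}{n+k+1}$). As an alternative, one could instead argue by induction on $k$: the base case $k=0$ is the partial-fraction identity itself, and the inductive step adds the term $\frac{1}{(n+k+1)(n+k+2)}$ to $\frac{k+1}{n(n+k+1)}$ and checks — by clearing denominators — that the result equals $\frac{k+2}{n(n+k+2)}$. I would present the telescoping proof as the main argument since it is shorter and more transparent.
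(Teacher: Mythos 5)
Your proof is correct, but it takes a slightly different route from the paper's. The paper argues by induction on $k$: it verifies the base case directly and then, in the inductive step, adds the term $\frac{1}{(n+k+1)(n+k+2)}$ to the hypothesized value $\frac{k+1}{n(n+k+1)}$ and simplifies over a common denominator — essentially the ``alternative'' you sketch at the end. Your main argument instead uses the partial-fraction identity $\frac{1}{\ell(\ell+1)}=\frac{1}{\ell}-\frac{1}{\ell+1}$ and telescoping, which collapses the sum immediately to $\frac{1}{n}-\frac{1}{n+k+1}=\frac{k+1}{n(n+k+1)}$. Both are complete and rigorous; the telescoping proof is shorter and makes the closed form transparent at a glance, while the paper's induction is more mechanical but requires no insight beyond the algebraic check in the inductive step. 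Your bookkeeping of the index range ($k+1$ terms, surviving endpoints $\frac{1}{n}$ and $\frac{1}{n+k+1}$) and the positivity of all denominators for $n\in\mathbb{N}_+$, $k\in\mathbb{N}$ are handled correctly, so there is no gap.
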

\begin{proof}
	The proof is by induction. The base case follows by 
	\[
	\frac{1}{n}\cdot\frac{1}{n+1}+\frac{1}{n+1}\cdot\frac{1}{n+2}=\frac{2}{n(n+2)}.
	\]
	Our inductive hypothesis is $\sum_{\ell=n}^{n+k}\frac{1}{\ell\cdot(\ell+1)}=\frac{k+1}{n(n+k+1)}$.
	Then, we have
	\begin{eqnarray*}
		\sum_{\ell=n}^{n+k+1}\frac{1}{\ell}\cdot\frac{1}{\ell+1} & = & \frac{k+1}{n(n+k+1)}+\frac{1}{n+k+1}\cdot\frac{1}{n+k+2}=\frac{1}{n+k+1}\left[\frac{k+1}{n}+\frac{1}{n+k+2}\right]\\
		& = & \frac{1}{n+k+1}\left[\frac{(k+1)(n+k+1)+n+k+1}{n(n+k+2)}\right]=\frac{k+2}{n(n+k+2)}.
	\end{eqnarray*}
	
\end{proof}

\begin{claim}
	\label{clm:assympt} For any $c>1$, it holds that 
	\[
	f(c,n):=\frac{1}{n}\sum_{\ell=\lfloor n/c\rfloor}^{n-1}\frac{\left(\lfloor n/c\rfloor-1\right)\lfloor n/c\rfloor}{(\ell-1)\ell}=\frac{\lfloor n/c\rfloor}{n}\cdot\left[\frac{n-1}{n-2}-\frac{\lfloor n/c\rfloor}{n-2}\right]\apprge\frac{c-1}{c^{2}}.
	\]
	In particular, the lower bound is maximized for $c=2$ and yields
	$f(2,n)\apprge1/4$.\end{claim}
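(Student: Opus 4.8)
The plan is to collapse $f(c,n)$ into a single rational function of $q:=\lfloor n/c\rfloor$, extract its limit as $n\to\infty$, and then optimize the resulting bound over $c$ by elementary one-variable calculus. First I would pull the constant factor $\frac{(q-1)q}{n}$ out of the sum and evaluate $\sum_{\ell=q}^{n-1}\frac{1}{(\ell-1)\ell}$ in closed form --- either by applying Claim~\ref{clm:Ind} after the index shift $\ell\mapsto\ell-1$, or directly by telescoping $\frac{1}{(\ell-1)\ell}=\frac{1}{\ell-1}-\frac{1}{\ell}$. In either case the sum equals a ratio whose denominator contains the factor $(q-1)$, which cancels against the prefactor; the remaining routine algebra (after lining up the summation endpoints) yields the middle expression $\frac{q}{n}\big[\frac{n-1}{n-2}-\frac{q}{n-2}\big]$ stated in the claim.

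For the inequality asserting that $f(c,n)$ is asymptotically at least $\frac{c-1}{c^{2}}$, I would simply feed the crude estimates $n/c-1\le q\le n/c$ into this closed form. Using $q/n\ge\frac1c-\frac1n$ together with $\frac{n-1-q}{n-2}\ge 1-\frac1c-\frac1{n-2}$ (which follows from $q\le n/c$ and $n-2<n$, noting $|1-2/c|\le 1$ for $c\ge 1$), one obtains
\[
f(c,n)\ \ge\ \Big(\frac{1}{c}-\frac{1}{n}\Big)\Big(1-\frac{1}{c}-\frac{1}{n-2}\Big)\ \ge\ \frac{c-1}{c^{2}}-O\!\left(\frac1n\right),
\]
so that $f(c,n)=\frac{c-1}{c^{2}}+o(1)$, and in particular $f(c,n)$ is asymptotically at least $\frac{c-1}{c^{2}}$.

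To finish, set $g(c)=\frac{c-1}{c^{2}}=\frac1c-\frac1{c^{2}}$ for $c>1$, so that $g'(c)=\frac{2-c}{c^{3}}$, which is positive on $(1,2)$, zero at $c=2$, and negative on $(2,\infty)$; hence $g$ attains its unique maximum over $(1,\infty)$ at $c=2$, with $g(2)=\frac14$. Substituting $c=2$ into the bound above gives $f(2,n)\ge\frac14-O(1/n)$, i.e.\ $f(2,n)$ is asymptotically at least $1/4$. There is no genuine obstacle in this elementary claim; the only point that needs a little care is the floor-function bookkeeping when passing from the exact rational expression in $q=\lfloor n/c\rfloor$ to the clean $o(1)$-error lower bound, and matching the summation limits precisely so as to recover the stated intermediate identity verbatim.
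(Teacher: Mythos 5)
Your approach is the same as the paper's: pull out the constant prefactor $\tfrac{(q-1)q}{n}$ with $q=\lfloor n/c\rfloor$, collapse the sum by telescoping (or, equivalently, by Claim~\ref{clm:Ind} after the shift $\ell\mapsto\ell-1$), cancel the factor $q-1$, and then estimate the remaining rational function, finishing by elementary calculus to maximize $(c-1)/c^{2}$ at $c=2$. One caveat is worth flagging: if you actually carry out the telescoping you get
\[
\sum_{\ell=q}^{n-1}\frac{1}{(\ell-1)\ell}=\frac{1}{q-1}-\frac{1}{n-1}=\frac{n-q}{(q-1)(n-1)},
\qquad\text{hence}\qquad
f(c,n)=\frac{q\,(n-q)}{n(n-1)},
\]
which does \emph{not} coincide with the displayed intermediate expression $\frac{q}{n}\bigl[\frac{n-1}{n-2}-\frac{q}{n-2}\bigr]=\frac{q(n-1-q)}{n(n-2)}$; the latter is off by one in both the numerator and the denominator (the paper itself makes this slip when instantiating Claim~\ref{clm:Ind}). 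This discrepancy does not affect the $\frac{c-1}{c^{2}}-o(1)$ lower bound or the optimization at $c=2$, since the two rational expressions differ by $O(1/n)$ and have the same limit; your asymptotic estimates $q/n\ge 1/c-1/n$ and $(n-q)/(n-1)\ge 1-1/c$, and the calculus argument $g'(c)=(2-c)/c^{3}$, are all correct. So the proposal is sound in substance, but the phrase ``yields the middle expression stated in the claim'' should be corrected: the telescoping yields $\frac{q(n-q)}{n(n-1)}$, which agrees with the stated expression only up to an $O(1/n)$ error.
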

\begin{proof}
	By Claim \ref{clm:Ind}, we have
	\[
	\sum_{\ell=\lfloor n/c\rfloor}^{n-1}\frac{1}{\ell-1}\cdot\frac{1}{\ell}=\sum_{\ell=\lfloor n/c\rfloor-1}^{\lfloor n/c\rfloor+\left[n-\lfloor n/c\rfloor-2\right]}\frac{1}{\ell}\cdot\frac{1}{\ell+1}=\frac{n-\lfloor n/c\rfloor-1}{\left(\lfloor n/c\rfloor-1\right)\left(n-2\right)},
	\]
	and thus
	\begin{eqnarray*}
		\frac{1}{n}\sum_{\ell=\lfloor n/c\rfloor}^{n-1}\frac{\left(\lfloor n/c\rfloor-1\right)\lfloor n/c\rfloor}{(\ell-1)\ell} & = & \frac{\left(\lfloor n/c\rfloor-1\right)\lfloor n/c\rfloor}{n}\cdot\frac{n-\lfloor n/c\rfloor-1}{\left(\lfloor n/c\rfloor-1\right)\left(n-2\right)}\\
		& = & \frac{\lfloor n/c\rfloor}{n}\cdot\left[\frac{n-1}{n-2}-\frac{\lfloor n/c\rfloor}{n-2}\right]\\
		& \geq & \left(\frac{1}{c}-\frac{1}{n}\right)\cdot\left[1+\frac{1}{n-2}-\frac{n}{n-2}\cdot\frac{1}{c}\right]\\
		& \apprge & \frac{1}{c}\cdot\left[1-\frac{1}{c}\right]=\frac{c-1}{c^{2}}.
	\end{eqnarray*}

	Let $g(x)=(x-1)/x^{2}$. Observe that its first derivative satisfies
	\[
	\frac{d}{dx}g(x)=\frac{x^{2}-(x-1)2x}{x^{4}}=\frac{x(2-x)}{x^{4}}=0\iff x_{1}=0\quad x_{2}=2.
	\]
	Further, $g(x)$ decreases in the range $[-\infty,0]$, increases
	in $[0,2]$ and again decreases in $[2,\infty]$. Hence, we have $\max_{x>0}g(x)=g(2)=1/4$.
\end{proof}

\subsection{Notation}\label{app:subC2Not}

Given an undirected weighted graph $G^{\prime}=(V,E^{\prime},w^{\prime})$, we
construct a bipartite-matroid graph $G=(L\cup R,E,w)$ as follows:

i) let the set of the right nodes be $R=V$; ii) the set of the left nodes
be $L=E^{\prime}$, i.e., $\{u,v\}\in L$ if $\{u,v\}\in E^{\prime}$; iii)
and for each edge in $\{u,v\}\in E^{\prime}$ we insert two edges
$\{\{u,v\},u\},\{\{u,v\},v\}\in E$ with equal weight $w(\{\{u,v\},u\})=w(\{\{u,v\},v\})=w^{\prime}(\{u,v\})$.

Although $M^{(\ell)}$ is a matching, we note that $M$ is not a matchings
in the strict sense, since for an edge $\{\{u,v\},u\}\in E$ (similarly
$\{\{u,v\},v\}\in E$) to be matched it is required that both nodes
$u,v\in R$ are not yet matched. To emphasize this, we refer to $M$
as ``matching$^{\star}$''.

\subsection{Structural Lemma}\label{app:StructuralLemmaDGM}

We now extend Lemma \ref{lem:Exp} to bipartite-matroid graphs.
\begin{lemma}\label{lem:Exp-1} 
	Suppose $G=(L\cup R,E,w)$ is a perfect bipartite-matroid
	graph as above. Then, for every $c>1$ it holds for every $\ell\in\{\lfloor m/c\rfloor+1,\dots,m\}$
	that
	\[
	\mathbb{E}[A_{\ell}]\geq\frac{\lfloor m/c\rfloor-1}{(\ell-1)-1}\cdot\frac{\lfloor m/c\rfloor}{\ell-1}\cdot \frac{OPT}{m}.
	\]
	
\end{lemma}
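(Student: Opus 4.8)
The plan is to replay the proof of Lemma~\ref{lem:Exp} for the bipartite‑matroid graph $G=(L\cup R,E,w)$ built from $G'=(V,E',w')$ (so $|L|=m$ elements), tracking the one new feature of the matroid algorithm: an element $e^{(\ell)}=\{u,v\}$ can be taken only when \emph{both} endpoints $u,v$ are still unmatched in the current matching$^{\star}$ $M$. As in Lemma~\ref{lem:BipGraph}, I would condition on the event $R_\ell(S)$ that the first $\ell$ arriving elements form a fixed set $S\subseteq L$; given $R_\ell(S)$ the matching $M^{(\ell)}=M_S$ (the optimal matching$^{\star}$ of $B_G[S\cup N(S)]$) is determined, and the last‑arriving element $e_\ell$ is uniform over $S$. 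Since $A_\ell=0$ unless $e_\ell\in L[M_S]$ and $E^{(\ell)}$ occurs, this gives
\[
\mathbb{E}[A_\ell\mid R_\ell(S)]=\frac{1}{\ell}\sum_{e\in L[M_S]}w(e)\cdot Pr\big[E^{(\ell)}\mid R_\ell(S),\,e_\ell=e\big].
\]
Two things then remain: (a) a lower bound on $Pr[E^{(\ell)}\mid R_\ell(S),e_\ell=e]$ uniform over $S$ and $e$, and (b) the bound $\mathbb{E}_S[\sum_{e\in L[M_S]}w(e)]=\mathbb{E}_S[w(M_S)]\ge\frac{\ell}{m}\,OPT$.

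For (b) I would argue exactly as in Lemma~\ref{lem:BipGraph}: fix an offline optimal forest $M^{\star}$ of $G$ (the max‑weight spanning forest); its restriction $M^{\star}|_S$ to the edges lying in $S$ is again a forest, hence orientable with in‑degree at most one, hence a feasible matching$^{\star}$ of $B_G[S\cup N(S)]$. Therefore $w(M_S)\ge w(M^{\star}|_S)$ for every $S$, and averaging over the uniformly random $\ell$‑subset $S$,
\[
\mathbb{E}_S[w(M_S)]\ \ge\ \mathbb{E}_S[w(M^{\star}|_S)]=\frac{\binom{m-1}{\ell-1}}{\binom{m}{\ell}}\cdot OPT=\frac{\ell}{m}\cdot OPT.
\]

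The heart of the proof is (a), the analogue of Lemma~\ref{lem:PEEex} and Lemma~\ref{lem:RinPhaseIII}. Fix $S$ and the last element $e=\{u,v\}$, oriented in $M_S$ towards (say) $u$. For $k\in\{\lfloor m/c\rfloor+1,\dots,\ell-1\}$ let $Q_k$ be the event that processing the $k$‑th element $e_k$ leaves both $u$ and $v$ unmatched. Because the matroid algorithm matches $e_k$ only to its $M^{(k)}$‑assigned endpoint, if $u$ (resp.\ $v$) becomes matched at step $k$ then necessarily $M^{(k)}[e_k]=u$ (resp.\ $=v$); since $M^{(k)}$ matches at most one element of $S$ to $u$ and at most one to $v$, the same exchangeability computation as in Lemma~\ref{lem:PEEex} (conditioned on the relevant history, the $k$‑th arriving element is uniform among the $k$ seen so far) gives $Pr[M^{(k)}[e_k]=u\mid\cdots]\le\frac1k$ and likewise for $v$, so a union bound yields $Pr[\neg Q_k\mid\cdots]\le\frac2k$. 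Chaining these conditional probabilities as in \eqref{eq:condProb} and telescoping,
\[
Pr\big[E^{(\ell)}\mid R_\ell(S),e_\ell=e\big]\ \ge\ \prod_{k=\lfloor m/c\rfloor+1}^{\ell-1}\Big(1-\frac2k\Big)=\frac{(\lfloor m/c\rfloor-1)\lfloor m/c\rfloor}{(\ell-2)(\ell-1)},
\]
uniformly in $S,e$. Plugging (a) and (b) into the first display and averaging over $S$,
\[
\mathbb{E}[A_\ell]\ \ge\ \frac{(\lfloor m/c\rfloor-1)\lfloor m/c\rfloor}{(\ell-2)(\ell-1)}\cdot\frac1\ell\cdot\frac{\ell}{m}\cdot OPT=\frac{\lfloor m/c\rfloor-1}{(\ell-1)-1}\cdot\frac{\lfloor m/c\rfloor}{\ell-1}\cdot\frac{OPT}{m},
\]
which is the claim.

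I expect the main obstacle to be making the conditional chain in (a) fully rigorous: one must verify that conditioning on the earlier events $Q_{\lfloor m/c\rfloor+1},\dots,Q_{k-1}$ — which depend only on the identities of the earlier last‑elements and on the earlier optimal matchings $M^{(j)}$ — does not destroy the exchangeability that yields the $1/k$ bound for each of $u$ and $v$ separately. This is precisely the subtlety already resolved in the proof of Lemma~\ref{lem:PEEex}; the only genuinely new ingredient is the union bound over the two endpoints, responsible for the $2/k$ (rather than $1/k$) per step and hence for the extra factor $\frac{\lfloor m/c\rfloor-1}{\ell-2}$. A minor point is that, unlike the bipartite case, $M^{(k)}$ need not match all of $S$, so the per‑endpoint bound should be phrased as ``at most one element of $S$ is matched to $u$ in $M^{(k)}$'' rather than via a perfect matching; perfectness of $G$ is used, as before, only to harmlessly pad $G'$ (e.g.\ when it is disconnected).
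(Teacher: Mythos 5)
Your proposal is correct and follows essentially the same route as the paper: factor $\mathbb{E}[A_\ell]$ into a value term (Lemma~\ref{lem:BipGraph}, giving $\mathrm{OPT}/m$ via exchangeability over the random $\ell$-subset) and a probability term (the analogue of Lemma~\ref{lem:PEEex}), and bound the latter by $\prod_{k}(1-2/k)$ using a union bound over the two endpoints $u,v$ at each step, which telescopes to $\frac{(\lfloor m/c\rfloor-1)\lfloor m/c\rfloor}{(\ell-2)(\ell-1)}$ exactly as in Lemma~\ref{lem:pbmG}. One tiny imprecision: $M^{(\ell)}$ in Algorithm~\ref{alg:graphic} is the optimal \emph{standard} matching of $B_G[S\cup N(S)]$, not the optimal matching$^\star$, but since $M^\star|_S$ is a feasible standard matching as well, your bound $w(M_S)\ge w(M^\star|_S)$ goes through unchanged.
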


It is straightforward to verify that the statement of Lemma~\ref{lem:BipGraph}
holds for perfect bipartite-matroid graphs, and yields
\[
	\mathbb{E}\left[w(e^{(\ell)})\ |\ E^{(\ell)}\right]\geq\frac{OPT}{m}.
\]
Hence, to prove Lemma~\ref{lem:Exp-1} it remains 
to extend the statement of Lemma \ref{lem:PEEex}.
\begin{lemma}\label{lem:pbmG} 
	For every perfect bipartite-matroid graph $G=(L\cup R,E,w)$
	it holds that
	\[
	Pr\left[E^{(\ell)}\right]\geq\frac{\lfloor m/c\rfloor-1}{(\ell-1)-1}\cdot\frac{\lfloor m/c\rfloor}{\ell-1}.
	\]
\end{lemma}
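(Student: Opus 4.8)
\noindent The plan is to extend the proof of Lemma~\ref{lem:PEEex} essentially verbatim; the only structural change is that in the matroid setting an element $e^{(\ell)}=a^{(\ell)}$ can be added to the online matching$^\star$ $M$ only when \emph{both} endpoints of the corresponding element of $E'$ are free, so we must keep two right nodes unmatched instead of one --- and this is exactly what turns the bound $\lfloor n/c\rfloor/(\ell-1)$ of Lemma~\ref{lem:PEEex} into a product of two such factors. First I would fix $S\in\mathcal{L}(\ell)$ and an edge $(i,r_i)\in M_S$, writing $i=\{u,v\}\in L=E'$ so that $r_i\in\{u,v\}$, and condition on $F_{S,\ell}^{(i)}$. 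Exactly as in Lemma~\ref{lem:PEEex}, it suffices to lower bound $\Pr[E^{(\ell)}\mid F_{S,\ell}^{(i)}]$, since $\Pr[E^{(\ell)}]$ is, by the decomposition \eqref{eq:PEMEl} (with $n$ replaced by $m$), a nonnegative average of these conditional probabilities over all $S$ and all $(i,r_i)\in M_S$. Since $e^{(\ell)}=\{i,r_i\}$ and the acceptance rule of Algorithm~\ref{alg:graphic} requires both $u$ and $v$ to be unmatched, the event $E^{(\ell)}$ is exactly the event that $u$ and $v$ are both unmatched in $M$ at the end of step $\ell-1$.

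Next, for $k\in\{\lfloor m/c\rfloor+1,\dots,\ell-1\}$ I would set $Q_k=\{M^{(k)}[k]\notin\{u,v\}\}$ (the optimum matching of the first $k$ arrivals does not assign the $k$-th arrival to $u$ or to $v$). If $Q_k$ holds for all such $k$ then no edge incident to $u$ or $v$ is ever added to $M$, so $\bigcap_k Q_k\subseteq E^{(\ell)}$ and it suffices to bound $\Pr[\bigcap_k Q_k\mid F_{S,\ell}^{(i)}]$ from below. With $W_{i,k}=F_{S,\ell}^{(i)}\wedge\bigwedge_{j<k}Q_j$ as in Lemma~\ref{lem:PEEex}, the chain rule gives $\Pr[\bigcap_k Q_k\mid F_{S,\ell}^{(i)}]=\prod_k\Pr[Q_k\mid W_{i,k}]$, and the crux is to show $\Pr[\neg Q_k\mid W_{i,k}]\le 2/k$. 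Here the conditioning argument of Lemma~\ref{lem:PEEex} carries over unchanged: it only uses that the ``bad'' event is of the form ``$M^{(k)}[k]$ falls into a prescribed target set'', and it shows that, conditioned on the unordered set $s$ of the first $k$ arrivals (and on $W_{i,k}$), the $k$-th arrival is uniform over the $k$ elements of $s$. The one genuinely new ingredient is the trivial counting bound: since $M_s$ is a matching, at most one element of $s$ is matched to $u$ and at most one to $v$, so at most two elements of $s$ are matched into $\{u,v\}$; hence $\Pr[\neg Q_k\mid W_{i,k}]\le 2/k$ and $\Pr[Q_k\mid W_{i,k}]\ge 1-2/k=(k-2)/k$.

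Finally I would telescope:
$$
\prod_{k=\lfloor m/c\rfloor+1}^{\ell-1}\frac{k-2}{k}=\frac{(\lfloor m/c\rfloor-1)\,\lfloor m/c\rfloor}{(\ell-2)(\ell-1)}=\frac{\lfloor m/c\rfloor-1}{(\ell-1)-1}\cdot\frac{\lfloor m/c\rfloor}{\ell-1},
$$
which is the claimed lower bound on $\Pr[E^{(\ell)}\mid F_{S,\ell}^{(i)}]$, and then average over $S$ and $(i,r_i)$ as above. The main obstacle is not a new idea but bookkeeping: one has to re-check that every step of the delicate conditioning in Lemma~\ref{lem:PEEex} --- in particular the uniformity of the last arrival under $W_{i,k}$ and the analogue of \eqref{eq:notQkFsli} --- survives verbatim when the singleton $\{r_i\}$ is replaced by the two-element set $\{u,v\}$. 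Granting that, the factor-$2$ counting bound and the telescoping are immediate, and they reproduce precisely the ``two shifted copies'' form of the stated inequality.
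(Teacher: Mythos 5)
Your proposal is correct and takes essentially the same approach as the paper's own proof: condition on $F_{S,\ell}^{(i)}$, show that the probability that the $k$-th arrival is assigned to one of the two endpoints $u,v$ is at most $2/k$, and telescope $\prod_{k=\lfloor m/c\rfloor+1}^{\ell-1}(1-2/k)$ to obtain the ``two shifted copies'' bound. The only cosmetic difference is that you bundle the bad event into a single $Q_k=\{M^{(k)}[k]\notin\{u,v\}\}$ and bound it directly by counting (at most two elements of $s$ can be matched into $\{u,v\}$), whereas the paper defines $Q_k^{u_i}$ and $Q_k^{v_i}$ separately and applies a union bound together with the single-node estimate \eqref{eq:notQkFsli}; the two arguments are equivalent.
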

\begin{proof}
	We follow the proof in Lemma \ref{lem:PEEex}, with the amendment
	that a node $\{u_{k},v_{k}\}\in L$ and an edge $\{\{u_{k},v_{k}\},r_{k}\}\in E$.
	Recall that for a \emph{fixed }subset $S\subseteq L$ with size $\ell$
	and an edge $(i,r_{i})\in M_{S}$, we can condition on the event $F_{S,\ell}^{(i)}$
	that \{the node set of $L_{\ell}$ equals $S$\} \emph{and} \{the
	edge $e^{(\ell)}=(i,r_{i})$\}.
	
	Let $Q_{k}^{r}$ be the event that
	\[
	\{\text{node }r\not\in M^{(k)}\}\vee\left\{ \{\text{node }r\in M^{(k)}\}\wedge\{M^{(k)}[k]\neq r\}\right\} ,
	\]
	and let $\mathcal{P}_{k}$ be the event that $Q_{k}^{u_{i}}\wedge Q_{k}^{v_{i}}$.
	Then, we have
	\[
	Pr\left[\{\{u_{i},v_{i}\},r_{i}\}\cup M\text{ is a matching}^{\star}\ |\ F_{S,\ell}^{(i)}\right]=Pr\left[\bigwedge_{k=\lfloor m/c\rfloor+1}^{\ell-1}\mathcal{P}_{k}\ |\ F_{S,\ell}^{(i)}\right].
	\]
	Combining the Union bound and (\ref{eq:notQkFsli}), yields 
	\begin{eqnarray*}
		Pr\left[\invneg\mathcal{P}_{k}\ |\ F_{S,\ell}^{(i)}\right] & = & Pr\left[\invneg Q_{k}^{u_{i}}\vee\invneg Q_{k}^{v_{i}}\ |\ F_{S,\ell}^{(i)}\right]\\
		& \leq & Pr\left[\invneg Q_{k}^{u_{i}}\ |\ F_{S,\ell}^{(i)}\right]+Pr\left[\invneg Q_{k}^{v_{i}}\ |\ F_{S,\ell}^{(i)}\right]\\
		& \leq & \frac{2}{k}.
	\end{eqnarray*}
	Hence, using similar arguments as in the proof in Lemma \ref{lem:PEEex},
	we have 
	\[
	Pr\left[E^{(\ell)}\right]\geq\prod_{k=\lfloor m/c\rfloor+1}^{\ell-1}\left(1-\frac{2}{k}\right)=\frac{\lfloor m/c\rfloor-1}{(\ell-1)-1}\cdot\frac{\lfloor m/c\rfloor}{\ell-1}.
	\]
	
\end{proof}

\subsection{Proof of Theorem~\ref{thm:detMatroid}}\label{app:ProofTheoremDetMat}

Using Lemma~\ref{lem:Exp-1}, we have
\[
\mathbb{E}\left[\sum_{\ell=1}^{m}A_{\ell}\right]=\sum_{\ell=\lfloor m/c\rfloor+1}^{m}\mathbb{E}\left[A_{\ell}\right]\geq\frac{OPT}{m}\sum_{\ell=\lfloor m/c\rfloor+1}^{m}\frac{\lfloor m/c\rfloor-1}{(\ell-1)-1}\cdot\frac{\lfloor m/c\rfloor}{\ell-1}.
\]
The statement follows by Claim~\ref{clm:assympt} and noting that
\begin{equation}\label{eq:D4lowerBound}
\frac{1}{m}\sum_{\ell=\lfloor m/c\rfloor+1}^{m}\frac{\lfloor m/c\rfloor-1}{(\ell-1)-1}\cdot\frac{\lfloor m/c\rfloor}{\ell-1}
\geq\left(\frac{c-1}{c^{2}}-o(1)\right)\geq\left(\frac{1}{4}-o(1)\right).
\end{equation}

\newpage

\section{Graphic Matroid Secretary Algorithm with Predictions}
\label{app:graphic_prediction}

In this section, we prove the worst-case bound of $(d-1)/c^2$ in Theorem \ref{thm:graphic}, by analyzing Phase III of Algorithm \ref{alg:graphic_predic}.

\begin{theorem}\label{thm:PhaseIIIAlgo5}
	The Phase III in Algorithm~\ref{alg:graphic_predic} is
	$(\tfrac{d-1}{c^{2}}-o(1))$-competitive.
\end{theorem}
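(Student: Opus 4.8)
The plan is to lower bound the expected weight collected in Phase~III alone; since the matching returned by Algorithm~\ref{alg:graphic_predic} contains the Phase~III edges, this yields the worst-case guarantee in Theorem~\ref{thm:graphic}. Observe that Phase~III is exactly the deterministic bipartite-matroid algorithm of Appendix~\ref{app:DetGrMatrScrAlg} (Algorithm~\ref{alg:graphic}) with the first $\lfloor m/d\rfloor$ arrivals playing the role of the sampling prefix (the prefix $E'$ should be read as the whole set of elements seen so far, i.e.\ including those of Phase~II, so that $M^{(i)}$ is an optimal matching of $\{e_1,\dots,e_i\}$), except that the online matching$^\star$ it extends starts equal to the Phase~II matching $M_{II}$ rather than $\emptyset$. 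So, mimicking the proof of Theorem~\ref{thm:detMatroid}, for $i\in\{\lfloor m/d\rfloor+1,\dots,m\}$ I would let $A_i$ equal $w(e^{(i)})$ if the edge $e^{(i)}$ proposed for $e_i$ at step $i$ can be appended to the current matching — i.e.\ if both $V$-endpoints $u,v$ of $e_i$ are still unmatched — and $0$ otherwise, and split $\mathbb{E}[A_i]=\mathbb{E}[w(e^{(i)})\mid E^{(i)}]\cdot\Pr[E^{(i)}]$.

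For the weight factor, Lemma~\ref{lem:BipGraph} (which extends to perfect bipartite-matroid graphs, as noted in Appendix~\ref{app:DetGrMatrScrAlg}) gives $\mathbb{E}[w(e^{(i)})\mid E^{(i)}]\ge OPT/m$; its proof only uses that conditioning on $E^{(i)}$ does not bias which edge of the offline optimum of the first $i$ arrivals is $e^{(i)}$, and this is insensitive to the fact that the underlying online matching now includes $M_{II}$. The probability factor is where the loss enters, and it factorizes into two independent pieces once we condition on the set $A$ of the first $\lfloor m/d\rfloor$ arrivals. (i)~A given endpoint $r$ is \emph{not} touched in Phase~II as soon as the heaviest element of $A$ incident to $r$ falls in Phase~I: by the definition of the threshold $t_r$ in Algorithm~\ref{alg:graphic_predic}, this forces $r\notin S$ at every Phase~II step (using, as in the main proof, that the weights around $r$ are distinct), so $r$ is never selected as some $y_i$. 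Given $A$, that heaviest incident element is uniform over the $\lfloor m/d\rfloor$ positions of $A$, so this has probability $\lfloor m/c\rfloor/\lfloor m/d\rfloor\ge d/c-o(1)$; and since in a simple graph the edges of $A$ incident to $u$ and to $v$ are disjoint, the two maxima are distinct elements, so both endpoints survive Phase~II with probability $\ge\frac{\lfloor m/c\rfloor(\lfloor m/c\rfloor-1)}{\lfloor m/d\rfloor(\lfloor m/d\rfloor-1)}\ge (d/c)^2-o(1)$. (ii)~Conditioned on $A$, whether $u$ or $v$ is grabbed by some Phase~III step $k<i$ depends only on the uniform internal order of $E\setminus A$; exactly as in the proof of Lemma~\ref{lem:pbmG} — using $\Pr[M^{(k)}[k]=r\mid \text{set of first }k\text{ arrivals}]\le 1/k$ for each of $r\in\{u,v\}$ together with a union bound, and a chain of conditional probabilities as in Lemma~\ref{lem:PEEex} — neither endpoint is grabbed with probability at least $\prod_{k=\lfloor m/d\rfloor+1}^{i-1}(1-2/k)=\frac{(\lfloor m/d\rfloor-1)\lfloor m/d\rfloor}{(i-2)(i-1)}$. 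As the order within $A$ and the order within $E\setminus A$ are independent given $A$, multiplying (i) and (ii) yields
\[
\Pr[E^{(i)}]\ \ge\ \big((d/c)^2-o(1)\big)\cdot\frac{(\lfloor m/d\rfloor-1)\lfloor m/d\rfloor}{(i-2)(i-1)}.
\]

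Summing over $i=\lfloor m/d\rfloor+1,\dots,m$ and applying Claim~\ref{clm:assympt} (equivalently \eqref{eq:D4lowerBound}) with $c$ replaced by $d$,
\[
\sum_{i=\lfloor m/d\rfloor+1}^{m}\mathbb{E}[A_i]\ \ge\ \big((d/c)^2-o(1)\big)\,\frac{OPT}{m}\sum_{i=\lfloor m/d\rfloor+1}^{m}\frac{(\lfloor m/d\rfloor-1)\lfloor m/d\rfloor}{(i-2)(i-1)}\ \ge\ \big((d/c)^2-o(1)\big)\Big(\tfrac{d-1}{d^2}-o(1)\Big)OPT,
\]
which equals $\big(\tfrac{d-1}{c^2}-o(1)\big)OPT$, as claimed. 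The step I expect to be most delicate is the second paragraph: one must condition on $A$ (and, for the weight factor, on the set of the first $i$ arrivals, which contains $A$), verify that the ``survives Phase~II'' events and the ``not grabbed in Phase~III'' events genuinely factorize, and check that the $1/k$-type bounds from Lemma~\ref{lem:pbmG} still hold verbatim in this conditioned world — all routine, but requiring the same careful bookkeeping with conditional probabilities as in Lemma~\ref{lem:PEEex}.
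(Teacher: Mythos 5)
Your proposal matches the paper's proof essentially step for step: you split $\mathbb{E}[A_i]=\mathbb{E}[w(e^{(i)})\mid E^{(i)}]\cdot\Pr[E^{(i)}]$, bound the weight factor by $\mathrm{OPT}/m$ via Lemma~\ref{lem:BipGraph}, factor $\Pr[E^{(i)}]$ into a Phase~II survival probability $\geq(d/c)^2-o(1)$ (the paper's Claim~\ref{clm:PhaseIII}) times the Kesselheim-style chain $\prod_{k}(1-2/k)$ for Phase~III (the paper's Lemma~\ref{lem:pbmG_Phases}), and close with Claim~\ref{clm:assympt} applied with $d$ in place of $c$. The one place where your justification is off is the assertion that ``in a simple graph the edges of $A$ incident to $u$ and to $v$ are disjoint, so the two maxima are distinct elements'': this is false, since the element $\{u,v\}$ itself may lie in $A$ and be the heaviest incident element for \emph{both} endpoints. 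The paper handles this by a case split in the proof of Claim~\ref{clm:PhaseIII}: when $e_{\max}'(u,S)=e_{\max}'(v,S)$ (necessarily $\{u,v\}$), only one element must fall in Phase~I and the probability is $d/c\geq(d/c)^2$, which is at least as strong; when they differ, your computation applies. So your bound survives unchanged, but the disjointness justification should be replaced by this case analysis. (You are also right to flag the conditioning bookkeeping as the delicate part — one must carry the conditioning on $F_{S,\ell}^{(i)}$ and $\Phi_{u,v}^{\notin M}$ through the chain exactly as in Lemma~\ref{lem:PEEex}, and the paper does exactly that.)
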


The rest of this section is denoted to proving Theorem~\ref{thm:PhaseIIIAlgo5}, 
and is organized as follows. In Subsection~\ref{app:eligibilityPairNodesProb}, 
we analyze the probability that a fixed pair of distinct vertices is eligible for
matching in Phase III. In Subsection~\ref{app:LBmatchStrEvnt}, we give a lower bound on
the event that \{$e^{(\ell)}\cup M$ is a matching$^{\star}$\}.
In Subsection~\ref{app:ConcludingArgument}, we prove Theorem~\ref{thm:PhaseIIIAlgo5}.

\subsection{Pairwise Node Eligibility in Phase III}\label{app:eligibilityPairNodesProb}

For any distinct nodes $u,v\in R$, we denote by $\Phi_{u,v}^{\notin M}$
the event that $$\{u\text{ and }v\text{ are \emph{not} matched in Phase II}\}.$$

\begin{claim}\label{clm:PhaseIII} 
	It holds that
	\[
	Pr\left[\Phi_{u,v}^{\notin M}\right]\geq\left(\frac{d}{c}\right)^{2}\cdot\frac{1-\frac{c}{m}}{1-\frac{d}{m}}.
	\]
\end{claim}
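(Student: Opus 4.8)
The plan is to bound $Pr[\Phi_{u,v}^{\notin M}]$ from below by the probability of a cleaner \emph{blocking} event, and to compute that probability by a secretary‑style argument. For a vertex $w\in R$, say that $w$ is \emph{blocked} if either no edge of $G$ incident to $w$ arrives in Phase~II, or else some edge incident to $w$ arrives in Phase~I whose weight exceeds the weight of every edge incident to $w$ that arrives in Phase~II. The first thing I would verify is that ``$w$ is blocked'' implies ``$w$ is not matched in Phase~II of Algorithm~\ref{alg:graphic_predic}'': the threshold $t_w$ is only ever updated in the Phase~I loop, so throughout Phase~II it equals the largest weight of an edge incident to $w$ seen in Phase~I; since a vertex $x$ is placed into the set $S$ at a Phase~II step only when the arriving element has weight at least $\max\{t_x,p_x^{*}-\lambda\}\ge t_x$, and $w$ can only be matched in Phase~II through an element incident to it, being blocked makes $w\in S$ impossible at every Phase~II step. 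Hence the event ``$u$ blocked and $v$ blocked'' is contained in $\Phi_{u,v}^{\notin M}$, and it suffices to lower bound the probability that both $u$ and $v$ are blocked. As in Theorem~\ref{thm:graphic} I would work under the simplification that the edge weights around each vertex are distinct (a global tie‑break otherwise), and abbreviate $p=\lfloor m/c\rfloor$, $q=\lfloor m/d\rfloor$.

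To analyse this I would generate the uniformly random ordering in two stages: first pick the set $\mathcal{S}$ of \emph{survivors} — the $q$ edges that land in Phases~I--II — which is a uniformly random $q$‑subset of the $m$ edges; then pick the $p$‑element subset of $\mathcal{S}$ that lands in Phase~I, which is a uniformly random $p$‑subset of $\mathcal{S}$. Condition on $\mathcal{S}$. A vertex $w$ is blocked if and only if it has no incident survivor, or its \emph{heaviest} incident survivor — a single well‑defined edge $a_w\in\mathcal{S}$ — is selected into Phase~I at the second stage. There are three scenarios. If both $u$ and $v$ have an incident survivor and $a_u\ne a_v$, then since the Phase~I set is a uniformly random $p$‑subset of the $q$ survivors, both $a_u$ and $a_v$ lie in Phase~I with probability exactly $\frac{p(p-1)}{q(q-1)}$. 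If $a_u=a_v$ — which can happen only when $\{u,v\}\in E$ and this common edge is the heaviest survivor at both of its endpoints — the two blocking events coincide and the probability is $\frac{p}{q}\ge\frac{p(p-1)}{q(q-1)}$. If $u$ (or $v$) has no incident survivor, its blocking event is automatic and the conditional probability is at least $\frac{p}{q}\ge\frac{p(p-1)}{q(q-1)}$. In every scenario the conditional probability is at least $\frac{p(p-1)}{q(q-1)}$, so taking expectation over $\mathcal{S}$ yields
\[
Pr\!\left[\Phi_{u,v}^{\notin M}\right]\ \ge\ Pr[\,u\text{ and }v\text{ blocked}\,]\ \ge\ \frac{p(p-1)}{q(q-1)}\ =\ \frac{\lfloor m/c\rfloor\,(\lfloor m/c\rfloor-1)}{\lfloor m/d\rfloor\,(\lfloor m/d\rfloor-1)}.
\]

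Finally, when $c$ and $d$ divide $m$ the right‑hand side equals $\frac{(m/c)(m/c-1)}{(m/d)(m/d-1)}=\left(\frac{d}{c}\right)^{2}\cdot\frac{1-c/m}{1-d/m}$, exactly the claimed bound; in general the floors contribute only a lower‑order discrepancy that can be absorbed into the usual $o(1)$ terms (or one assumes divisibility). I expect the main difficulty to be conceptual rather than computational in two places. First, one must notice that the naive sufficient event ``the heaviest edge incident to $w$ lands in Phase~I'' is too weak: it has probability only about $1/c$, whereas for $d>1$ we need about $d/c$ per vertex, so the right object is the weaker ``blocked'' event, whose justification hinges on the fact that the threshold $t_w$ is frozen after Phase~I. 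Second, the possibly shared edge $\{u,v\}$ threatens to couple the two blocking events; the two‑stage sampling is what makes this harmless, since a coincidence $a_u=a_v$ can only improve the conditional probability.
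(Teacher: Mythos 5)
Your proof is correct and follows essentially the same route as the paper's: the paper also conditions on the set of Phase I/II elements, observes that the heaviest Phase I/II edge incident to each of $u$ and $v$ landing in Phase I forces $u,v \notin S$ throughout Phase II (because $t_u,t_v$ are frozen after Phase I), splits into cases according to whether these two heaviest incident edges coincide, and arrives at the same hypergeometric count $\frac{\lfloor m/c\rfloor(\lfloor m/c\rfloor-1)}{\lfloor m/d\rfloor(\lfloor m/d\rfloor-1)}$. Your write‑up is in fact slightly more careful than the paper's: you correctly use an inequality where the paper asserts an equality, and you explicitly dispose of the degenerate case in which $u$ or $v$ has no incident edge among the Phase I/II elements.
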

\begin{proof}
	Let $S$ be a random variable denoting the set of all nodes in $L$
	that appear in Phase I and Phase II. Let 
	\[
	e_{\max}^{\prime}(u,S)=\arg\max_{\{u,z\}\in S}w(u,z)
	\]
	be a random variable denoting the node $\{u,z\}\in L$ with largest
	weight seen in the set $S$.
	
	The proof proceeds by case distinction:
	
	\textbf{Case 1.} Suppose $e_{\max}^{\prime}(u,S)=e_{\max}^{\prime}(v,S)$,
	i.e., there is a node $\{u,v\}\in S$. Let $\mathcal{K}_{r}(S)$ be
	the event that node $e_{\max}^{\prime}(r,S)\in S$ is sampled in Phase
	I. By conditioning on the choice of $S$, we have 
	\[
	Pr\left[\Phi_{u,v}^{\notin M}\right]=\sum_{S}Pr\left[\mathcal{K}_{r}(S)\ |\ S\right]\cdot Pr\left[S\right]=\frac{m/c}{m/d}\cdot\sum_{S}Pr\left[S\right]=\frac{d}{c}.
	\]

	\textbf{Case 2.} Suppose $e_{\max}^{\prime}(u,S)\neq e_{\max}^{\prime}(v,S)$,
	i.e., there are distinct nodes $\{u,x\}\in S$ and $\{v,y\}\in S$
	with largest weight, respectively from $u$ and $v$. By conditioning
	on the choice of $S$, we have
	\[
	Pr\left[\mathcal{K}_{u}(S)\wedge\mathcal{K}_{v}(S)\ |\ S\right]=\frac{2{\frac{m}{c} \choose 2}\cdot(\frac{m}{d}-2)!}{(\frac{m}{d})!}=\left(\frac{d}{c}\right)^{2}\cdot\frac{1-\frac{c}{m}}{1-\frac{d}{m}},
	\]
	and thus 
	\[
	Pr\left[\Phi_{u,v}^{\notin M}\right]=\sum_{S}Pr\left[\mathcal{K}_{u}(S)\wedge\mathcal{K}_{v}(S)\ |\ S\right]\cdot Pr\left[S\right]=\left(\frac{d}{c}\right)^{2}\cdot\frac{1-\frac{c}{m}}{1-\frac{d}{m}}.
	\]
	
\end{proof}

\subsection{Lower Bounding the Matching$^{\star}$ Event}\label{app:LBmatchStrEvnt}

Recall that $E^{(\ell)}$ denotes the event that \{$e^{(\ell)}\cup M$ is
a matching$^{\star}$\}, where (r.v.) $M$ is the current online matching$^{\star}$, see Subsection~\ref{app:subC2Not} for details.
In order to control the possible negative side effect of selecting suboptimal edges in Phase II,
we extend Lemma~\ref{lem:pbmG} and give a lower bound on the event $E^{(\ell)}$ 
for any node $\ell\in L$ that appears in Phase III.
\begin{lemma}\label{lem:pbmG_Phases}
	For every perfect bipartite-matroid graph
	$G=(L\cup R,E,w)$, Algorithm~\ref{alg:graphic_predic} guarantees in Phase III that 
	\[
	Pr\left[E^{(\ell)}\right]\geq\frac{\lfloor m/d\rfloor-1}{(\ell-1)-1}\cdot\frac{\lfloor m/d\rfloor}{\ell-1}\cdot Pr\left[\Phi_{u,v}^{\notin M}\right],\quad\forall\ell\in\{\lfloor m/d\rfloor+1,\dots,m\}.
	\]
\end{lemma}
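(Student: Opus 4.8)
The plan is to mimic the proof of Lemma~\ref{lem:pbmG}, but to carefully decouple two sources of "blocking": the edges committed in Phase~III and the edges committed in Phase~II. Fix $\ell \in \{\lfloor m/d\rfloor+1,\dots,m\}$ and let $e^{(\ell)} = \{\{u,v\},r\}$ be the edge the (offline-optimal restricted) matching $M^{(\ell)}$ assigns to the $\ell$-th arriving element. For $e^{(\ell)}$ to be addable to $M$ in Phase~III we need \emph{both} endpoints of the underlying graph edge $\{u,v\}$ to be currently unmatched in the matching$^\star$ $M$. Since $M$ is built first in Phase~II and then in Phase~III, the event $E^{(\ell)}$ is the intersection of (a) the event $\Phi_{u,v}^{\notin M}$ that $u$ and $v$ survive Phase~II, and (b) the event that, conditioned on surviving Phase~II, $u$ and $v$ are still unmatched after processing the Phase~III arrivals that precede $\ell$. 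The first I would like to lower-bound by $Pr[\Phi_{u,v}^{\notin M}]$ from Claim~\ref{clm:PhaseIII}, and the second by the product $\frac{\lfloor m/d\rfloor-1}{(\ell-1)-1}\cdot\frac{\lfloor m/d\rfloor}{\ell-1}$ exactly as in Lemma~\ref{lem:pbmG}, but with the prefix length $\lfloor m/c\rfloor$ replaced by $\lfloor m/d\rfloor$, since Phase~III now starts only after position $\lfloor m/d\rfloor$.

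Concretely, I would introduce, for each of $r \in \{u,v\}$ and each Phase~III step $k \in \{\lfloor m/d\rfloor+1,\dots,\ell-1\}$, the event $Q_k^r$ that "$r \notin M^{(k)}$, or $r \in M^{(k)}$ but $M^{(k)}[k]\neq r$" exactly as in the notation preceding Lemma~\ref{lem:pbmG}, and set $\mathcal{P}_k = Q_k^{u}\wedge Q_k^{v}$. Conditioning on a fixed subset $S$ of the first $\ell$ arrivals and on the identity of $e^{(\ell)}$ (the event $F_{S,\ell}^{(i)}$), I would argue that $r$ stays unmatched in Phase~III up to step $\ell-1$ iff $\bigwedge_{k=\lfloor m/d\rfloor+1}^{\ell-1}\mathcal{P}_k$ holds. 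The telescoping computation from Lemma~\ref{lem:PEEex}/Lemma~\ref{lem:pbmG} then gives, via the union bound $Pr[\neg \mathcal{P}_k \mid \cdots] \le 2/k$, that
\[
Pr\!\left[\bigwedge_{k=\lfloor m/d\rfloor+1}^{\ell-1}\mathcal{P}_k \;\Big|\; \Phi_{u,v}^{\notin M}\right] \;\ge\; \prod_{k=\lfloor m/d\rfloor+1}^{\ell-1}\left(1-\frac{2}{k}\right) \;=\; \frac{\lfloor m/d\rfloor-1}{(\ell-1)-1}\cdot\frac{\lfloor m/d\rfloor}{\ell-1}.
\]
Multiplying by $Pr[\Phi_{u,v}^{\notin M}]$ and using that $E^{(\ell)}$ is contained in the intersection of these two events yields the claimed bound.

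The main obstacle I anticipate is justifying the conditioning step cleanly: one must verify that the bound $Pr[\neg Q_k^r \mid \cdots] \le 1/k$ from equation~\eqref{eq:notQkFsli} still holds when we additionally condition on $\Phi_{u,v}^{\notin M}$ (i.e., on $u,v$ having survived Phase~II) and on the full Phase~I/II randomness. Here I would exploit that, conditioned on the \emph{set} $S$ of the first $\ell$ arrivals and on $e^{(\ell)}$, the internal order of the Phase~III prefix $\{\lfloor m/d\rfloor+1,\dots,\ell-1\}$ is still uniformly random and independent of which Phase~II edges were picked — so the key symmetry argument ("$M^{(k)}[k]=r$ with probability $1/k$ given $r\in M^{(k)}$", from Fact~\ref{fact:pBG}) goes through unchanged, and the Phase~II event $\Phi_{u,v}^{\notin M}$ becomes a measurable event on a disjoint part of the probability space. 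A secondary subtlety is that in the bipartite-matroid ("matching$^\star$") setting a Phase~III edge only occupies one $R$-node in the formal matching $M$ but blocks two graph vertices; the $2/k$ union bound already accounts for this, exactly as in Lemma~\ref{lem:pbmG}, so no new idea is needed there.
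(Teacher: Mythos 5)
Your proposal matches the paper's proof in all essential respects: the same decomposition of $E^{(\ell)}$ into the Phase~II survival event $\Phi_{u,v}^{\notin M}$ and the Phase~III events $\mathcal{P}_k = Q_k^{u}\wedge Q_k^{v}$, the same $2/k$ union bound, and the same telescoping product. One small slip: in the final sentence you say ``$E^{(\ell)}$ is contained in the intersection,'' but for the lower bound you need the reverse inclusion, i.e.\ $\Phi_{u,v}^{\notin M}\cap\bigwedge_k\mathcal{P}_k\subseteq E^{(\ell)}$; your preceding discussion makes clear this is what you mean.
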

\begin{proof}
	We follow the proof in Lemma \ref{lem:PEEex}, with the amendment
	that a node $\{u_{k},v_{k}\}\in L$ and an edge $\{\{u_{k},v_{k}\},r_{k}\}\in E$.
	Recall that for a \emph{fixed} subset $S\subseteq L$ with size $\ell$
	and a \emph{fixed} edge $\{\{u_{i},v_{i}\},r_{i}\}\in M_{S}$, we can condition
	on the event $F_{S,\ell}^{(i)}$ that 
	$$\{\text{the set of nodes of }L_{\ell}\text{ equals }S\}\text{ and }
	\{\text{the edge }e^{(\ell)}=\{\{u_{i},v_{i}\},r_{i}\}\}.$$
	
	Let $Q_{k}^{r}$ be the event that
	\[
	\{\text{node }r\not\in M^{(k)}\}\vee\left\{ \{\text{node }r\in M^{(k)}\}\wedge\{M^{(k)}[k]\neq r\}\right\} ,
	\]
	and let $\mathcal{P}_{k}$ denotes the event $Q_{k}^{u_{i}}\wedge Q_{k}^{v_{i}}$.
	The proof proceeds by case distinction:
	
	\textbf{Case 1.} For $\ell=\lfloor m/d\rfloor+1$, we have 
	\[
	Pr\left[E^{(\ell)}\right]  =  Pr\left[\{\{u_{i},v_{i}\},r_{i}\}\cup M\text{ is a matching}^{\star}\ |\ F_{S,\ell}^{(i)}\right]
	=  Pr\left[\Phi_{u,v}^{\notin M}\right].
	\]

	\textbf{Case 2. }For $\ell=\{\lfloor m/d\rfloor+2,\dots,m\}$, we have
	\begin{eqnarray*}
		Pr\left[E^{(\ell)}\right] & = & Pr\left[\{\{u_{i},v_{i}\},r_{i}\}\cup M\text{ is a matching}^{\star}\ |\ F_{S,\ell}^{(i)}\right]\\
		& = & Pr\left[\bigwedge_{k=\lfloor m/d\rfloor+1}^{\ell-1}\mathcal{P}_{k}\ |\ F_{S,\ell}^{(i)}\wedge\Phi_{u,v}^{\notin M}\right]\cdot Pr\left[\Phi_{u,v}^{\notin M}\right].
	\end{eqnarray*}
	Combining the Union bound and (\ref{eq:notQkFsli}), yields 
	\begin{eqnarray*}
		Pr\left[\invneg\mathcal{P}_{k}\ |\ F_{S,\ell}^{(i)}\wedge\Phi_{u,v}^{\notin M}\right] & = & Pr\left[\invneg Q_{k}^{u_{i}}\vee\invneg Q_{k}^{v_{i}}\ |\ F_{S,\ell}^{(i)}\wedge\Phi_{u,v}^{\notin M}\right]\\
		& \leq & Pr\left[\invneg Q_{k}^{u_{i}}\ |\ F_{S,\ell}^{(i)}\wedge\Phi_{u,v}^{\notin M}\right]+Pr\left[\invneg Q_{k}^{v_{i}}\ |\ F_{S,\ell}^{(i)}\wedge\Phi_{u,v}^{\notin M}\right]\\
		& \leq & \frac{2}{k}.
	\end{eqnarray*}
	Hence, using similar arguments as in the proof of Lemma \ref{lem:PEEex},
	we have
	\[
	Pr\left[\bigwedge_{k=\lfloor m/d\rfloor+1}^{\ell-1}\mathcal{P}_{k}\ |\ F_{S,\ell}^{(i)}\wedge\Phi_{u,v}^{\notin 	M}\right]\geq\prod_{k=\lfloor m/d\rfloor+1}^{\ell-1}\left(1-\frac{2}{k}\right)=\frac{\lfloor m/d\rfloor-1}{(\ell-1)-1}\cdot\frac{\lfloor m/d\rfloor}{\ell-1}.
	\]
	Therefore, it holds that 
	\[
	Pr\left[E^{(\ell)}\right]\geq\frac{\lfloor m/d\rfloor-1}{(\ell-1)-1}\cdot\frac{\lfloor m/d\rfloor}{\ell-1}\cdot Pr\left[\Phi_{u,v}^{\notin M}\right].
	\]
\end{proof}

\subsection{Proof of Theorem~\ref{thm:graphic}}\label{app:ConcludingArgument}
In this section, we analyze the expected contribution in Phase III. 
Our goal now is to lower bound the expression
\[
\mathbb{E}\left[A_{\ell}\right]=\mathbb{E}\left[w(e^{(\ell)})\ |\ E^{(\ell)}\right]\cdot Pr\left[E^{(\ell)}\right].
\]

It is straightforward to verify that Lemma~\ref{lem:BipGraph} holds
in the current setting and yields
\[
	\mathbb{E}\left[w(e^{(\ell)})\ |\ E^{(\ell)}\right]\geq \frac{OPT}{m}.
\]
Further, by Lemma~\ref{lem:pbmG_Phases}, it holds for every 
$\ell\in\{\lfloor m/d\rfloor+1,\dots,m\}$ that
\begin{eqnarray*}
	\mathbb{E}\left[A_{\ell}\right] & = & \mathbb{E}\left[w(e^{(\ell)})\ |\ E^{(\ell)}\right]\cdot Pr\left[E^{(\ell)}\right]\\
	& \geq & \frac{OPT}{m}\cdot\frac{\lfloor m/d\rfloor-1}{(\ell-1)-1}\cdot\frac{\lfloor m/d\rfloor}{\ell-1}\cdot Pr\left[\Phi_{u,v}^{\notin M}\right],
\end{eqnarray*}
and thus
\[
\sum_{\ell=\lfloor m/d\rfloor+1}^{m}\mathbb{E}\left[A_{\ell}\right]\geq Pr\left[\Phi_{u,v}^{\notin M}\right]\cdot\frac{OPT}{m}\sum_{\ell=\lfloor m/d\rfloor+1}^{m}\frac{\lfloor m/d\rfloor-1}{(\ell-1)-1}\cdot\frac{\lfloor m/d\rfloor}{\ell-1}.
\]

Hence, by combining the first inequality in \eqref{eq:D4lowerBound} and Claim \ref{clm:PhaseIII},
yields
\begin{eqnarray*}
\mathbb{E}\left[\sum_{\ell=\lfloor m/d\rfloor+1}^{m}A_{\ell}\right]&\geq&\left(\frac{d}{c}\right)^{2}(1-o(1))\cdot\left(\frac{d-1}{d^{2}}-o(1)\right)\cdot OPT\\&\geq&\left(\frac{d-1}{c^{2}}-o(1)\right)\cdot OPT.
\end{eqnarray*}

\newpage
\section{Truthful mechanism for unit-demand domain}
\label{app:truthful}
We first re-interpret some of the terminology in Section \ref{sec:bipartite}.\footnote{We use notation and terminology closely related to that in \cite{BIKK2018}.}
We have a set $L$ of agents and a set $R$ of items. Every agent $i \in L$ has a \emph{(private) value} $v_i \geq 0$ for a set of preferred items $R_i \subseteq R$. We write $\Omega$ for the set of all (partial) matchings between $R$ and $L$, called \emph{outcomes}, and  $A_i$ for the set of all (partial) matchings in which $i$ gets matched up with some node in $R_i$, i.e., \emph{satisfying outcomes} for agent $i$.  We use $X_i : \Omega \rightarrow \{0,1\}$ as the indicator function for the set $A_i$, i.e., we have $X_i(\omega) = 1$ if and only if $\omega \in A_i$. An agent's \emph{type} is her value $v_i$, which is private information. 

In the online setting, the agents arrive in a uniformly random order,
and, upon arrival, an agent announces a (common) value $v_i'$ that she
has for the items in $R_i$. The mechanism then commits to either
choosing an outcome $\omega \in A_i$, that (partially) matches up all
agents arrived so far, or some outcome $\omega \notin A_i$ that also
(partially) matches up all agents arrived so far but not $i$, by
definition of $A_i$. It also sets a price $\rho_i$,\footnote{We use
  $\rho_i$ in order to avoid confusion with the predictions $p_r^*$.}
which depends on the choice of $\omega$. We assume that $\rho_i\in\mathbb{N}_0$. The outcome $\omega$ should be consistent with previous outcomes in the sense that the proposed matching in step $i - 1$ is a submatching of that committed to in step $i$.\footnote{This submatching corresponds to the online matching $M$ that we construct in Algorithm \ref{alg:bipartite}.} After all agents have arrived the mechanism commits to the final outcome $\omega$ in step $n$. Informally speaking, whenever an agent arrives, the mechanism either offers her an item $r \in R_i$ at some price $p_i$, or it offers her nothing.

The goal of an agent is to maximize her utility 
$$
u_i(\omega) = v_i X_i(\omega) - \rho_i,
$$
and the goal of the mechanism designer is to maximize the \emph{social welfare} 
$$
\sum_{i \in L} v_i X_i(\omega).
$$

We will show that Algorithm \ref{sec:bipartite} can be used to design a truthful (online) mechanism in the case of so-called \emph{single-value unit-demand domains}. 

We want to design a mechanism that is \emph{truthful}, meaning that it is always in the best interest of an agent to report $v_i' = v_i$ for any declarations of the other agents and any arrival order of the agents. We do this by showing that the procedures in Phase II and III can be modified so that the assignment rules in Algorithm \ref{alg:bipartite} become \emph{monotone}, see, e.g., Chapter 16 in \cite{NRTV2007}. It is not hard to see, and well-known, that this kind of monotonicity incentivizes truthfulness. \\

\noindent In order to turn Algorithm \ref{alg:bipartite} into a truthful mechanism we need to make some modifications, and define a pricing scheme. The outcome $\omega$ in every step corresponds to the (online) matching $M$ we have chosen so far, together with the  edge $e^{i}$ in case we match up $i$ when she arrives.

The modification for Phase I is straightforward. For every arriving agent $i$, we choose as outcome the empty matching (which is not in $A_i$ for any $i$), and set a price of $\rho_i = 0$. That is, we assign no item to any agent arriving in Phase I. 

For Phase III we can use a similar idea as the \emph{price sampling algorithm} in \cite{BIKK2018}. We set price-thresholds $\rho(r) = p_r^* - \lambda$. Whenever an agent arrives in Phase III, we look at all the items in $R_i$ for which her reported value $v_i'$ exceeds the price threshold $\rho(r)$ and assign her (if any) the item $r'$ for which the price threshold is the lowest among those. We charge her a price of $\rho_i = p_{r'}^* - \lambda$ for the item $r'$. The fact that this incentivizes truthful behavior follows from similar arguments as those given in \cite{BIKK2018}. 

In order to incentivize truthfulness in Phase II, we exploit the fact that we are free to choose any fixed algorithm to compute the (offline) matching $M^i$ in that phase once a node $i$ has arrived. That is, Algorithm \ref{alg:bipartite} works for every choice of such an offline algorithm, but in order to guarantee truthfulness, we need a specific type of bipartite matching algorithm.\\ 

\noindent We first introduce an additional graph-theoretical
definition. We say that an instance of the bipartite matching problem
on a (bipartite) graph $G = (A \cup B, E)$, with $|A| = n$, has
uniform edge weights if all edges adjacent to a given node $a \in A$ have the same weight, i.e., we have $w(a,b) = w(a,b')$ for all $b,b' \in \mathcal{N}(a) \subseteq B$. We denote this common weight by $w_a$. Moreover, we write $(w_a',w_{-a})$ to denote the vector  $(w_1,\dots,w_{a-1},w_a',w_{a+1},\dots,w_n)$ in which $w_a$ is replaced by $w_a'$.

\begin{definition}
  \label{def:monotone}
We say that a deterministic (bipartite matching) algorithm
$\mathcal{A}$ is \emph{monotone for instances with uniform edge weights} if the following holds. For an instance $I = (G, (w_1,\dots,w_n))$ and  any fixed $a \in A$, there exists a critical value $\tau_a = \tau_a(I)$ such that $\mathcal{A}$ has the following properties:
\begin{enumerate}
\item The node $a$ does not get matched for any $ I = (G,(w_a',w_{-a}))$ with $w_a' < \tau_a$;
\item There exists a node $b \in \mathcal{N}(a)$, such that for any $w_a' \geq \tau_a$ the node $a$ gets matched up to $b$ in $I = (G,(w_a',w_{-a}))$.
\end{enumerate}
\end{definition}
We emphasize that whenever $w_a' \geq \tau_a$, we want $a$ to be matched up to the \emph{same} node $b$. So for any $w_a'$, the node $a$ either does not get matched up by $\mathcal{A}$, or it gets matched up to some fixed $b$. \\

\noindent Now, in Phase II we compute an offline matching $M^i$ using algorithm $\mathcal{A}$. If the edge $e^i = (i,r)$ assigned to $i$ in $M^i$ (if any) satisfies the condition that $M \cup e^i$ is a matching, where $M$ is the online matching constructed so far, we assign item $r$ to $i$. We charge a price of $\rho_i = \tau_a$.
It is not hard to see, using a standard argument, that any monotone bipartite matching algorithm with the given pricing rule incentivizes truthful behaviour (this is left to the reader).
 
A summary of the modifications to Algorithm \ref{alg:bipartite} can be found in Algorithm \ref{alg:bipartite_truthful}.
\IncMargin{1em}
\begin{algorithm}
\SetKwData{Left}{left}\SetKwData{This}{this}\SetKwData{Up}{up}
\SetKwFunction{Union}{Union}\SetKwFunction{FindCompress}{FindCompress}
\SetKwInOut{Input}{Input}\SetKwInOut{Output}{Output}
\Input{Predictions $p^*= (p_1^*,\dots,p_{|R|}^*)$, confidence parameter $\lambda > 0$, and $c > d \geq 1$. Monotone bipartite matching algorithm $\mathcal{A}$.}
\Output{Matching (or assignment) $M$ and prices $\rho = (\rho_1,\dots,\rho_n)$.}\medskip

\textbf{Phase I:} \\
\For{$i = 1,\dots,\lfloor n/c \rfloor$}{
Assign no item to agent $i$ and set $\rho_i = 0$.}
Let $L' = \{\ell_1,\dots,\ell_{\lfloor n/c \rfloor}\}$ and $M = \emptyset$.\\
\textbf{Phase II:} \\
\For{$i = \lfloor n/c \rfloor + 1, \dots, \lfloor n/d \rfloor$}{
Let $L' = L' \cup \{i\}$.\\
Let $M^{i} = \text{optimal matching on } G[L' \cup R]$ computed using $\mathcal{A}$. \\
Let $e^{i} = (i,r)$ be the edge assigned to $i$ in $M^{i}$.\\
\If{$M \cup e^i$ is a matching}{
Set $M = M \cup \{e^i\}$, i.e., assign item $r$ to $i$.\\
Set $\rho_i = \tau_i(G[L' \cup R],(v_1',\dots,v_i'))$.}}
\textbf{Phase III:} \\
\For{$i =  \lfloor n/d \rfloor + 1,\dots,n$}{
Let $S = \{r  \in \mathcal{N}(i) : v_i' \geq p_r^* - \lambda \text{ and } r \notin R[M]\}$\\
\If{$S \neq \emptyset$}{
Set $M = M \cup \{i,r'\}$ where $r' = \text{argmax} \{v_i' - (p_r^* - \lambda) : r \in S\}$, i.e., assign item $r'$ to agent $i$.\\
Set $\rho_i = p_{r'}^* - \lambda$.
}}
\caption{Truthful online mechanism for single value unit-demand  domains}
\label{alg:bipartite_truthful}
\end{algorithm}
\DecMargin{1em}
Based on Algorithm \ref{alg:bipartite_truthful}, and the analysis of
Algorithm \ref{alg:bipartite}, we obtain the following theorem,
provided there exists a monotone bipartite matching algorithm
$\mathcal{A}$ for instances with uniform edge weights. We will show the existence of such an algorithm in the proof of Theorem \ref{thm:truthful}.

\begin{theorem}
There is a deterministic truthful mechanism that is a
$g_{\lambda,c,d}(\eta)$-approximation for the problem of social
welfare maximization, with $g_{\lambda,c,d}$ as in Theorem
\ref{thm:bipartite} \label{thm:truthful}. Furthermore, the alignment rule and pricing scheme of the mechanism are computable in polynomial time.
\end{theorem}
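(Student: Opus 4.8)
The plan is to prove Theorem~\ref{thm:truthful} in three steps. First, I would exhibit a deterministic \emph{monotone} bipartite matching algorithm $\mathcal{A}$ for uniform-edge-weight instances in the precise sense of Definition~\ref{def:monotone}. Second, I would plug $\mathcal{A}$ into Algorithm~\ref{alg:bipartite_truthful} and check, phase by phase, that the induced allocation rule is monotone in each agent's reported value, so that the critical-value prices make truthful reporting a dominant strategy (the standard single-parameter argument, cf.\ Chapter~16 of \cite{NRTV2007}). Third, I would argue that, viewed purely as a procedure for building a matching, Algorithm~\ref{alg:bipartite_truthful} coincides with Algorithm~\ref{alg:bipartite} for one admissible choice of its Phase~II offline subroutine and its Phase~III tie-breaking, so that the social-welfare guarantee equals the competitive ratio $g_{\lambda,c,d}(\eta)$ of Theorem~\ref{thm:bipartite}. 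Polynomial-time computability of the allocation and pricing will be immediate from the explicit description of $\mathcal{A}$ and of the critical values.

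For the construction of $\mathcal{A}$ I would exploit the transversal-matroid structure hidden in the uniform-weight case. Fix an instance $G=(A\cup B,E)$ with $w(a,\cdot)\equiv w_a$. The set of $A$-endpoints of any matching is an independent set of the transversal matroid $\mathcal{M}$ on $A$, and conversely every independent set is matchable; hence a maximum-weight matching is one whose set of matched $A$-nodes is a maximum-weight basis of $\mathcal{M}$. Let $\mathcal{A}$ first run the matroid greedy on $A$ in order of non-increasing $w_a$, breaking ties by a fixed index order, to obtain the greedy basis $B^\star$, and then output the lexicographically smallest maximum matching of $G[B^\star\cup B]$ under a fixed, weight-independent ordering of the edges of $E$. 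Property~1 of Definition~\ref{def:monotone} is matroid monotonicity: $a\in B^\star$ iff $a\notin\mathrm{cl}_{\mathcal{M}}(\{a':w_{a'}>w_a\})$, and since this set (hence its closure) only shrinks as $w_a$ grows, there is a threshold $\tau_a$ with $a\in B^\star$ exactly for $w_a\geq\tau_a$; moreover $\tau_a$ is one of the at most $n$ values $w_{a'}$ (or $0$), which is why it is polynomial-time computable, and why $\tau_a\in\mathbb{N}_0$ once reports are assumed integral as in the model.

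The crux — and the step I expect to be the main obstacle — is Property~2: that $a$ gets matched to the \emph{same} item $b$ for \emph{every} $w_a\geq\tau_a$, which an arbitrary maximum-weight matching routine does not guarantee. The key lemma is: if $a$ lies in the greedy basis of a given weight order, then moving $a$ to any earlier position in that order leaves the greedy basis unchanged. This follows from the closure-exchange axiom: if $y$ is an element pushed one step back by the move and $X$ denotes its set of predecessors, then $a\notin\mathrm{cl}(X\cup\{y\})$ (because $y$ precedes $a$ and $a$ is in the original basis), so $y\in\mathrm{cl}(X\cup\{a\})$ would force $a\in\mathrm{cl}(X\cup\{y\})$, a contradiction; hence whether $y$ enters the basis is unaffected by inserting $a$ before it, and an induction along the shifted block gives the claim. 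Applying this to the orders $\pi_w$ arising for the different values $w=w_a\in[\tau_a,\infty)$, which differ only in the position of $a$, shows that $B^\star$ is a single fixed set on that whole range; consequently the canonical matching of $G[B^\star\cup B]$, and in particular $a$'s partner $b\in\mathcal{N}(a)$ in it, does not depend on $w_a$. The only delicate point, the single value $w_a=\tau_a$, is absorbed by fixing the index tie-break among equal weights so that $a$ is included there as well.

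It remains to assemble the mechanism. Phase~I is trivially monotone (no allocation, price $0$). In Phase~II, agent $i$ is processed against the persistent online matching $M$, which does not depend on $v_i'$; by Definition~\ref{def:monotone} applied to $G[L'\cup R]$, $i$ is assigned in $M^i$ the fixed item $b$ precisely when $v_i'\geq\tau_i$, and she is actually matched iff additionally $b\notin R[M]$, a condition also independent of $v_i'$. Hence the allocation is monotone in $v_i'$ with critical value $\tau_i$ (or $+\infty$ if $b$ is taken); since $b\in\mathcal{N}(i)=R_i$, charging $\rho_i=\tau_i$ makes truthful reporting optimal. In Phase~III we use the price-sampling idea of \cite{BIKK2018}: agent $i$ faces the fixed thresholds $\rho(r)=p_r^*-\lambda$, and once $v_i'$ clears the smallest threshold among her still-free neighbours she is matched to that (fixed) item at that (fixed) price, which is again a monotone rule with a truthful payment. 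Because the mechanism is online and its outcomes are consistent, each agent's final allocation and price are determined at her arrival and depend only on the realized order and on reports made no later than hers, so dominant-strategy truthfulness composes over the whole mechanism for every arrival order. Finally, the edges actually added by Algorithm~\ref{alg:bipartite_truthful} are exactly those produced by Algorithm~\ref{alg:bipartite} when its Phase~II uses $\mathcal{A}$ to compute the offline matchings $M^i$ and its Phase~III breaks equal-weight ties towards the smallest $p_r^*-\lambda$; both are admissible in the proof of Theorem~\ref{thm:bipartite}, which only needs an optimal offline matching in Phases~I--II (any deterministic choice, used consistently, works in the Kesselheim-type analysis) and is insensitive to how equal uniform weights are resolved in Phase~III. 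Therefore the expected social welfare is $g_{\lambda,c,d}(\eta)\cdot\mathrm{OPT}$, and $\mathcal{A}$, the critical values $\tau_i$ (a search over $O(n)$ candidate weights), and the thresholds are all computable in polynomial time, which completes the proof.
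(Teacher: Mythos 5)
Your proof reaches the same conclusion via the same three-step skeleton (build a monotone offline matching subroutine, plug it into the three phases with critical-value prices, inherit the $g_{\lambda,c,d}(\eta)$ guarantee from Theorem~\ref{thm:bipartite}), but the core construction of the monotone algorithm $\mathcal{A}$ is genuinely different. The paper picks an arbitrary total order $\succ$ on $L\times R$, uses the \emph{lexicographically largest} maximum-weight bipartite matching as the canonical offline matching, and proves existence and uniqueness of the threshold $\tau_a$ and of $a$'s partner $b$ by a direct value-and-lex comparison of the two candidate matchings $M'$ and $M''$. You instead exploit the fact that, with uniform weights, the matchable subsets of $A$ form the transversal matroid: you run the max-weight greedy on $A$ to obtain a canonical basis $B^{\star}$, then output a fixed canonical matching of $G[B^{\star}\cup B]$; monotonicity of the allocation (Property~1) is the standard matroid monotonicity of greedy thresholds, and the fact that $a$ keeps the \emph{same} partner $b$ for all $w_a\ge\tau_a$ (Property~2) follows from your exchange-axiom lemma that moving an element that is already in the greedy basis earlier in the order leaves the basis unchanged. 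Both routes are sound. Your matroid route buys a cleaner conceptual picture and a simpler computability argument: $\tau_a$ is one of the $O(n)$ reported values, so no binary search is needed, whereas the paper performs a binary search over $[0,W]$ in $O(\log W)$ iterations; it also makes the invariance of $a$'s partner transparent (once $B^{\star}$ is fixed, the canonical matching is a weight-free object). The paper's lexicographic route is more elementary in the sense that it does not invoke matroid theory and works directly with matchings, and its tie-breaking device is a generically useful trick for making weighted-matching algorithms deterministic and canonical. Your remaining steps (Phase~I trivial, Phase~II critical value $\tau_i$ independent of $v_i'$ with the ``$b$ already taken'' case giving $\tau_i=+\infty$, Phase~III price-posting à la \cite{BIKK2018}, and the observation that Algorithm~\ref{alg:bipartite_truthful} is a valid instantiation of Algorithm~\ref{alg:bipartite} so Theorem~\ref{thm:bipartite} applies) match the paper's reasoning.
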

\begin{proof}
As mentioned above, it suffices to show the existence of a monotone bipartite matching algorithm. 
Fix an arbitrary total order $\succ$ over all the elements of $L\times
R$. We say that a matching $M=\{m_1,m_2\dots\}$ is lexicographically larger than
matching $M'=\{m_1',m_2',\dots\}$ with $M'\neq M$ and write
$M\succ_{\text{lex}}M'$ if and only if, there exists an integer $k \geq 0$ such that
$m_i=m_i'$ for $i=1,\dots k$ and $m_{k+1}\succ m_{k+1}$.

We claim that any existing exact algorithm for the maximum weight
bipartite matching problem can be easily converted to an exact
algorithm for the \emph{lexicographically maximum weight bipartite
  matching problem}, i.e., the problem where one seeks to find
lexicographically largest matching among the maximum weight
matchings. Indeed, consider some existing algorithm $\mathcal{A'}$ for
maximum weight bipartite matching, and assume that $\mathcal{A'}$ on
instance $G = (L\cup R, E)$ gives a matching of cost $\text{OPT}$. Now
consider the maximum $e=(\ell,r)\in E$ according to $\succ$, and
compute the maximum weight bipartite matching in
$G'=(\{L\setminus\{u\}\}\cup\{R\setminus\{v\}\},E\setminus\{e': e'\cap
u\neq\emptyset \text{ or } e'\cap v \neq \emptyset\})$. If the
resulting solution has cost $< \text{OPT} - w_e$, then we know that
$e$ is not in any maximum weight bipartite matching, and can
recursively continue with $\text{OPT}$ and $G:=(L\cup R,
E\setminus\{e\})$. If on the other hand the resulting solution has
cost $= \text{OPT}-w_e$, then $e$ is part of some maximum weight
bipartite matching (and in particular the lexicographically maximal
one), so we fix $e$ and compute the rest of the matching recursively
with $\text{OPT} := \text{OPT}- w_e$ and $G := G'$. By construction,
we will in the end obtain the lexicographically largest maximum weight
bipartite matching. We note that the lexicographically maximum
weighted bipartite matching is unique for any input
instance. Furthermore, we run algorithm $\mathcal{A'}$ at most
$O(n^2)$ times. In other words we can determine such a matching in
polynomial time for any given set of weights.

 It remains therefore to show that the value $\tau_a$, as described in
Definition~\ref{def:monotone}, exists. It is easy to see that there
exists a value $w_a'$ for which $a$ is matched in any
lexicographically maximum weighted bipartite matching (for example,
set $w_a'=\sum_{e\in E\setminus {a\cap E}} w_e + \epsilon =:W$). Therefore
it suffices to show that if $a$ is matched to some vertex
$b\in\mathcal{N}(a)$ for some weight $w_a' = \tau_a$, then it is also
matched to the same neighbor $b$ for any weight $w_a''
>\tau_a$. Assume for the sake of contradiction that this is not the
case, and let the corresponding two lexicographically maximum weighted
matchings be $M'\ni (a,b)$ and $M''$. Note that $M''$ could leave $a$
unmatched, or it could contain an edge adjacent to $a$ but not
$(a,b)$. In the following we use $\text{value}(M)$ to describe the
total weight of matching $M$, i.e., $\sum_{e\in M} w_e$.We also note that, since $M'$ is a lexicographically maximal
weighted bipartite matching for
$(w_a',w_{-a})$, we have that either
$\text{value}(M'(w_a',w_{-a}))>\text{value}(M''(w_a',w_{-a}))$ or the
two values are equal and $M'(w_a',w_{-a}) \succ_{\text{lex}}
M''(w_a',w_{-a})$.

For the second case, and since the lexicographic order does not depend on
the edge weights, we also have $M'(w_a'',w_{-a}) \succ_{\text{lex}}
M''(w_a'',w_{-a})$. Therefore, for $M''$ to be a lexicographically
maximal weighted bipartite matching for $(w_a'',w_{-a})$, it has to be
that 
$$\text{value}(M''(w_a'',w_{-a})) >
\text{value}(M'(w_a'',w_{-a})).
$$ 
However, this is not possible, since
$\text{value}(M'(w_a'',w_{-a})) = \text{value}(M'(w_a',w_{-a})) +
w_a''-w_a'$, and 
$$\text{value}(M''(w_a'',w_{-a})) \le
\text{value}(M''(w_a',w_{-a})) + w_a''-w_a',$$
which is a contradiction. For the first case:
\begin{align*}
\text{value}(M'(w_a'',w_{-a})) &= \text{value}(M'(w_a',w_{-a})) +
w_a''-w_a' \\ 
& >\text{value}(M''(w_a',w_{-a})) + w_a''-w_a' \\
& \ge \text{value}(M''(w_a'',w_{-a})),
\end{align*}
This gives a contradiction.
This concludes the proof for the first statement of the theorem. 

For the second part, first recall that we run some deterministic
algorithm for the bipartite weighted maximum matching $O(n^2)$ many
times. It remains to show that $\tau_a$ can be computed. We know that
$\tau_a$ exists and is in the range $[0,W]$. Also recall that $\tau_a$
is the smallest value for $w_a'$ such that $a$ gets matched in a
lexicographically maximal weighted bipartite matching for
$(w_a',w_{-a})$. We therefore can perform a binary search over
$[0,W]$, by trying out if in the corresponding instance $a$ is part of
the lexicographically maximum weighted bipartite matching.  Since we
assume weights to be integers, this will terminate after at most $\log
W$ many steps. By using for example the Edmonds-Karp algorithm for
$\mathcal{A'}$, our mechanism has a running time of $O(n^4m^2\log W)$
which is polynomial in the input size.
\end{proof}

\end{document}